\crefname{ALC@line}{line}{lines} % Tell cleveref how to refer to lines
\DeclareMathOperator*{\argmax}{arg\,max}
\DeclareMathOperator*{\argmin}{arg\,min}
\newtheorem{theorem}{Theorem}
\newtheorem{remark}{Remark}
\newtheorem{corollary}[theorem]{Corollary}
\newtheorem{definition}{Definition}
\newtheorem{lemma}[theorem]{Lemma}
\newtheorem{claim}{Claim}
\newcommand\numberthis{\addtocounter{equation}{1}\tag{\theequation}}
\newcommand{\alloc}{\mathcal{A}}
\newcommand{\len}{\mathrm{len}}
\newcommand{\EFt}{\mathrm{EFE}3}
\newcommand{\EFk}{\mathrm{EFE}k}
\newcommand{\EQt}{\mathrm{EQE}3}
\newcommand{\EQk}{\mathrm{EQE}k}
\newcommand{\EF}{\mathrm{EF}}
\newcommand{\EQ}{\mathrm{EQ}}
\newcommand{\EFo}{\mathrm{EF}1}
\newcommand{\EQo}{\mathrm{EQ}1}
\newcommand{\calI}{\mathcal{I}}
\newcommand{\calG}{\mathcal{G}}
\newcommand{\calA}{\mathcal{A}}
\newcommand{\calD}{\mathcal{D}}
\newcommand{\calC}{\mathcal{C}}
\newcommand{\calF}{\mathcal{F}}
\title{\bfseries Fair Division Beyond Monotone Valuations with Applications to \\ Equitable Graph Partitioning}
\author{Siddharth Barman\thanks{Indian Institute of Science. {\tt barman@iisc.ac.in}} \and Paritosh Verma\thanks{Purdue University. {\tt paritoshverma97@gmail.com}}}
\date{}
\begin{document}

\maketitle

\begin{abstract}
This paper studies fair division of divisible and indivisible items among agents whose cardinal preferences are not necessarily monotone. We establish the existence of fair divisions and develop approximation algorithms to compute them. We address two complementary valuation classes, subadditive and nonnegative, which go beyond monotone functions. Considering both the division of cake (divisible resources) and allocation of indivisible items, we obtain fairness guarantees in terms of (approximate) envy-freeness ($\mathrm{EF}$) and equability ($\mathrm{EQ}$). 
 
In the context of envy-freeness, we prove that an $\mathrm{EF}$ division of a cake always exists under cake valuations that are subadditive and globally nonnegative (i.e., the value of the entire cake for every agent is nonnegative, but parts of the cake can be burnt). This result notably complements the nonexistence of $\mathrm{EF}$ allocations for burnt cakes known for more general valuations. For envy-freeness in the indivisible-items setting, we establish the existence of $\mathrm{EFE}3$ allocations for subadditive and globally nonnegative valuations; again, such valuations can be non-monotone and can impart negative value to specific item subsets. In addition, we obtain universal existence of $\mathrm{EFE}3$ allocations under nonnegative valuations. 

We study equitability under nonnegative valuations. Here, we prove that $\mathrm{EQE}3$ allocations always exist when the agents' valuations are nonnegative (and possibly non-monotone). Also, in the indivisible-items setting, we develop an approximation algorithm that, for given nonnegative valuations, finds allocations that are equitable within additive margins.   

Our results have combinatorial implications. For instance, the developed results imply the following novel results: (i) The universal existence of proximately-dense subgraphs: Given any graph $G=(V, E)$ and integer $k$ (at most $|V|$), there always exists a partition $V_1, V_2, \ldots, V_k$ of the vertex set such that the edge densities within the parts, $V_i$, are additively within four of each other, and (ii) The universal existence of equitable graph cuts: Given any graph $G=(V,E)$ and integer $k$ (at most $|V|$), there always exists a partition $V_1, V_2, \ldots, V_k \neq \emptyset$ of the vertex set such that the cut function values of the parts, $V_i$, are additively within $5 \Delta +1$ of each other; here, $\Delta$ is the maximum degree of $G$. Further, such partitions can be computed efficiently. In addition to being interesting in and of itself, this result highlights the reach of the developed guarantees beyond fair division and even algorithmic game theory. 
\end{abstract}
%, and obtain bounds on the number of such divisions
\thispagestyle{empty}
\newpage

\tableofcontents
\thispagestyle{empty}
\newpage

\setcounter{page}{1}
\section{Introduction}
Fairly dividing resources and tasks among agents with individual preferences is a ubiquitous problem extensively studied in mathematical economics and computer science. A vast majority of this literature addresses models wherein the agents' preferences are monotone. That is, most prior works here assume that a clear distinction exists as to whether the underlying items are goods, with monotonically increasing valuations, or chores, with monotone disutilities. 

However, the monotonicity assumption is inapplicable in various allocation domains: \emph{``Anything in excess is a poison''} (Theodore Levitt). Indeed, superfluous resources can drive down valuations in settings where free disposal is infeasible.\footnote{See, e.g., {\tt https://www.bbc.com/news/business-52350082}.} This negative-marginals property holds even in some standard fair division contexts, such as course allocation and job assignments, where number of courses, or workload, beyond a threshold can be detrimental. Another example that highlights the relevance of non-monotone valuations comes from partnership dissolutions, wherein  assets along with liabilities need to be fairly divided. 
The fair allocation of mixed manna \cite{bogomolnaia2017competitive,aziz2018fair} (i.e., the joint allocation of positively-valued goods and negatively-valued chores) also entails non-monotone valuations. 
 
Even when considering mathematical models, non-monotonicity is well-motivated and arises naturally. For instance, one can start with a monotone increasing valuation function, $v$, and a monotone, additive cost function, $c$. Still, the induced quasilinear utility---obtained by subtracting $c$ from $v$---can be non-monotone. 

Furthermore, many of the existential and algorithmic guarantees developed for fair division are sensitive to whether the agents' valuations are positive, negative, or both. A substantiating example here is the dichotomy known for fair cake division. A cake represents a heterogenous, divisible resource, and the seminal work of Su~\cite{Su1999rental} shows that, under mild conditions and nonnegative valuations, an envy-free (fair) division always exists. By contrast, when the valuations can be negative (parts of the cake are burnt) and may depend on the entire division, envy-free allocations exist iff the number of agents is a prime power~\cite{avvakumov2021envy}. 

Another technical challenge that marks a separation between monotone increasing and monotone decreasing valuations comes from the discrete fair division context: For indivisible goods (i.e., non-negatively valued items) and under additive valuations, there always exists an allocation that is both envy-free up to one good ($\EFo$) and  Pareto efficient (PO) \cite{caragiannis2019unreasonable}. However, the existence of an $\EFo$ and PO allocation for indivisible chores---i.e., items with additive and negative valuations---stands as an intriguing open question. These examples highlight the potential and significance of studying non-monotone valuations. Additionally, essentially all the known techniques used in the monotone context (envy-cycle elimination, market equilibrium, etc.) fail for non-monotone valuations, making it further challenging. 
 
Motivated by such considerations, the current work aims to extend existential and algorithmic fair division guarantees beyond monotone valuations. We address two complementary valuation classes, subadditive and nonnegative. Considering both the division of cake and allocation of indivisible items, we obtain fairness guarantees in terms of (approximate) envy-freeness and equability. We describe these fair division settings next.

As mentioned, the cake represents a divisible resource and is modeled as the interval $[0,1]$. Fair cake division has been studied for more than half a century~\cite{steinhaus1948problem, dubins1961cut}. The cardinal preferences of the $n$ agents over the cake are specified via valuations $f_1, \ldots, f_n$; here, $f_i(I) \in \mathbb{R}$ denotes the value that agent $i$ has for any interval $I \subseteq [0,1]$. The paper studies fair cake division under a typical requirement that every agent receives a connected piece (i.e., an interval) of the cake. Hence, the objective here is to partition the cake [$0, 1]$ into exactly $n$ pairwise disjoint intervals and assign them among the $n$ agents. Such cake partitions as referred to as contiguous cake divisions $\calI = (I_1, \ldots, I_n)$, wherein $I_i$ denotes the interval assigned to agent $i$.   

We also address fair division of indivisible items. This thread of research has received significant attention in recent years \cite{amanatidis2023fair}. The valuations in this context are set functions $v_1, \ldots, v_n$. Specifically, for any subset $S$ of the $m$ indivisible items, $v_i(S)$ denotes the value that agent $i$ has for receiving bundle $S$. To distinguish that $v_i$s are not necessarily monotone, we will use the term (indivisible) items and not goods or chores. In the indivisible-items setting, the goal is to partition the set of items $[m]$ into pairwise disjoint subsets and assign them among the agents. The term allocation $\calA = (A_1, \ldots, A_n)$ denotes such an $n$-partition of $[m]$, in which $A_i$ is the subset of items assigned to agent $i$.  \\

The work obtains fairness guarantees in terms of two central notions of fairness, envy-freeness \cite{foley1966resource, stromquist1980cut} and equitability \cite{dubins1961cut}. 

\vspace*{5pt}

\noindent
{\bf Envy-Freeness.} A division is said to be envy-free ($\EF$) if each agent prefers the bundle assigned to her over that of anyone else. Hence, a contiguous cake division $\calI=(I_1, \ldots, I_n)$ upholds the envy-freeness criterion if $f_i(I_i) \geq f_i(I_j)$ for all agents $i$ and $j$. As mentioned, under mild conditions and for nonnegative valuations, in particular, an envy-free, contiguous cake division always exists~\cite{Su1999rental}. However, this positive result does not extend to negative and non-monotone valuations and does not admit an algorithmic counterpart either. When the agents' valuations $f_i$ are negative---the cake is burnt---an envy-free division is not guaranteed to exist \cite{avvakumov2021envy}. Further, a contiguous and exact envy-free division cannot be computed in finite time, even for nonnegative valuations that are additive \cite{stromquist2008envy}.\footnote{While one obtains a finite-time algorithm by relinquishing the contiguity requirement (i.e., for the case in which each agent can receive a finite union of intervals), the complexity of the known algorithm is hyper-exponential \cite{aziz2016discrete}.} These negative results substantiate the study of nonnegative/non-monotone valuations and approximation algorithms for cake division.  

In the context of indivisible items, envy-freeness is not a feasible criterion; allocating a single indivisible item among agents that positively value it will always result in envy. Hence, relaxations of envy-freeness have been extensively studied in discrete fair division. Of particular note here is the notion of envy-freeness up to one good ($\EFo$) \cite{lipton2004approximately,budish2011combinatorial}. Under this criterion, an allocation $\calA=(A_1, \ldots, A_n)$ of the indivisible items among the agents is deemed to be fair if, for all agents $i,j$, any existing envy can be mitigated by notionally shifting one item from $A_i$ or $A_j$. Specifically, when all the items are goods (i.e., have non-negative marginal values), we have the following version of the definition: An allocation $\calA=(A_1, \ldots, A_n)$ is said to be $\EFo$ if, for all agents $i,j$ there exists a good $g \in A_j$ such that $v_i(A_i) \geq v_i(A_j \setminus \{g\})$. In the case of all chores (i.e., when all the items have non-positive marginal values), an allocation $\calA=(A_1, \ldots, A_n)$ is said to be $\EFo$ if, for all agents $i,j$ there exists a chore $c \in A_i$ such that $v_i(A_i \setminus \{c\}) \geq v_i(A_j)$. $\EFo$ allocations are known to exist in the two non-exhaustive settings wherein either all items are goods \cite{lipton2004approximately}, or all are chores \cite{bhaskar2021approximate}. $\EFo$ allocation also exist in mixed setting of goods and chores (mixed manna), though this result holds specifically for additive valuations \cite{aziz2022fair}.  

The current work goes beyond monotone, increasing or decreasing, valuations -- in this paper, the marginal value $v(S \cup \{j \}) - v(S)$ of an item $j$ can be positive or negative, depending on the underlying subset $S$. Here, we obtain results for a further relaxation, namely envy-freeness up to three edits ($\EFt$). We say that an allocation $\alloc = (A_1, \ldots, A_n)$ is $\EFt$ if the envy that any agent $i$ has towards any other agent $j$ can be resolved by shifting at most three items in or out of $A_i$ and $A_j$ combined (see Definition \ref{definition:EFk}). 

Notably, for the valuation classes considered in this paper, nontrivial fairness guarantees (e.g., $\EFk$ for any constant $k$) were not known prior to the current work. In fact, on the non-monotone front in discrete fair division, fairness guarantees were known primarily for additive valuations \cite{aziz2018fair}. Extending this important frontier, this work establishes that relevant fairness guarantees are feasible even under valuations that are non-monotone and not necessarily additive. 

\vspace*{5pt}

\noindent {\bf Equitability.}  Equitability requires that all the agents value their individual bundles equally~\cite{dubins1961cut}. For a cake division $\calI = (I_1, \ldots, I_n)$ this corresponds to the equality $f_i(I_i) = f_j(I_j)$, for all agents $i$ and $j$. Equitable cake divisions are known to exist for continuous and nonnegative valuations~\cite{aumann2015efficiency, dobovs2013existence, cheze2017existence, avvakumov2023equipartition}. This existential guarantee extends to other related valuation classes \cite{bhaskar2025connected}. On the algorithmic side, \cite{cechlarova2012computability} provides an efficient algorithm for computing approximately equitable cake divisions under additive valuations. The current work goes past additive valuations and develops an approximation algorithm for nonnegative valuations.  

In the indivisible-items setting, an equitable allocation might not exist, motivating the need for considering relaxations. Towards this, equitability up to one good ($\EQo$) provides a compelling notion \cite{gourves2014near,freeman2019equitable}. $\EQo$ requires that, in the considered allocation $\calA =(A_1, \ldots, A_n)$, existing inequity, between any pair of agents $i,j$, can be reversed by (notionally) shifting some item from $A_i$ or $A_j$. $\EQo$ allocations are known to exist under monotone increasing valuations \cite{barman2024nearly}.\footnote{The result in \cite{barman2024nearly} holds for a stronger criterion, $\EQ$X.} We move past monotone valuations and obtain positive results for equitability up to three edits ($\EQt$). In particular, an allocation $\calA =(A_1, \ldots, A_n)$ is said to be $\EQt$ if, existing inequity, between any pair of agents $i,j$, can be reversed by (notionally) shifting three items in or out of $A_i$ and $A_j$ combined; see Section \ref{section:prelims} for formal definitions. \\

As mentioned, we address two complementary valuation classes, subadditive and nonnegative. Also, we will, throughout, focus solely on valuations that value the empty bundle at zero, i.e., to normalized valuations. In the cake context this corresponds to $f_i(\emptyset) = 0$, for each cake valuation. In the indivisible-items settings, we have $v_i(\emptyset) = 0$, for all the valuations (set functions).  

\vspace*{5pt}

\noindent {\bf Nonnegative Valuations.} For the cake division setting, this class includes valuations $f$ that satisfy $f(I) \geq 0$, for all intervals $I$. Analogously, in the indivisible-items setting, a valuation $v$ is said to be nonnegative if $v(S) \geq 0$, for all item subsets $S$. Indeed, nonnegative valuations are not confined to be monotone. 

\vspace*{5pt}

\noindent {\bf Subadditive Valuations.} These valuations provide a generalization of additivity. Recall that a cake valuation $g$ is $\sigma$-additive if for any two disjoint intervals $I, J \subseteq [0,1]$, we have $g(I \cup J) = g(I) + g(J)$. Extending this definition to incorporate complement-freeness, one obtains $\sigma$-subadditivity: a cake valuation $f$ is said to be $\sigma$-subadditive if  $f(I \cup J) \leq f(I) + f(J)$ for any two {\it disjoint} intervals $I, J \subseteq [0,1]$. Note that this definition (as in the case of additive valuations) considers disjoint intervals. Also, a $\sigma$-subadditive cake valuation $f$ can be negative and non-monotone. 

In the discrete setting, we say that a valuation $v$ is $\sigma$-subadditive if, for any two {\it disjoint} subsets $S$ and $T$, the value of their union is at most the sum of their values, $v(S \cup T) \leq v(S) + v(T)$. Note that, here, we only consider disjoint subsets. Since we mandate the inequality for disjoint subsets and not for all subset pairs, we obtain a strengthening of the form of subadditivity studied in, say, welfare-maximization contexts \cite{nisan2009algorithmic}.\footnote{In fact, requiring the complement-freeness to hold for all subset pairs forces a valuation $h$ to be nonnegative: if we have $h(S \cup T) \leq h(S) + h(T)$ for {\it all} subsets $S$ and $T$, then, in particular for $X \subset Y$, we obtain $h(X \cup Y) = h(Y) \leq h(X) + h(Y)$; this implies $h(X) \geq 0$.} The generalized form of subadditivity considered in this work (i.e., $\sigma$-subadditivity) admits valuations that can be negative and non-monotone; negative valuations imply that the items can have mixed manna. 

We develop positive results for $\sigma$-subadditive valuations that assign a nonnegative value to the grand bundle, i.e., subadditive valuations that are globally nonnegative. For cake valuations $f_i$, this corresponds to $f_i([0,1]) \geq 0$, and for the indivisible-items setting we have $v_i([m]) \geq 0$.  \\

\noindent {\bf Our Results.} We next summarize the contributions of this work; see also Figure \ref{fig:organize}. \\

\begin{figure}[h]
\begin{center}
\includegraphics[scale=0.7]{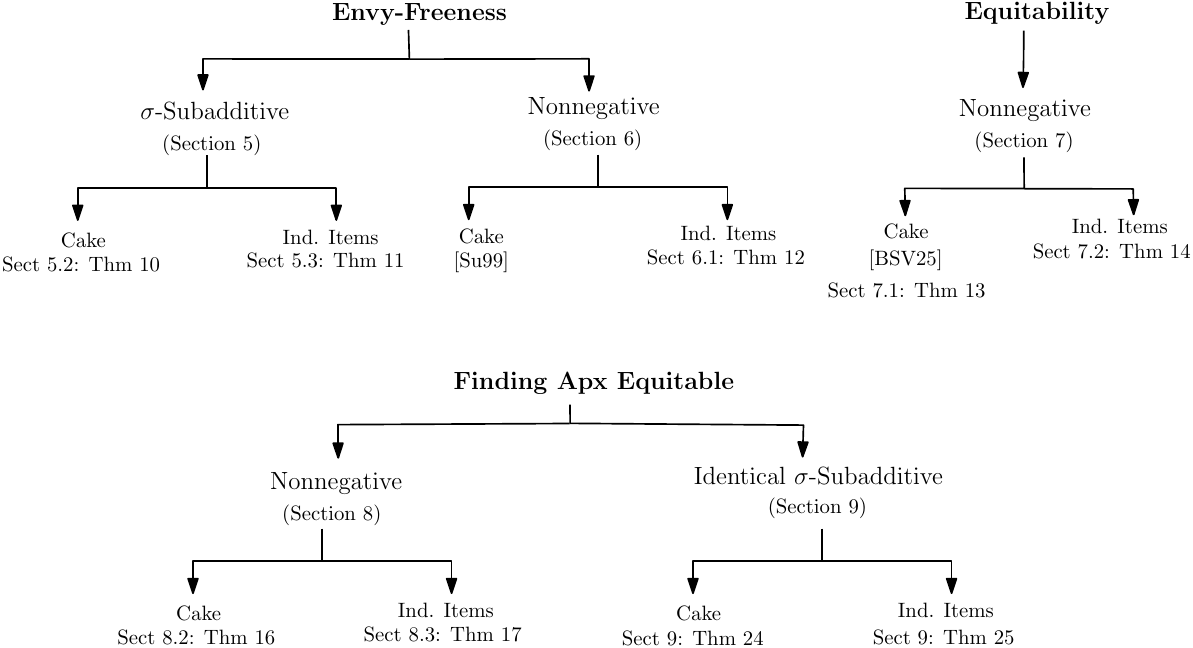}
	\end{center}
	\caption{The figure illustrates the fair division settings considered in this work and the associated theorems.}
\label{fig:organize}
\vspace*{-10pt}
\end{figure}

\noindent {\bf Existence of Envy-Free Divisions.} 
First, we consider subadditive valuations. 
\begin{itemize}[leftmargin = 3mm]
\item We establish universal existence of envy-free cake divisions under cake valuations that are $\sigma$-subadditive, globally nonnegative, and continuous (\Cref{theorem:burnt-cake-envy-free}). This result cover partially-burnt cakes and thus contrasts with the nonexistence of envy-free divisions for burnt cakes~\cite{avvakumov2021envy}.
\item  For indivisible items, we prove that, under $\sigma$-subadditive and globally nonnegative valuations, $\EFt$ allocations always exist (\Cref{theorem:ef-three-subadditive}). Notably, this result is proved by establishing that the multilinear extension of arbitrary subadditive functions also exhibits subadditivity (\Cref{lemma:subadditive-mle}); to the best of our knowledge, this connection between subadditivity and multilinear extensions is novel and might be of independent interest.
\end{itemize}

\noindent Then, for envy-freeness under nonnegative valuations, we establish the following result: $\EFt$ allocations of indivisible items always exist under nonnegative valuations (\Cref{theorem:ef-three-nonnegative}). This existential guarantee is obtained by connecting with the classic envy-free cake division result of Su~\cite{Su1999rental}. \\

\noindent {\bf Existence of Equitable Divisions.}
In the context of equitability, we obtain results for nonnegative valuations. In particular, we prove the following existential guarantees.
\begin{itemize}[leftmargin = 3mm]
\item For cake division, it is known that, under nonnegative and continuous valuations, there always exists an equitable cake division that additionally upholds an ordering property \cite{bhaskar2025connected}. We provide a complementary proof of this result via a fixed-point argument (\Cref{theorem:eq-cake-nn}). 
\item We further prove that, discrete fair division instances with nonnegative valuations always admit an $\EQt$ allocation (\Cref{theorem:eq-three-nonnegative}).  \\
\end{itemize}

\noindent {\bf Finding Approximately Equitable Divisions.} Finally, we develop algorithms for finding allocations that are equitable within additive margins. These algorithms leverage the above-mentioned ordering property.  
The algorithms for nonnegative valuations are as follows. 
\begin{itemize}[leftmargin = 3mm]
\item We prove that, given nonnegative and $\gamma$-Lipschitz continuous cake valuations along with parameter $\varepsilon >0$, we can find a contiguous cake division in which the inequity is additively bounded by $\varepsilon$. The time complexity of the algorithm is polynomial in $n$, $\gamma$, and $1/\varepsilon$ (\Cref{theorem:fptas-for-equitable-cake-div}). To the best of our knowledge this is the first algorithmic result for computing nearly equitable divisions for nonadditive valuations.
\item For indivisible items with nonnegative valuations, we develop an algorithm that finds an allocation in which the inequity is additively bounded by $5 \Lambda + 1$; here, $\Lambda$ is a the maximum marginal value of the valuations (\Cref{theorem:EQ3ApxComputation}). The algorithm also ensures that the returned bundles are nonempty.\footnote{This property has important implications for our graph-theoretic applications.} Our algorithm runs in time polynomial in $n$, $m$, and $\Lambda$. 
\end{itemize}

We also obtain similar algorithmic results for approximate equitability among agents whose valuations are subadditive and identical (Theorems \ref{theorem:cake-alg-id-subadd} and \ref{theorem:EQ3ApxComputation-subadditive}). \\

%\noindent {\bf Applications.} 
Our results have multiple interesting implications which includes novel results in graph partitioning; see Section \ref{section:applications} below. While being interesting in and of themselves, they highlight the reach of the developed guarantees beyond fair division. %For instance, we obtain universal existence of proximately equitable graph cuts: Given any graph $G=(V,E)$ and integer $k$ (at most $|V|$), there always exists a partition $V_1, V_2, \ldots, V_k \neq \emptyset$ of the vertex set such that the cut function values of the parts, $V_i$, are additively within $5 \Delta +1$ of each other; here, $\Delta$ is the maximum degree of $G$. Further, such a partition can be computed in polynomial time. 
\section{Applications Beyond Fair Division}\label{section:applications}
This section presents instantiations of our fairness guarantees. %Note that the algorithms underlying Theorems \ref{theorem:EQ3ApxComputation} and \ref{theorem:EQ3ApxComputation-subadditive} are guaranteed to find approximately equitable allocations and, hence, these theorems imply the universal existence of such allocations. 
The two applications detailed next address equitability with respect to subgraph densities and cut functions, respectively. 

To the best of our knowledge, these graph-theoretic results are novel. We also note that applying fair division guarantees to classic graph-theoretic constructs (cuts and densities) is an interesting stride in itself. This aspect is further substantiated by the fact that we obtain these combinatorial results without using typical techniques,   such as the probabilistic method and local search. In particular, it is unclear if the probabilistic method (because of edge dependencies) can lead to bounds as sharp as the ones obtained in Theorems \ref{application:density-part} and \ref{application:graph-cut-single} (stated below). Further, one can construct instances in which the local search method can get stuck at subgraphs with contrasting densities. 

Indeed, there is a vast body of work in computer science that considers subgraph densities and graph cuts with an optimization objective, e.g., the densest subgraph problem \cite{lanciano2024survey}, the planted clique problem \cite{alon1998finding,feige2000finding}, minimum $k$-cut \cite{vazirani2001approximation}, and max $k$-cut \cite{frieze1997improved}. Our graph-theoretic results complement this extensive literature: Instead of addressing optimization problems over subgraph densities and cuts, we establish novel equitability bounds for these quantities. One can also juxtapose Szemer\'{e}di's regularity lemma \cite{szemeredi1978regular, frieze1999simple} with Theorem \ref{application:density-part}. Both the results obtain existential guarantees in the context of subgraph densities. However, the lemma addresses edge densities between the parts and requires them to be random-like. Hence, the result here is technically different from the regularity lemma. \\

\noindent {\bf Proximately-Dense Subgraphs.} We utilize one of our equitability results (see Remark \ref{remark:nonempty-condition} in Section \ref{section:eq-three-nonnegative}) to show that a graph can always be partitioned into $k$ subgraphs with approximately equal edge densities. In particular, for any given graph $G = (V,E)$ and vertex subset $S \subseteq V$, write density function $\rho(S)$ to denote half of the average degree within the subgraph induced by $S$, i.e., $\rho(S) \coloneqq \frac{1}{|S|} \  \big| \left\{ (u,v) \in E \mid u, v \in S \right\} \big|$. The density function $\rho: 2^V \mapsto \mathbb{R}_+$ is nonnegative.\footnote{However, the function $\rho$ is not subadditive.} We next observe that the marginal value of $\rho$ is at most $1$. Towards this, for any subset $S \subset V$, write $E(S)$ to denote the subset of edges both whose endpoints are in $S$, i.e., $E(S) \coloneqq \left\{ (u,v) \in E \mid u, v \in S \right\}$. Also, for any vertex $u \in V$, let $N(u) \subset V$ denote the neighbors of $u$. For any subset $S \subset V$ and vertex $u \in V \setminus S$, the marginal of $\rho$ is bounded as follows 
\begin{align*}
| \rho(S + u ) - \rho(S)| & = \left| \frac{|E(S + u )|}{|S + u|} - \frac{|E(S)|}{|S|} \right| \\ 
& = \left| \frac{|E(S)| \ + \ |N(u) \cap S|}{|S| + 1} - \frac{|E(S)|}{|S|} \right| \\
%& = \left| \frac{|E(S)|}{|S| + 1}  - \frac{|E(S)|}{|S|} + \frac{ |N(u) \cap S|}{|S| + 1} \right| \\
& = \left| \frac{|N(u) \cap S|}{|S| + 1}  -  \frac{|E(S)|}{|S| \left( |S| + 1 \right)}  \right| \\ 
& \leq \max \left\{ \frac{|N(u) \cap S|}{|S| + 1}, \ \frac{|E(S)|}{|S| \left( |S| + 1 \right)} \right\} \\  
& \leq \max \left\{ \frac{|S|}{|S| + 1}, \ \frac{|S|}{2 \left( |S| + 1 \right)} \right\} \tag{$|N(u) \cap S| \leq |S|$ and $|E(S)| \leq |S|^2/2$} \\   
& \leq \max \left\{ 1, \ \frac{1}{2} \right\} \\
& = 1.
\end{align*}
Hence, for any given integer $k \in \mathbb{Z}_+$, applying our equitability result for nonnegative valuations (see Theorem \ref{theorem:EQ3ApxComputation} and Remark \ref{remark:nonempty-condition}), with $n = k$, $m = |V|$, $v_i = \rho$, and $\Lambda = 1$, gives us the following theorem.   
\begin{theorem}\label{application:density-part}
Given any graph $G=(V,E)$ and any integer $k \leq |V|$, there always exists a partition $V_1, V_2, \ldots, V_k \neq \emptyset$ of the vertex set such that $| \rho(V_i) - \rho(V_j)| \leq 4$, for every $i,j \in [k]$. Further, such a $k$-partition can be computed in polynomial time.
\end{theorem} 

\noindent {\bf Equitable Graph Cuts.} Recall that for any (undirected) graph $G=(V,E)$, the cut function $\delta: 2^V \mapsto \mathbb{Z}_+$, counts the number of edges crossing vertex subsets $S \subseteq V$
$$\delta(S) \coloneqq \big| \left\{ (u,v) \in E \mid u \in S \text{ and } v \in V \setminus S \right\} \big|.$$ The cut function is nonnegative (and submodular). We can, hence, invoke out equitability result for indivisible items (Theorem \ref{theorem:EQ3ApxComputation}) and instantiate the nonnegative valuations $v_i = \delta$. Note that the maximum marginal value of the cut function is at most, $\Delta$, the maximum degree of the given graph $G$. Hence, for the valuations $v_i = \delta$, we have $\Lambda  = \Delta$. For any given integer $k \in \mathbb{Z}_+$, we can set the number of agents $n = k$ along with the number of items $m = |V|$ (i.e., aim to identify a $k$-partition of the vertex set) and apply the approximate equitability guarantee (Theorem \ref{theorem:EQ3ApxComputation}) to obtain the following result. 

\begin{theorem}\label{application:graph-cut-single}
Given any graph $G=(V,E)$ and integer $k \leq |V|$, there always exists a partition $V_1, V_2, \ldots, V_k \neq \emptyset$ of the vertex set such that $| \delta(V_i) - \delta(V_j)| \leq 5 \Delta +1$, for every $i,j \in [k]$; here, $\Delta$ is the maximum degree of $G$. Further, such a $k$-partition can be computed in polynomial time.
\end{theorem}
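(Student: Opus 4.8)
The plan is to instantiate Theorem~\ref{theorem:EQ3ApxComputation} directly, using the cut function as the common valuation for all agents. Concretely, given $G=(V,E)$ and $k \le |V|$, I set up a discrete fair division instance with $n = k$ agents and $m = |V|$ indivisible items, where the items are the vertices and every agent $i \in [k]$ has valuation $v_i = \delta$, the cut function of $G$. The first step is to verify that this is a legitimate instance for the theorem: $\delta(\emptyset) = 0$ so the valuations are normalized, and $\delta(S) \ge 0$ for all $S \subseteq V$ so they are nonnegative (this is immediate from the definition, since $\delta(S)$ counts a set of edges). Hence the hypotheses of Theorem~\ref{theorem:EQ3ApxComputation} are met.

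The second step is to pin down the parameter $\Lambda$, the maximum marginal value, for $v_i = \delta$. For any vertex $u$ and any subset $S \subseteq V \setminus \{u\}$, moving $u$ into $S$ changes the cut by at most the number of edges incident to $u$: the marginal $\delta(S \cup \{u\}) - \delta(S)$ equals (number of neighbors of $u$ in $V \setminus (S \cup \{u\})$) minus (number of neighbors of $u$ in $S$), whose absolute value is at most $\deg(u) \le \Delta$. So $\Lambda \le \Delta$. Plugging $\Lambda = \Delta$ into the guarantee of Theorem~\ref{theorem:EQ3ApxComputation}, the algorithm returns an allocation $(A_1, \dots, A_k)$ with all $A_i \ne \emptyset$ and $|v_i(A_i) - v_j(A_j)| \le 5\Lambda + 1 \le 5\Delta + 1$ for all $i,j$. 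Reading the bundles $A_i$ as vertex subsets $V_i$ and using $v_i = v_j = \delta$, this says exactly $|\delta(V_i) - \delta(V_j)| \le 5\Delta + 1$ for all $i,j \in [k]$, and $(V_1, \dots, V_k)$ is a partition of $V$ into nonempty parts.

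The final step is the running time: Theorem~\ref{theorem:EQ3ApxComputation} runs in time polynomial in $n$, $m$, and $\Lambda$, i.e., polynomial in $k \le |V|$, $|V|$, and $\Delta \le |V|$, which is polynomial in the size of $G$; evaluating the oracle $\delta(S)$ is clearly polynomial-time as well. This yields both the existence and the efficient-computability claims.

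There is essentially no obstacle here beyond bookkeeping — the theorem does all the work. The one point that needs a sentence of care is the nonemptiness of the parts: the statement requires $V_i \ne \emptyset$ for every $i$, and this is exactly the extra property that Theorem~\ref{theorem:EQ3ApxComputation} guarantees (as flagged in the footnote motivating it), so I should explicitly invoke that clause rather than a plain existence-of-$\mathrm{EQ}3$-allocation argument. Also worth noting: the bound $5\Delta+1$ is stated with $\Lambda = \Delta$, but since we only showed $\Lambda \le \Delta$, and the bound $5\Lambda+1$ is monotone in $\Lambda$, the inequality $5\Lambda + 1 \le 5\Delta + 1$ still gives the claimed conclusion.
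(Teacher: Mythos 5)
Your proposal is correct and follows exactly the route the paper takes: instantiate Theorem~\ref{theorem:EQ3ApxComputation} with $n=k$, $m=|V|$, and $v_i=\delta$ for all agents, bound the maximum marginal value by $\Lambda \le \Delta$, and read off the nonempty-bundle and $5\Lambda+1$ guarantees. Your extra care in verifying normalization, nonnegativity, and the marginal bound is sound bookkeeping but introduces nothing beyond the paper's argument.
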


In the additive approximation factor obtained above, a dependence on the maximum degree, $\Delta$, is unavoidable. Consider, for example, a star graph and $k$ equal to the number of vertices. \\

\vspace*{5pt} 

\noindent {\bf Equitability under Quasilinear Utilities.} Here, we highlight the applicability our result for identical $\sigma$-subadditive valuations (Theorem \ref{theorem:EQ3ApxComputation-subadditive}). Consider a set of indivisible resources $N$ with a subadditive (or submodular) reward function $r : 2^N \mapsto \mathbb{R}_+$. In particular, $r(S) \geq 0 $ denotes the reward obtained via subset $S \subseteq N$. We also have a linear cost function $c: N \mapsto \mathbb{R}_+$ that specifies the cost of each resource in $N$. In this setup, for each subset $S \subseteq N$, the induced quasilinear utility, $u(S)$, is defined as $u(S) \coloneqq r(S) - c(S) = r(S) - \sum_{s \in S} c(s)$. Note that the utility $u(S)$ can be negative for certain subsets $S$ and, hence, might not be subadditive in the standard sense. The function $u$, however, is $\sigma$-subadditive; in particular, for any two disjoint subsets $S, T \subseteq N$ we have $u(S \cup T) = r(S \cup T) - c(S \cup T) = r(S \cup T) - c(S) - c(T) \leq r(S) + r(T) - c(S) - c(T) = u(S) + u(T)$. Hence, we obtain the following result as a corollary of our existential and algorithmic result for approximately equitable allocations under identical $\sigma$-subadditive valuations (\Cref{theorem:EQ3ApxComputation-subadditive}). 

\begin{theorem}\label{application:facility-partitioning}
Let $N$ be a set of indivisible resources and $u: 2^N \mapsto \mathbb{R}$ be an associated utility function that is normalized, $\sigma$-subadditive, and globally nonnegative ($u(N) \geq 0$). Then, for any given integer $k \leq |N|$, there always exists a partition $N_1, \ldots, N_k \neq \emptyset$ of the resources with approximately equal utilities: $\left| u(N_i) - u(N_j) \right| \leq 5 \Lambda + 1$, for all $i, j \in [k]$; here $\Lambda$ is the maximum marginal value of $u$. Further, such a $k$-partition can be computed in time polynomial in $|N|$ and $\Lambda$. 
 \end{theorem}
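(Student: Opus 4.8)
The plan is to obtain this statement as an immediate instantiation of the existential-and-algorithmic guarantee for approximately equitable allocations under identical $\sigma$-subadditive valuations (\Cref{theorem:EQ3ApxComputation-subadditive}). First I would set up the reduction: given the resource set $N$ and the integer $k \leq |N|$, form a discrete fair division instance with $n = k$ agents, an item set $[m]$ placed in bijection with $N$ (so $m = |N|$), and identical valuations $v_1 = v_2 = \cdots = v_k = u$. The desired resource partition $N_1, \ldots, N_k$ will then be read off directly from the bundles of the allocation returned by \Cref{theorem:EQ3ApxComputation-subadditive}.

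Next I would verify that $u$ satisfies the hypotheses of \Cref{theorem:EQ3ApxComputation-subadditive}. Normalization ($u(\emptyset) = 0$) and global nonnegativity ($u(N) \geq 0$) are assumed in the statement. The remaining property, $\sigma$-subadditivity over disjoint subsets, is exactly the inequality $u(S \cup T) \le u(S) + u(T)$ for disjoint $S, T \subseteq N$, which holds because $u = r - c$ with $r$ subadditive (or submodular) and $c$ linear; this is precisely the computation carried out in the paragraph preceding the theorem statement. I would also record that the maximum marginal value of the valuation in the constructed instance coincides with $\Lambda$, the maximum marginal of $u$, so the bound $5\Lambda + 1$ is inherited verbatim, and that the condition $k \leq |N| = m$ is exactly what is needed for the nonempty-bundles guarantee of \Cref{theorem:EQ3ApxComputation-subadditive}.

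Then I would invoke \Cref{theorem:EQ3ApxComputation-subadditive} with the identical valuations above to obtain, in time polynomial in $m = |N|$ and $\Lambda$, an allocation $\calA = (A_1, \ldots, A_k)$ with $A_i \neq \emptyset$ for every $i \in [k]$ and $|u(A_i) - u(A_j)| \leq 5\Lambda + 1$ for all $i, j \in [k]$. Setting $N_i \coloneqq A_i$ under the bijection yields the claimed partition, and the stated polynomial running time follows directly.

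Since the argument is a clean instantiation, there is no substantive obstacle internal to this proof; the only points requiring care are (i) confirming that the discrete notion of subadditivity demanded by \Cref{theorem:EQ3ApxComputation-subadditive} is the disjoint-subsets version (which is what $u$ satisfies, whereas $u$ need not be subadditive in the all-pairs sense because it can take negative values), and (ii) ensuring the nonempty-bundles property of that theorem is available, which it is under $k \leq |N|$. All of the genuine work is deferred to the proof of \Cref{theorem:EQ3ApxComputation-subadditive}.
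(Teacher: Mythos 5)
Your proposal is correct and matches the paper's argument exactly: the paper likewise obtains this theorem as a direct corollary of \Cref{theorem:EQ3ApxComputation-subadditive}, instantiated with $n = k$, $m = |N|$, and the identical valuation $v = u$ (with $\sigma$-subadditivity of $u$ taken as a hypothesis of the theorem, the quasilinear decomposition $u = r - c$ serving only as motivation). The hypothesis checks you list---normalization, global nonnegativity, the disjoint-subsets form of subadditivity, the matching $\Lambda$, and $k \leq |N|$ for nonempty bundles---are precisely what the paper relies on.
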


\section{Notation and Preliminaries}\label{section:prelims}

\paragraph{Indivisible Item Allocation.} We consider fair division of $m \in \mathbb{Z}_+$ indivisible items among a set of $n \in \mathbb{Z}_+$ agents. The set of items and agents will be denote by $[m] \coloneqq \{1, \ldots, m\}$ and $[n] \coloneqq \{1, \ldots, n\}$, respectively. An allocation $\alloc = (A_1, A_2, \ldots, A_n)$ of items among the $n$ agents is a {partition} of $[m]$ into $n$ pairwise disjoint subsets $A_i \subseteq [m]$. Here, $A_i$ represents the subset of items assigned to agent $i \in [n]$ and will be referred to as agent $i$'s bundle. Since $\alloc = (A_1, \ldots, A_n)$ is a partition of $[m]$, it satisfies $\cup_{i=1}^n A_i = [m]$ and $A_i \cap A_j = \emptyset$, for each $i \neq j$.

For any subset $B \subseteq [m]$ and item $\ell \in [m]$, we will use the shorthand $B + \ell$ to denote the set $B \cup \{ \ell \}$. Also, write $B + \ell + r$ to denote $B \cup \{\ell, r\}$. 

\paragraph{Agents' Valuations.} The cardinal preferences of the agents $i \in [n]$, over the indivisible items, are expressed via valuation functions $v_i : 2^{[m]} \rightarrow \mathbb{R}$. Specifically, $v_i(S)\in \mathbb{R}$ denotes the value that agent $i$ has for any subset of items $S \subseteq [m]$. Note that the agents' valuations are set functions and, hence, can require exponential (in $m$) space to be explicitly specified. To avoid such representation issues, and as is standard in the literature, we will assume that we have {value oracle} access to the $v_i$s, i.e., for each agent $i \in [n]$, we have an oracle that returns $v_i(S) \in \mathbb{R}$ in unit time, when queried with any subset $S \subseteq [m]$.

A discrete fair division instance will be denoted by the tuple $\langle [n], [m], \{ v_i \}_{i=1}^n \rangle$. We will assume, throughout, that the valuations of all the agents $i \in [n]$ are {normalized}, i.e., $v_i(\emptyset) = 0$, and globally nonnegative, $v_i([m]) \geq 0$. 

Notably, our results address fair division instances in which the agents' valuations are not necessarily monotone. Recall that a valuation $v$ is said to be monotone increasing if, for each item $j \in [m]$ and subset $S \subseteq [m]$, including $j$ in $S$ does not decrease the value, $v(S \cup \{ j \}) \geq v(S)$. Analogously, set function $c$ is said to be monotone decreasing, if $c(T \cup \{ x \}) \leq c(T)$, for each item $x \in [m]$ and subset $T \subseteq [m]$.

Going beyond monotone valuations, this work addresses the following two valuation classes:

\noindent
(i) {\bf Nonnegative valuations:} $v_i(S) \geq 0$ for each subset $S \subseteq [m]$ and all agents $i \in [n]$. 

\noindent 
(ii) {\bf $\sigma$-Subadditive valuations:}  $v_i(S \cup T) \leq v_i(S) + v_i(T)$ for all pairs of {\it disjoint} subsets $S, T \subseteq [m]$ and each agent $i \in [n]$.

Indeed, normalized nonnegative (and normalized $\sigma$-subadditive) set functions can be non-monotone.

\paragraph{Multilinear Extensions.} For each valuation $v_i : 2^{[m]} \rightarrow \mathbb{Q}$, we will write $V_i : [0,1]^m \rightarrow \mathbb{R}$ to denote its multilinear extension; in particular, for each $x \in [0,1]^m$, we have $V_i(x) \coloneqq \sum_{S \subseteq [m]} v_i(S)   \prod_{j \in S} x_j \allowbreak  \ \prod_{j \in [m] \setminus S} (1 - x_j)$. 
That is, $V_i(x)$ is the expected value, under valuation $v_i$, when the sets $S$ are drawn from the product distribution induced by $x$. Note that the multilinear extensions $\{V_i\}_{i=1}^n$ are continuous functions, since their definition only involves additions and multiplications of terms involving the components of $x$.

\paragraph{Cake Division.} Cake represents a divisible heterogeneous resource to be fairly divided among a set of $n$ agents. It is modeled as the interval $[0,1]$. A cake division $\mathcal{I} = (I_1, \ldots, I_n)$ is a partition of the cake $[0,1]$ into $n$ pairwise disjoint intervals $I_1,\ldots, I_n$. By convention, we will say that two intervals are disjoint if they intersect at finitely many points. Hence, intervals $[x,y]$ and $[y,z]$---with $0 \leq x \leq y \leq z \leq 1$---are said to be disjoint. Under division $\mathcal{I} = (I_1, \ldots, I_n)$, interval $I_i$ is assigned to agent $i$ and we have $\cup_{i=1}^n I_i = [0,1]$ along with $I_i \cap I_j = \emptyset$ for all $i \neq j$. The cake-division literature, in general, also considers partitions in which each agent can be assigned a finite union of intervals. The current work, however, focusses exclusively on {contiguous} cake division wherein each agent $i$ receives a single connected interval $I_i$. 

In the context of cake division, the preferences of the agents are specified via valuations $f_i$ defined over intervals $I = [x,y]$, with $0 \leq x \leq y\leq 1$. In particular, $f_i(I) \in \mathbb{R}$ denotes that value that agent $i \in [n]$ has for the cake interval $I \subseteq [0,1]$. A cake division instance will be denoted by a tuple $\langle [n], \{f_i\}_{i=1}^n \rangle$. The paper, throughout, conforms to cake valuations $f_i$ that are normalized, $f_i(\emptyset) = 0$, globally nonnegative, $f_i([0,1]) \geq 0$, and continuous. Also, for any interval $I = [x,y]$, we will write $\len(I)$ to denote the length of the interval, $\len(I) = y -x$. 

The current work does not assume that the valuations $\{f_i\}_{i=1}^n$ are monotone, i.e., we do not require $f_i(I) \leq f_i(J)$, for intervals $I \subseteq J$. Specifically, the paper addresses cake division under valuations $\{ f_i\}_{i\in [n]}$ that are either 

\noindent 
(i) {\bf Nonnegative:} $f_i(I) \geq 0$ for all intervals $I \subseteq [0,1]$.

\noindent 
(ii) {\bf $\sigma$-Subadditive:} $f_i( I \cup J) \leq f_i(I) + f_i(J)$ for all disjoint intervals $I, J \subseteq [0,1]$. 

Recall that a valuation $g$ is said to be $\sigma$-additive if $g(I \cup J) = g(I) + g(J)$ for all disjoint intervals $I, J \subseteq [0,1]$. Hence, $\sigma$-subadditivity generalizes $\sigma$-additivity. Also, a $\sigma$-subadditive function can be non-monotone and even negative. In fact, our cake division results hold for a more general class, which we refer to as $c\sigma$-subadditive: a valuation $f$ is said to be {\it $c\sigma$-subadditive} if for any two {\it contiguous} intervals $I, J \subseteq [0,1]$ (i.e, for any $I = [x,y]$ and $J = [y,z]$, with $0 \leq x \leq y \leq z \leq 1$) we have $f(I \cup J) \leq f(I) + f(J)$. Indeed, every $\sigma$-subadditive function is  $c\sigma$-subadditive. 

\paragraph{Envy-Freeness.} In the cake-division context, in any instance, $\langle [n], \{f_i\}_{i=1}^n \rangle$, a cake division $\mathcal{I} = (I_1, \ldots, I_n)$ is said to be \emph{envy-free} ($\EF$) if, for every pair of agents $i,j \in [n]$, we have $f_i(I_i) \geq f_i(I_j)$. 

In the indivisible-items setting and for any integer $k \geq 1$, an allocation $\alloc = (A_1, \ldots, A_n)$ is said to be \emph{envy-free up to $k$ edits} ($\EFk$) if the envy that any agent $i\in [n]$ has towards any other agent $j \in [n]$ can be resolved by shifting at most $k$ items in or out of $A_i$ and $A_j$ combined. Formally,\footnote{Recall that $X \triangle Y$ denotes the symmetric difference between the two sets, $X \triangle Y \coloneqq \left(X \setminus Y \right) \cup \left(Y \setminus X \right)$.} 
\begin{definition}[$\EFk$]\label{definition:EFk}
In a fair division instance $\langle [n], [m], \{ v_i \}_{i=1}^n \rangle$, an allocation $\alloc = (A_1, \ldots, A_n)$ is said to be envy-free up to $k$ edits if, for each pair of agents $i, j$, there exists (notional and nearby) subsets $A'_i, A'_j \subseteq [m]$ with the properties that $v_i(A'_i) \geq v_i(A'_j)$ and $| A_i \triangle A'_i| + |A_j \triangle A'_j| \leq k$. 
\end{definition}
We will establish existence of $\EFt$ allocations (i.e., we will address the $k=3$ case) under valuations that are not necessarily monotone. 

Note that the relaxations of envy-freeness formulated in discrete fair division---specifically, $\EFo$---capture the idea that a bounded number of changes in the assigned bundles suffice to achieve envy-freeness. $\EFk$ conforms to this paradigm. Indeed, an allocation (of mixed manna) is said to be $\EFo$ if any existing envy can be mitigated by the deletion of one item from $A_i$ and $A_j$ combined; see, e.g., \cite{aziz2022fair}. Definition \ref{definition:EFk} puts forward a further relaxation by allowing for $k$ edits, i.e., $k$ additions and deletions from $A_i$ and $A_j$ combined.  Also, observe that the following implication: $\EFo$ implies $\mathrm{EFE}1$. 

\paragraph{Equitability.}
A cake division $(I_1, \ldots, I_n)$ is said to be \emph{equitable} ($\mathrm{EQ}$) if the equality $f_i(I_i) = f_j(I_j)$ holds for each pair of agents $i,j \in [n]$.

\begin{definition}[$\EQk$]\label{definition:EQk}
In a fair division instance $\langle [n], [m], \{ v_i \}_{i=1}^n \rangle$, an allocation $\alloc = (A_1, \ldots, A_n)$ is said to be equitable up to $k$ edits if, for each pair of agents $i, j$, there exists subsets $A'_i, A'_j \subseteq [m]$ with the properties that $v_i(A'_i) \geq v_j(A'_j)$ and $| A_i \triangle A'_i| + |A_j \triangle A'_j| \leq k$. 
\end{definition}

% An allocation $\alloc = (A_1,\ldots, A_n)$ of indivisible items is said to be \emph{equitable up to $3$ edits} ($\EQt$) if, for each pair of agents $i, j$, there exists subsets $A'_i, A'_j \subseteq [m]$ with the properties that $v_i(A'_i) \geq v_j(A'_j)$ and $| A_i \triangle A'_i| + |A_j \triangle A'_j| \leq 3$. 
\section{Tools and Techniques}
We will utilize two well-known results from Combinatorial Topology, the Brouwer's fixed point theorem and Sperner's lemma; both these results are stated next. 

\begin{theorem}[Brouwer's Fixed-Point Theorem \cite{border1985fixed}]\label{theorem:brouwer-fpt} Every continuous function $f: K \mapsto K$ from a nonempty, convex, compact set $K \subset \mathbb{R}^d$ to $K$ itself has a fixed point, i.e., there exists an $x^* \in K$ such that $f(x^*) = x^*$. 
\end{theorem}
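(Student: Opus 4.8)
The plan is to derive Brouwer's theorem from Sperner's lemma (the combinatorial tool introduced immediately afterward) via the classical two-step argument: first settle the case in which $K$ is a closed simplex, then reduce the general nonempty convex compact case to it. For the reduction, I would use that $K \subset \mathbb{R}^d$, being bounded, sits inside some closed $d$-simplex $\Delta$, and that the nearest-point projection $r : \Delta \to K$ onto a closed convex set is well defined and nonexpansive, hence continuous. Then $g \coloneqq f \circ r : \Delta \to \Delta$ is continuous, and any fixed point $x^*$ of $g$ satisfies $x^* = g(x^*) = f(r(x^*)) \in K$, so $r(x^*) = x^*$ and $f(x^*) = x^*$. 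Since every closed $d$-simplex is affinely isomorphic (hence homeomorphic, preserving fixed points) to the standard simplex $\Delta^d = \{ x \in \mathbb{R}^{d+1}_{\ge 0} : \sum_{i=0}^d x_i = 1 \}$, it suffices to prove the theorem for $K = \Delta^d$; degenerate, lower-dimensional $K$ cause no difficulty. (An alternative route via the no-retraction theorem and homology would also work but is less in the spirit of the tools this paper sets up.)

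For the simplex case, let $f : \Delta^d \to \Delta^d$ be continuous. Given any triangulation $T$ of $\Delta^d$, I would label each vertex $v$ of $T$ by an index $\ell(v)$ chosen from the set $\{\, i \in \{0,\dots,d\} : v_i > 0 \text{ and } f(v)_i \le v_i \,\}$. This set is nonempty: since $\sum_i f(v)_i = \sum_i v_i = 1$, the coordinates of $v$ cannot all strictly increase under $f$, so some coordinate $i$ with $v_i > 0$ must satisfy $f(v)_i \le v_i$. One then checks the Sperner boundary condition: if $v$ lies on the sub-face spanned by $\{i_0,\dots,i_k\}$, then $v_i = 0$ for $i \notin \{i_0,\dots,i_k\}$, so $\ell(v)$ — which requires a strictly positive coordinate — lies in $\{i_0,\dots,i_k\}$; in particular each corner $e^j$ receives label $j$. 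Sperner's lemma then produces a fully-labeled cell of $T$, i.e., vertices $v^0,\dots,v^d$ with $\ell(v^j) = j$, hence $f(v^j)_j \le v^j_j$ for every $j$.

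To finish, I would take triangulations $T^{(k)}$ of vanishing mesh, obtaining for each $k$ a fully-labeled cell with vertices $v^{0,k},\dots,v^{d,k}$ satisfying $f(v^{j,k})_j \le v^{j,k}_j$. All vertices of the $k$-th cell lie within the mesh of one another, so by compactness of $\Delta^d$ we may pass to a subsequence along which they all converge to a common point $x^* \in \Delta^d$. Continuity of $f$ yields $f(x^*)_j \le x^*_j$ for all $j$; combined with $\sum_j f(x^*)_j = 1 = \sum_j x^*_j$, this forces $f(x^*)_j = x^*_j$ for every $j$, i.e., $f(x^*) = x^*$.

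The step carrying the actual content is the choice of labeling and the verification that it meets Sperner's boundary requirement; the projection-based reduction and the mesh-refinement limit are routine. The only other thing to flag is that this proof is not self-contained: it rests on Sperner's lemma, so logically it belongs after that lemma even though the present statement precedes it in the exposition.
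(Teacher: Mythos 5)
The paper does not prove this statement at all: Brouwer's fixed-point theorem is imported as a classical black box from \cite{border1985fixed}, so there is no in-paper argument to compare against. Your proof is the standard and correct derivation from Sperner's lemma. The two steps that carry content are both handled properly: the label set $\{\, i : v_i > 0 \text{ and } f(v)_i \le v_i \,\}$ is nonempty because otherwise $1 = \sum_i f(v)_i \ge \sum_{i : v_i > 0} f(v)_i > \sum_{i : v_i > 0} v_i = 1$, and the positivity requirement on the labeled coordinate is exactly what enforces the Sperner boundary condition on each face. The reduction via the nonexpansive nearest-point projection onto the closed convex set $K$, the passage to a fully-labeled cell under vanishing mesh, and the final coordinatewise squeeze using $\sum_j f(x^*)_j = \sum_j x^*_j = 1$ are all correct. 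Your proof is also in the spirit of the paper's toolkit, since Sperner's lemma is the other combinatorial result the paper sets up; and there is no circularity, because Sperner's lemma admits a purely combinatorial proof independent of Brouwer, so the expository ordering you flag is harmless.
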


An \emph{$\ell$-simplex} $S$ is convex hull of $(\ell+1)$ affinely independent vectors, $u_1, u_2, \ldots, u_{\ell+1} \in \mathbb{R}^d$, with $d \geq \ell$. A \emph{$k$-face} of an $\ell$-simplex $S$, for any $k \leq \ell$, is a $k$-simplex formed by any $(k+1)$ vertices of $S$. 

A \emph{triangulation} $\mathcal{T}$ of an $\ell$-simplex $S$ is a collection of smaller $\ell$-simplices whose union is $S$ and any two of them are either disjoint or intersect at a face common to both. The smaller simplices of a triangulation are called the \emph{elementary simplices} of the triangulation. In addition, the collection of all the vertices of the elementary simplices is called the \emph{vertices of the triangulation}; this set of vertices is denoted as $\mathrm{vert}(\mathcal{T})$.

For an $\ell$-simplex $S = \mathrm{conv}\left( \left\{u_1, \ldots, u_{\ell+1} \right\} \right)$ all the $(\ell-1)$-faces of $S$ are called its \emph{facets}. Note that simplex $S$ has $(\ell+1)$ facets. Moreover, we will follow the indexing wherein, for each index $1 \leq j \leq (\ell+1)$, the $j$th facet is the convex hull of all the vertices, except $u_j$. That is, the $j$th facet of $S$ is $\mathrm{conv}\left( \left\{u_1, \ldots, u_{j-1}, u_{j+1}, \ldots, u_{\ell+1} \right\} \right)$.  

A \emph{labeling} $\lambda$ of a triangulation of a $\ell$-simplex is a mapping from the vertices of the triangulation to numbers $\{1, 2, \ldots, \ell+1\}$, i.e., $\lambda: \mathrm{vert}(\mathcal{T}) \mapsto [\ell +1]$.  A {labeling} is called  a \emph{Sperner's labeling} if it obeys the following rule: each vertex $w \in \mathrm{vert}(\mathcal{T})$ of the triangulation that lies on the $j$th facet is not assigned the label $j$. Equivalently, if vertex $w \in  \mathrm{conv}\left( \left\{u_1, \ldots, u_{j-1}, u_{j+1}, \ldots, u_{\ell+1} \right\} \right)$, then $\lambda(w) \neq j$. Any vertex of the triangulation that lies in the strict interior of $S$ (i.e., not on any of its facets) can have any label. 
 
 Finally, an elementary simplex of a triangulation is said to be \emph{fully labeled} if all of its vertices have distinct labels. 

\begin{theorem}[Sperner's Lemma]\label{theorem:sperners-lemma} Let $\lambda:  \mathrm{vert}(\mathcal{T}) \mapsto [\ell+1]$ be a Sperner labeling of a triangulation $\mathcal{T}$ of an $\ell$-simplex. Then, there exists a fully-labeled elementary simplex in $\mathcal{T}$. 
\end{theorem}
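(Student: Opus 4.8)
The plan is to prove Sperner's Lemma by the classical parity (``door-counting'') argument, by induction on the dimension $\ell$, strengthening the conclusion to the statement that \emph{the number of fully-labeled elementary simplices is odd} (which immediately yields existence). The base case $\ell = 0$ is immediate: a $0$-simplex is a single vertex, the only available label is $1$, and the unique elementary simplex is trivially fully labeled, so the count is $1$, which is odd.

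For the inductive step, fix an $\ell$-simplex $S = \mathrm{conv}(\{u_1, \ldots, u_{\ell+1}\})$ with triangulation $\mathcal{T}$ and Sperner labeling $\lambda$. Call an $(\ell-1)$-face of an elementary $\ell$-simplex a \emph{door} if its $\ell$ vertices carry exactly the label set $\{1, 2, \ldots, \ell\}$. First I would count, for each elementary simplex $\sigma \in \mathcal{T}$, how many of its $(\ell+1)$ facets are doors. If $\sigma$ is fully labeled it has exactly one door, namely the facet opposite the vertex labeled $\ell+1$. If the labels appearing on $\sigma$ are exactly $\{1,\ldots,\ell\}$ with one of these labels repeated (so $\ell+1$ is absent), then $\sigma$ has exactly two doors, obtained by deleting either of the two equally-labeled vertices. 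In every remaining case (some label in $\{1,\ldots,\ell\}$ is missing from $\sigma$) it has no door. Summing over all $\sigma$, the total number of incident pairs $(\sigma, \text{door of } \sigma)$ is therefore congruent modulo $2$ to the number of fully-labeled elementary simplices of $\mathcal{T}$.

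Next I would recount the same total by grouping over doors. By the definition of a triangulation, an $(\ell-1)$-face that occurs as a facet of some elementary simplex is either interior to $S$, in which case it is a facet of exactly two elementary simplices and contributes $2$ to the sum, or it lies in the topological boundary of $S$, in which case it is a facet of exactly one elementary simplex and contributes $1$. A boundary door lies inside some facet $\mathrm{conv}(\{u_1, \ldots, u_{j-1}, u_{j+1}, \ldots, u_{\ell+1}\})$ of $S$; since its vertices realize every label in $\{1, \ldots, \ell\}$ while the Sperner condition forbids the label $j$ on the $j$th facet, we must have $j = \ell+1$. Hence every boundary door lies in the single facet $F \coloneqq \mathrm{conv}(\{u_1, \ldots, u_\ell\})$. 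The triangulation $\mathcal{T}$ restricts to a triangulation of the $(\ell-1)$-simplex $F$, and $\lambda$ restricts to a Sperner labeling of it valued in $\{1, \ldots, \ell\}$; the fully-labeled elementary simplices of this induced triangulation are exactly the boundary doors of $\mathcal{T}$. By the strengthened induction hypothesis their number is odd. Consequently the total incidence count equals (an even contribution from interior doors) plus (an odd contribution from boundary doors), hence is odd, and therefore the number of fully-labeled elementary simplices of $\mathcal{T}$ is odd; in particular at least one exists, which is what we wanted.

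I expect the main obstacle to be the bookkeeping of the door-count: carefully verifying the ``one door / two doors / zero doors'' trichotomy for each elementary simplex, and, crucially, phrasing the induction so that the conclusion is the odd-count statement rather than mere existence (a bare ``at least one'' hypothesis is too weak for the recount step, which needs the parity of the boundary doors). One should also take some care to justify, from the definition of a triangulation, that an interior facet is shared by exactly two elementary simplices, that a boundary facet belongs to exactly one, and that $\mathcal{T}$ induces a genuine triangulation on the facet $F$ with the restricted labeling still Spernerian — routine but worth stating explicitly.
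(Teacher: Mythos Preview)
Your proposal gives the standard door-counting (parity) proof of Sperner's Lemma, and it is correct as outlined; the trichotomy on doors per elementary simplex, the interior/boundary recount, and the inductive strengthening to an odd count are all standard and sound. Note, however, that the paper does not actually supply a proof of this theorem: Sperner's Lemma is stated in the Tools and Techniques section as a known result to be invoked later, with no accompanying argument. So there is no ``paper's own proof'' to compare against; your write-up simply fills in a classical proof the authors chose to omit.
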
 

We will also utilize the result of Su~\cite{Su1999rental} that establishes the existence of envy-free cake divisions under relevant conditions on the agents' valuations $f_1,\ldots, f_n$. In particular, a key prerequisite for the result---referred to as the hungry condition---is that the agents value any nonempty interval more than an empty piece of the cake. Specifically, for any agent $i \in [n]$, for any $I \subseteq [0,1]$ with $I \neq \emptyset$ we have $f_i(I) > 0 = f_i(\emptyset)$; the last equality follows from $f_i$ being normalized. The existence of fair cake divisions holds under even weaker versions of the condition; see, e.g., \cite{meunier2019envy} and \cite{bhaskar2025connected}. Though, the current work invokes the hungry condition as stated above. 
 
\begin{theorem}[Envy-Free Cake Division~\cite{Su1999rental}] \label{theorem:cake-sperner-ef}
Every cake-division instance $\langle [n], \{f_i\}_{i=1}^n \rangle$ in which the agents' valuations, $f_1, \ldots, f_n$, are continuous and satisfy the hungry condition, admits a contiguous envy-free division $\mathcal{I}=(I_1,\ldots, I_n)$.  
\end{theorem}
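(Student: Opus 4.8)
The plan is to reprove this via Su's topological argument: parametrize contiguous divisions by a simplex, apply Sperner's lemma (\Cref{theorem:sperners-lemma}) with a labeling derived from the agents' preferences together with an auxiliary ownership of the triangulation's vertices, and then pass to a limit by compactness and continuity. Concretely, the case $n=1$ is trivial, so assume $n \geq 2$, and identify contiguous cake divisions with points of the standard simplex $\Delta \coloneqq \{\, x = (x_1,\ldots,x_n) : x_k \geq 0,\ \sum_{k=1}^n x_k = 1 \,\}$, where $x \in \Delta$ encodes the division $\mathcal{I}(x) = (I_1(x),\ldots,I_n(x))$ whose $k$-th piece, read left to right, is $I_k(x) = \left[\sum_{l<k} x_l,\ \sum_{l \leq k} x_l\right]$; thus $\len(I_k(x)) = x_k$. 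Since each $f_i$ is continuous, the map $x \mapsto f_i(I_k(x))$ is continuous on $\Delta$ for all $i,k \in [n]$.

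For the combinatorial core, I would fix a triangulation $\mathcal{T}$ of $\Delta$ equipped with an ownership map $\mathrm{own} : \mathrm{vert}(\mathcal{T}) \to [n]$ such that the $n$ vertices of every elementary simplex of $\mathcal{T}$ receive pairwise distinct owners; iterated barycentric subdivisions furnish such triangulations of arbitrarily small mesh (label each vertex by the dimension of the face, in the previous complex, of which it is the barycenter, so that every maximal simplex is ``rainbow'' under the owner labeling). Given $\mathcal{T}$, define a labeling $\lambda : \mathrm{vert}(\mathcal{T}) \to [n]$ by taking $\lambda(v)$ to be the least index $k$ maximizing $f_{\mathrm{own}(v)}(I_k(v))$ --- the piece that $v$'s owner most prefers in $\mathcal{I}(v)$. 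The key verification is that $\lambda$ is a Sperner labeling: if $v$ lies on the $j$-th facet of $\Delta$, then $x_j = 0$ there, so $I_j(v)$ is a zero-length (empty) piece with $f_{\mathrm{own}(v)}(I_j(v)) = 0$, whereas since $n \geq 2$ and $\sum_k x_k = 1$ some other piece has positive length and hence, by the hungry condition, strictly positive value for $\mathrm{own}(v)$; therefore $\lambda(v) \neq j$. By \Cref{theorem:sperners-lemma} there is a fully $\lambda$-labeled elementary simplex $\sigma$, and because $\mathcal{T}$ is rainbow under $\mathrm{own}$, the vertices of $\sigma$ also realize all $n$ owners. Hence both $\mathrm{own}$ and $\lambda$ restrict to bijections $\mathrm{vert}(\sigma) \to [n]$, so $\pi_\sigma \coloneqq \lambda \circ (\mathrm{own}|_\sigma)^{-1}$ is a permutation of $[n]$ with the property that, for every agent $a$, the vertex $v$ of $\sigma$ owned by $a$ satisfies $f_a(I_{\pi_\sigma(a)}(v)) \geq f_a(I_k(v))$ for all $k \in [n]$.

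To finish, I would run this on a sequence of triangulations with mesh tending to $0$, obtaining fully labeled simplices $\sigma_t$ with $\mathrm{diam}(\sigma_t) \to 0$. By compactness of $\Delta$, along a subsequence all vertices of $\sigma_t$ converge to a common point $x^* \in \Delta$, and since the permutations $\pi_{\sigma_t}$ lie in the finite symmetric group, we may further assume $\pi_{\sigma_t} = \pi$ for a fixed permutation $\pi$. For each agent $a$, writing $v_t^{(a)}$ for the vertex of $\sigma_t$ owned by $a$, we have $f_a(I_{\pi(a)}(v_t^{(a)})) \geq f_a(I_k(v_t^{(a)}))$ for every $k$; the endpoints of each interval $I_k(v_t^{(a)})$ converge to those of $I_k(x^*)$, so continuity of $f_a$ gives $f_a(I_{\pi(a)}(x^*)) \geq f_a(I_k(x^*))$ for all $k \in [n]$. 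Assigning the piece $I_{\pi(a)}(x^*)$ to agent $a$ for each $a$ --- a legitimate assignment since $\pi$ is a bijection --- then yields a contiguous division that is envy-free.

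I expect the main obstacle to be the combinatorial layer rather than the analytic one: namely, securing arbitrarily fine triangulations carrying a rainbow ownership map and checking that the preference labeling $\lambda$ obeys Sperner's boundary rule --- which is exactly the step that invokes the hungry condition. Once this finite-level statement is in hand, the compactness argument, the stabilization of the permutation $\pi$, and the continuity passage to the limit are all routine.
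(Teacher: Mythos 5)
Your proposal is correct and is essentially Su's original Sperner-lemma argument, which the paper invokes by citation rather than reproving; it also mirrors the paper's own proof of the more general \Cref{theorem:burnt-cake-envy-free} (same simplex parametrization, barycentric triangulations with rainbow ownership, preference labeling whose Sperner property rests on the hungry/global-nonnegativity condition, and a compactness-plus-continuity limit with a stabilized permutation). No gaps: the boundary check correctly isolates where the hungry condition is needed, and the passage to the limit is handled properly.
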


\subsection{Cake Construction}\label{section:cake-reduction}
A key technique that we will use throughout the paper is to convert a discrete fair division instance (i.e., an instance with indivisible items) to a cake division instance; we will refer to this step as cake construction. Along with such a construction step, we also develop a rounding method (\Cref{section:CakeRounding}) that starts with a fair cake division $\calI=(I_1, \ldots, I_n)$ (in a constructed cake instance) and yields a fair allocation $\calA=(A_1, \ldots, A_n)$ in the original instance with indivisible items. We will refer to this method as cake rounding. 

As mentioned, the cake construction process (\Cref{algo:cake-reduction}) takes as input a discrete fair division instance $\langle [n], [m], \{ v_i \}_{i=1}^n \rangle$ and returns a cake division instance $\langle [n], \{ f_i \}_{i=1}^n \rangle$. To form the cake, the items $1,2 \ldots, m$ are `placed' contiguously on the interval $[0,1]$; in particular, each item $k \in [m]$ is associated with the the sub-interval $[\frac{k-1}{m},\frac{k}{m}]$. 

To define the cake valuations, $f_i$, we first define a linear function $b$ that maps intervals $I = [x,y] \subseteq [0,1]$ to vectors in $[0,1]^m$. Specifically, $b(I) \coloneqq (b_1(I), \ldots, b_m(I)) \in [0,1]^m$, where each component $b_k(I) \in [0,1]$ denotes the fraction of item $k \in [m]$ that is contained in the interval $I$. That is,  $b_k(I) = m \  \len \left(I \cap \left[\frac{k-1}{m}, \frac{k}{m} \right] \right)$, for each $k \in [m]$.\footnote{Recall that $\len(J)$ denotes the length of the interval $J$.} Note that the function $b$ is additive: $b(I_1 \cup I_2) = b(I_1) + b(I_2)$ for any two disjoint intervals $I_1, I_2 \subseteq [0,1]$.

Given $b$, the cake valuation $f_i$ of each agent $i \in [n]$ is defined as the composition of $V_i$ (the multilinear extension of $v_i$) with $b$, i.e., $f_i \coloneqq V_i \circ b$. Note that $f_i$ takes as an argument an interval $I = [x,y] \subseteq [0,1]$ and maps it to a real number. Further, each $f_i$ is continuous,  since both $V_i$ and $b$ are continuous.  
\floatname{algorithm}{Algorithm}
\begin{algorithm}[H]
\caption{\textsc{CakeConstruction}} \label{algo:cake-reduction}
\begin{tabularx}{\textwidth}{X}
{\bf Input:} A discrete fair division instance $\langle [n], [m], \{ v_i \}_{i=1}^n \rangle$. 

{\bf Output:} A cake division instance $\langle [n], \{ f_i \}_{i=1}^n \rangle$.
\end{tabularx}
  \begin{algorithmic}[1]
  		\STATE Associate with each item $k \in [m]$ the sub-interval $[\frac{k-1}{m},\frac{k}{m}]$.
		\STATE Define function $b$ over intervals $I = [x,y] \subseteq [0,1]$, such that $b(I) = (b_1(I), \ldots, b_m(I)) \in [0,1]^m$ and each $b_k(I) = m \ \len(I \cap [\frac{k-1}{m},\frac{k}{m}])$. 
		\STATE For each agent $i \in [n]$, set the cake valuation $f_i \coloneqq V_i \circ b$, where $V_i$ is the multilinear extension of $v_i$.
		\RETURN $\langle [n], \{ f_i \}_{i=1}^n \rangle$.
		\end{algorithmic}
\end{algorithm}

\subsection{Cake Rounding}\label{section:CakeRounding}
\Cref{algo:cake-rounding} (detailed next) and the subsequent two lemmas (\Cref{lemma:cake-rounding} and \Cref{lemma:cake-rounding-2}) provide the rounding method and its guarantees. \Cref{lemma:cake-rounding} addresses envy-freeness, while \Cref{lemma:cake-rounding-2} covers equitability. 

The \textsc{CakeRounding} algorithm takes as input a discrete fair division instance $\langle [n], [m], \{v_i\}_{i=1}^n \rangle$, the constructed cake division instance $\langle [n], \{f_i\}_{i=1}^n\rangle$ (obtained via \textsc{CakeConstruction}), and in it a contiguous cake division $\mathcal{I} = (I_1, \ldots, I_n)$. The algorithm rounds $\mathcal{I}$ to an allocation $\alloc = (A_1, \ldots, A_n)$ while preserving approximate fairness (envy-free and equitability). 

For ease of exposition, we reindex the agents such that, in $\calI=(I_1, \ldots, I_n)$, the $i$th interval from the left in the cake is the one assigned to agent $i$. That is, agent $1$'s interval, $I_1$, is the left-most interval among $\{I_1, \ldots, I_n\}$, and $I_n$ is the right-most. Note that such a reindexing holds without loss of generality and does not impact the fairness guarantees. 
 
For each agent $i \in [n]$, we define $B_i \coloneqq \{ k \in [m] : b_k(I_i) = 1\}$ as the subset of items that are completely contained in $I_i$. Further, define $\ell_i$ and $r_i$ to be the \emph{fractionally covered} items at the left and right ends of the interval $I_i$, i.e., item $\ell_i \coloneqq \min \{k \in [m] \mid  0 < b_k(I_i) < 1\}$ and item $r_i \coloneqq  \max \{k  \in [m] \mid 0 < b_k(I_i) < 1\}$. Note that it could be the case that $\ell_i = r_i$. Also, if the left (right) endpoint of interval $I_i$ falls at $\frac{k-1}{m}$, for any integer $k \in [m]$, then $\ell_i$ ($r_i$) does not exist.   

\Cref{algo:cake-rounding} starts by initializing all the bundles $A_i = \emptyset$ and the set of unassigned items $U = [m]$. Further, the algorithm, in its for-loop (Lines \ref{line:cake-rouding-loop-begin}-\ref{line:cake-rouding-loop-end}), rounds the given cake division $\mathcal{I}$ from left to right: first the interval $I_1$ is rounded to bundle $A_1 \subseteq [m]$, then $I_2$ is rounded to $A_2 \subseteq [m]$, and so on. Throughout, the algorithm maintains the following key property: for each agent $i \in [n]$, the assigned bundle $A_i$ satisfies $|A_i \triangle A^*_i| \leq 1$, where $A^*_i$ is the maximum-valued set among the following four: $B_i$, $B_i + \ell_i$, $B_i + r_i$, and $B_i + \ell_i + r_i$.\footnote{That is, $A^*_i \in \argmax\{v_i(X) : X \in \{B_i, B_i \cup \{\ell_i\}, B_i \cup \{r_i\}, B_i \cup \{\ell_i, r_i\}\}\}$. Also, recall that $B + \ell$ denotes $B \cup \{ \ell \}$.} This property is achieved by construction -- see the assignment table in Line \ref{line:table}. Using this property, in the proof of the following lemma we will show that the assigned subsets $A_i$ constitute an allocation (i.e., these bundles are pairwise disjoint and partition $[m]$) and inherit the desired fairness properties from the intervals $I_i$. 

\floatname{algorithm}{Algorithm}
\begin{algorithm}[H]
\caption{\textsc{CakeRounding}} \label{algo:cake-rounding}
\begin{tabularx}{\textwidth}{X}
{\bf Input:} A discrete fair division instance $\langle [n], [m], \{v_i\}_{i=1}^n \rangle$ and a (contiguous) cake division $\mathcal{I} = (I_1, \ldots, I_n)$ in the constructed cake division instance $\langle [n], \{f_i\}_{i=1}^n\rangle$.\\
{\bf Output:} Allocation $\alloc = (A_1, \ldots, A_n)$.
\end{tabularx}
  \begin{algorithmic}[1]
  		\STATE Reindex the agents such that, for each $i \in [n]$, interval $I_i$ is the $i$th interval from the left, among $I_1,\ldots, I_n$. 
		\STATE For each agent $i \in [n]$, set $B_i \coloneqq \{ k \in [m] : b_k(I_i) = 1\}$ along with items $\ell_i \coloneqq \min \{k \in [m] \mid  0 < b_k(I_i) < 1\}$ and $r_i \coloneqq   \max \{k  \in [m] \mid 0 < b_k(I_i) < 1\}$. \label{line:define-B-i}
		\STATE Also, for each agent $i \in [n]$, set $A^*_i \in \argmax \Big\{v_i(X)  \mid X \in \{ B_i, B_i + \ell_i, B_i + r_i, B_i + \ell_i + r_i \} \Big\}$. \label{line:Astar}
		\STATE Initialize (unassigned items) $U = [m]$ and bundles $A_i = \emptyset$ for all $i \in [n]$. \label{line:rounding-init}
		\FOR{agents $i = 1$ to $n$} \label{line:cake-rouding-loop-begin}
		\STATE Set $A_i$ based on the eight exhaustive cases listed in the following table. That is, set $A_i$ based on whether $\ell_i \in U$ or $\ell_i \notin U$, and which among the four choices in Line \ref{line:Astar} yields $A^*_i$.  \label{line:table}
			\vspace*{7pt}
		\begin{center}
		\begin{tabular}{ |c|c|c|c|c| } 
		\hline
 	 	\ 		  & $A^*_i = B_i$ & $A^*_i = B_i + \ell_i $ & $A^*_i = B_i + r_i$ & $A^*_i = B_i + \ell_i + r_i$ \\ 
		\hline
 		If $\ell_i \in U$ & $A_i = B_i + \ell_i$ &  $A_i = B_i + \ell_i$ & $A_i = B_i + \ell_i + r_i$ & $A_i = B_i + \ell_i + r_i$ \\ 
 		\hline  
		If $\ell_i \notin U$ & $A_i = B_i$ &  $A_i = B_i $ & $A_i = B_i + r_i$ & $A_i = B_i + r_i$ \\
		\hline
		\end{tabular}
		\end{center}  			
			\vspace*{7pt}
	\STATE Update $U \gets U \setminus A_i$.  
    	\ENDFOR \label{line:cake-rouding-loop-end}
    	
		\RETURN $\alloc = \left( A_1, \ldots, A_n \right)$
		\end{algorithmic}
\end{algorithm}

\begin{lemma}[$\EF$ Rounding Lemma]\label{lemma:cake-rounding}
Let $\langle [n], [m], \{v_i\}_{i=1}^n \rangle$ be a discrete fair division instance and $\mathcal{I} = (I_1, \ldots, I_n)$ be a contiguous cake division in the cake division instance $\langle [n], \{f_i\}_{i=1}^n\rangle$ obtained via \textsc{CakeConstruction} (\Cref{algo:cake-reduction}). Also, for parameter $\alpha \geq 0$, let $\calI$ satisfy $f_i(I_i) + \alpha \geq f_i(I_j)$ for all $i,j \in [n]$. Then, given these instances and the  division $\calI$ as input, \Cref{algo:cake-rounding} computes an allocation $\calA = (A_1, \ldots, A_n)$ wherein, for each pair of agents $i, j \in [n]$, there exists subsets $A'_i, A'_j \subseteq [m]$ such that $v_i(A'_i) + \alpha \geq v_i(A'_j)$ and $| A_i \triangle A'_i| + |A_j \triangle A'_j| \leq 3$. 
\end{lemma}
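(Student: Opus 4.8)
The plan is to analyze Algorithm~\ref{algo:cake-rounding} directly and verify three things in sequence: (1) the output $\calA = (A_1,\ldots,A_n)$ is a genuine allocation (a partition of $[m]$); (2) each bundle $A_i$ is close to the target set $A^*_i$ in the sense $|A_i \triangle A^*_i| \leq 1$; and (3) this closeness, combined with the hypothesis $f_i(I_i) + \alpha \geq f_i(I_j)$ and the identity $f_i = V_i \circ b$, yields the claimed $\EF$-up-to-three guarantee.

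First I would establish that $\calA$ is a partition. Because we process intervals left to right and $\calI$ is contiguous, the ``floor'' sets $B_i$ (items entirely inside $I_i$) are pairwise disjoint, and the only items that can be contested between consecutive agents are the fractionally covered boundary items: the right fractional item $r_i$ of agent $i$ coincides with the left fractional item $\ell_{i+1}$ of agent $i+1$ (when both exist). The bookkeeping set $U$ tracks which items remain unassigned, and the case split in Line~\ref{line:table} is precisely designed so that a boundary item is taken by exactly one of the two agents sharing it: if $\ell_i \in U$ agent $i$ grabs it, otherwise it was already taken by agent $i-1$. I would argue by induction on $i$ that after iteration $i$ the assigned bundles $A_1,\ldots,A_i$ are disjoint, are contained in the items touched by $I_1,\ldots,I_i$, and that $U$ is exactly the complement; the terminal case then forces $\cup_i A_i = [m]$. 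One must also check the degenerate situations flagged in the text — $\ell_i = r_i$, or $\ell_i$/$r_i$ not existing — don't break the argument; these are routine but need to be mentioned.

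Next, inspecting the eight-case table, in every case $A_i \in \{B_i, B_i+\ell_i, B_i+r_i, B_i+\ell_i+r_i\}$ as well, and $A^*_i$ is one of these same four sets, so $A_i \triangle A^*_i \subseteq \{\ell_i, r_i\}$; a quick check of each cell of the table shows the symmetric difference is in fact a single element (or empty), giving $|A_i \triangle A^*_i| \leq 1$. Then for the fairness conclusion: set $A'_i \coloneqq A^*_i$ and $A'_j \coloneqq$ the element of $\{B_j, B_j+\ell_j, B_j+r_j, B_j+\ell_j+r_j\}$ that minimizes $v_j$... wait — rather, I want $A'_j$ to be a set with $v_i(A'_j)$ \emph{large} and $|A_j \triangle A'_j| \le 1$; so I take $A'_j \coloneqq \argmin$ is wrong. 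The correct choice is to note that $f_i(I_j) = V_i(b(I_j))$, and $b(I_j)$ has all coordinates in $B_j$ equal to $1$, coordinates outside $B_j \cup \{\ell_j, r_j\}$ equal to $0$, and the two boundary coordinates in $[0,1]$; by multilinearity $V_i(b(I_j))$ is a convex combination of $v_i$ evaluated on the four sets $B_j, B_j+\ell_j, B_j+r_j, B_j+\ell_j+r_j$, hence $f_i(I_j) \leq \max$ of $v_i$ over those four sets — call the maximizer $\widehat A_j$, which satisfies $|A_j \triangle \widehat A_j| \leq 1$ since $A_j$ too is one of those four. Symmetrically $f_i(I_i) = V_i(b(I_i)) \leq v_i(A^*_i)$ in the wrong direction — here I instead need a lower bound: $f_i(I_i)$ is a convex combination of $v_i$ over the four sets, and $A^*_i$ is the $v_i$-maximizer among them, so $f_i(I_i) \leq v_i(A^*_i) = v_i(A'_i)$. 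Combining: $v_i(A'_i) \geq f_i(I_i) \geq f_i(I_j) - \alpha \geq \dots$ — hold on, I need $f_i(I_j) \geq v_i(A'_j)$, but I just showed $f_i(I_j) \leq \max = v_i(\widehat A_j)$, the wrong direction.

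The resolution — and the step I expect to be the real obstacle — is to be careful about which of the two inequalities (upper/lower bound on a multilinear value by set values) I use on each side. Concretely: $f_i(I_j) = V_i(b(I_j))$ is a convex combination of $\{v_i(S) : S \text{ among the four sets for agent } j\}$, so it is at \emph{most} their maximum; choosing $A'_j$ to be that maximizer gives $v_i(A'_j) \geq f_i(I_j)$, which is the \emph{opposite} of what I want. So instead I should pick $A'_j$ with $v_i(A'_j)$ \emph{small}: since $f_i(I_j)$ is also at \emph{least} the minimum of those four values, choosing $A'_j$ to be the $v_i$-minimizer among the four gives $v_i(A'_j) \leq f_i(I_j)$; and this $A'_j$ still satisfies $|A_j \triangle A'_j| \leq |\{\ell_j,r_j\}| \le 2$ — but we need the \emph{combined} bound $|A_i \triangle A'_i| + |A_j \triangle A'_j| \le 3$, and $|A_i \triangle A'_i| = |A_i \triangle A^*_i| \le 1$, so $|A_j \triangle A'_j| \le 2$ suffices. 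On the other side, $f_i(I_i)$ is at most the max over agent $i$'s four sets, which is exactly $v_i(A^*_i)$, so with $A'_i = A^*_i$ we get $v_i(A'_i) \geq f_i(I_i)$. Chaining: $v_i(A'_i) + \alpha \geq f_i(I_i) + \alpha \geq f_i(I_j) \geq v_i(A'_j)$, and $|A_i \triangle A'_i| + |A_j \triangle A'_j| \leq 1 + 2 = 3$, as required. I would write the multilinear-combination claim as a short standalone observation (that $V_i(x) $, when $x$ is $0/1$ outside a two-element set $\{a,b\}$ and equal to $1$ on a set $B$ disjoint from $\{a,b\}$, equals $(1-x_a)(1-x_b) v_i(B) + x_a(1-x_b) v_i(B+a) + (1-x_a)x_b v_i(B+b) + x_a x_b v_i(B+a+b)$, hence lies between the min and max of these four values), then apply it twice with opposite inequalities.
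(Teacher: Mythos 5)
Your proposal is correct and, after the self-corrections, lands on exactly the paper's argument: the same partition/bookkeeping analysis of the table, the same observation that $V_i(b(I_i))$ and $V_i(b(I_j))$ are convex combinations of $v_i$ over the four candidate sets, and the same choices $A'_i = A^*_i$ (the $v_i$-maximizer for agent $i$'s four sets) and $A'_j$ the $v_i$-minimizer over agent $j$'s four sets, with the split $1 + 2 \leq 3$. No substantive differences from the paper's proof.
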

\begin{proof}
\Cref{algo:cake-rounding}, in each iteration $i \in [n]$ of its for-loop, always includes the set $B_i \coloneqq \{k \in [m] \mid b_k(I_i) = 1\}$ into agent $i$'s bundle $A_i$. That is, we always have $B_i \subseteq A_i$. The algorithm follows the assignment table of Line \ref{line:table} and decides on the inclusion of items $\ell_i \coloneqq \min \{k \in [m] \mid  0 < b_k(I_i) < 1\}$ and $r_i \coloneqq  \max \{k  \in [m] \mid 0 < b_k(I_i) < 1\}$. In particular, if $\ell_i \in U$ when $A_i$ is being populated, then the algorithm necessarily includes this item in the bundle; see the first row of the assignment table in Line \ref{line:table}. This ensures that all the items are assigned. Further, if $\ell_i \notin U$ when $A_i$ is being populated, then the algorithm leaves this item out of $A_i$; see the second row of the table.  These design choices ensure that the bundles $A_i$ indeed constitute an allocation, i.e., these bundles  partition $[m]$ and are pairwise disjoint. 

Moreover, the design of the assignment table ensures that, for each agent $i \in [n]$, and irrespective of the case in Line \ref{line:table}, we have $|A_i \triangle A^*_i| \leq 1$. 
 
 Next, recall that the cake valuation $f_i$ is defined as $f_i = V_i \circ b$. Hence, by the lemma assumption, for every pair of agents $i, j \in [n]$, it holds that \begin{align}
V_i(b(I_i)) + \alpha \geq V_i(b(I_j)) \label{ineq:Vf}
\end{align} 
Additionally, by the definition of the set $B_i$, we have $b_k(I_i) = 1$ for all items $k \in B_i$. Also, $b_k(I_i) \in (0,1)$ for $k \in \{\ell_i, r_i\}$. Similarly, $b_k(I_j) = 1$ for all items $k \in B_j$, and $b_k(I_j) \in (0,1)$ for $k \in \{\ell_j, r_j\}$. Furthermore, given that each $I_i$ is an interval, for any $k' \notin B_i \cup\{\ell_i, r_i\}$, the sub-interval $[\frac{k'-1}{m},\frac{k'}{m}]$ does not intersect with $I_i$, i.e., $b_{k'}(I_i) = 0$. These observations and the fact that $V_i$ is the multilinear extension of $v_i$ implies that $V_i(b(I_i))$ is a convex combination of the following four values: $v_i(B_i)$, $v_i(B_i + \ell_i)$, $v_i(B_i + r_i)$, and $v_i(B_i + \ell_i + r_i)$. Analogously, $V_i(b(I_j))$ is a convex combination of $v_i(B_j)$, $v_i(B_j + \ell_j)$, $v_i(B_j + r_j)$, and $v_i(B_j + \ell_j + r_j)$. Therefore, 
\begin{align}\label{equation:MLE-condition}
v_i(A^*_i) + \alpha & = \max \Big\{v_i(B_i), v_i(B_i + \ell_i), v_i(B_i + r_i), v_i(B_i + \ell_i + r_i) \Big\} + \alpha \nonumber \tag{by defn.~of $A^*_i$}\\ 
 & \geq V_i(b(I_i)) + \alpha  \nonumber \\ 
 & \geq V_i(b(I_j)) \nonumber \tag{via (\ref{ineq:Vf})} \\ 
 & \geq \min \Big\{v_i(B_j), v_i(B_j + \ell_j), v_i(B_j + r_j), v_i(B_j + \ell_j + r_j) \Big\} \label{ineq:ViVj}
\end{align}

Consider any $A'_j \in \argmin\Big\{ v_i(X) \mid X \in \{ B_j, B_j + \ell_j, B_j + r_j, B_j + \ell_j + r_j\} \Big\}$. Hence, the minimum on the right-hand-side of inequality (\ref{ineq:ViVj}) is attained at $v_i(A'_j)$. Note that, by construction, the bundle, $A_j$, assigned to agent $j$ satisfies $B_j \subseteq A_j \subseteq B_j + \ell_j + r_j$. Hence, $|A_j \triangle A'_j| \leq 2$.  

Finally, with this set $A'_j$ in hand and setting $A'_i = A^*_i$, we obtain the two desired properties: 

\noindent 
(i) $v_i(A'_i) + \alpha \geq v_i(A'_j)$; see inequality (\ref{ineq:ViVj}), and  

\noindent 
(ii) $|A'_i \triangle A_i| + |A'_j \triangle A_j| \leq 1 + 2 = 3$. 

The lemma stands proved. 
\end{proof}

The proof of the following lemma is similar to the one above and, hence, is omitted. 

\begin{lemma}[$\EQ$ Rounding Lemma]\label{lemma:cake-rounding-2}
Let $\langle [n], [m], \{v_i\}_{i=1}^n \rangle$ be a discrete fair division instance and $\mathcal{I} = (I_1, \ldots, I_n)$ be a contiguous cake division in the cake division instance $\langle [n], \{f_i\}_{i=1}^n\rangle$ obtained via \textsc{CakeConstruction} (\Cref{algo:cake-reduction}). Also, for parameter $\alpha \geq 0$, let $\calI$ satisfy $f_i(I_i) + \alpha \geq f_j(I_j)$ for all $i,j \in [n]$. Then, given these instances and the  division $\calI$ as input, \Cref{algo:cake-rounding} computes an allocation $\calA = (A_1, \ldots, A_n)$ wherein, for each pair of agents $i, j \in [n]$, there exists subsets $A'_i, A'_j \subseteq [m]$ such that $v_i(A'_i) + \alpha \geq v_j(A'_j)$ and $| A_i \triangle A'_i| + |A_j \triangle A'_j| \leq 3$. 
\end{lemma}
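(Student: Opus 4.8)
The plan is to reuse \Cref{algo:cake-rounding} together with the structural facts established in the proof of \Cref{lemma:cake-rounding} essentially verbatim, changing only the valuation indexing in the final comparison. First I would observe that the argument showing that the bundles $A_1, \ldots, A_n$ returned by \Cref{algo:cake-rounding} form a genuine allocation (pairwise disjoint and covering $[m]$) depends only on the assignment table in Line \ref{line:table} — specifically, on the rule that $\ell_i$ is taken whenever it is still unassigned and dropped otherwise — and not on which fairness notion we target, so it carries over unchanged. Likewise, the bound $|A_i \triangle A^*_i| \le 1$ for every agent $i$, which is a direct consequence of the eight cases in that table, holds here too.

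Next I would record the two one-sided convex-combination facts. Since each $I_i$ is an interval, $b_{k'}(I_i) = 0$ for every $k' \notin B_i \cup \{\ell_i, r_i\}$, and $b_k(I_i) = 1$ for $k \in B_i$; hence $V_i(b(I_i))$ is a convex combination of the four numbers $v_i(B_i)$, $v_i(B_i + \ell_i)$, $v_i(B_i + r_i)$, $v_i(B_i + \ell_i + r_i)$, so $v_i(A^*_i) = \max$ of these four $\ge V_i(b(I_i)) = f_i(I_i)$. Applying the same reasoning to agent $j$, but now with valuation $v_j$, gives that $f_j(I_j) = V_j(b(I_j))$ is a convex combination of $v_j(B_j)$, $v_j(B_j + \ell_j)$, $v_j(B_j + r_j)$, $v_j(B_j + \ell_j + r_j)$, so $f_j(I_j) \ge \min$ of these four values.

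Then I would chain these with the lemma hypothesis $f_i(I_i) + \alpha \ge f_j(I_j)$: set $A'_i \coloneqq A^*_i$ and pick $A'_j \in \argmin\{ v_j(X) \mid X \in \{B_j, B_j + \ell_j, B_j + r_j, B_j + \ell_j + r_j\}\}$. Then $v_i(A'_i) + \alpha = v_i(A^*_i) + \alpha \ge f_i(I_i) + \alpha \ge f_j(I_j) \ge v_j(A'_j)$, which is the desired inequity bound. For the item count, $|A_i \triangle A'_i| = |A_i \triangle A^*_i| \le 1$ as above, and since the algorithm guarantees $B_j \subseteq A_j \subseteq B_j + \ell_j + r_j$ while $A'_j$ differs from $B_j$ only in the two items $\ell_j, r_j$, we get $|A_j \triangle A'_j| \le 2$; summing gives $|A_i \triangle A'_i| + |A_j \triangle A'_j| \le 3$, as required.

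The only point needing any care — and it is exactly the point that already appeared in the proof of \Cref{lemma:cake-rounding} — is verifying that the multilinear extension of $v_j$ evaluated at $b(I_j)$ is supported only on subsets lying between $B_j$ and $B_j \cup \{\ell_j, r_j\}$; this is where contiguity of $I_j$ and additivity of $b$ enter, and nothing distinguishing equitability from envy-freeness is used. Consequently I do not anticipate a real obstacle: the proof is simply the asymmetric-valuation restatement of the previous one (using $v_i$ on the ``$i$ side'' and $v_j$ on the ``$j$ side'' in place of $v_i$ throughout), which is why it can be safely omitted.
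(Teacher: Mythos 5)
Your proposal is correct and is precisely the argument the paper intends when it says the proof "is similar to the one above and, hence, is omitted": you keep the partition argument and the bound $|A_i \triangle A^*_i| \le 1$ verbatim, and you make the one substantive change needed for equitability, namely evaluating the $j$-side convex combination under $V_j$ (so $f_j(I_j) = V_j(b(I_j)) \ge \min$ over the four candidate sets under $v_j$) rather than under $V_i$. The chain $v_i(A'_i) + \alpha \ge f_i(I_i) + \alpha \ge f_j(I_j) \ge v_j(A'_j)$ and the count $1 + 2 \le 3$ then go through exactly as in \Cref{lemma:cake-rounding}.
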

\section{Envy-Freeness Under $\sigma$-Subadditive Valuations}
This section develops our envy-freeness guarantees for $\sigma$-subadditive valuations. The valuations addressed here can be negative and non-monotone. 

We begin by establishing in Section \ref{section:comb-multilin-subadd} that the multilinear extension of a $\sigma$-subadditive set function also has subadditive properties (\Cref{lemma:subadditive-mle}). Next, in Section \ref{section:ef-cake-subadd} we prove that contiguous envy-free cake divisions always exist under $\sigma$-subadditive and globally nonnegative ($f_i([0,1] \geq 0$) valuations. Note that, under such valuations, certain sub-intervals can have negative values, i.e., parts of the cake can be burnt. Hence, it is not a given that the hungry condition holds, and one can apply \Cref{theorem:cake-sperner-ef} to obtain the existence of envy-free cake divisions. In fact, envy-free divisions are known to {\it not} exist for burnt cakes in general~\cite{avvakumov2021envy}. We overcome these barriers and prove that envy-free cake divisions do exist when the cake valuations $f_i$ are $\sigma$-subadditive and globally nonnegative . 

In fact, our envy-freeness guarantee holds more generally, for valuations $f_i$ that are $c\sigma$-subadditive (see Section \ref{section:prelims} for a definition). We obtain the result via an application of Sperner's lemma (\Cref{theorem:cake-sperner-ef}). 

Section \ref{section:ef-three-subadditive} builds on the cake division result for the indivisible-items context. Here, we prove that $\EFt$ allocations exist under valuations $v_i$ that are $\sigma$-subadditive and globally nonnegative ($v_i([m] \geq 0$). The valuations $v_i$s can be negative for certain subsets.

\subsection{Combinatorial to Multilinear Subadditivity}
\label{section:comb-multilin-subadd}

The theorem below considers a $\sigma$-subadditive (possibly negative and non-monotone) set function $f: 2^{[m]} \rightarrow \mathbb{R}$ and its multilinear extension $F:[0,1]^m \rightarrow \mathbb{R}$. The result asserts that for any $a,b \in [0,1]^m$, with $a+b \in [0,1]^m$, we have $F(a + b) \leq F(a) + F(b)$. That is, under the stated conditions, $F$ also behaves subadditively. Recall that for a $\sigma$-subadditive set function $f$, by definition, we have $f(S \cup T) \leq f(S) + f(T)$ for all {\it disjoint} subsets $S,T$. This result is formally stated and proved via an inductive argument below.

\begin{theorem}\label{lemma:subadditive-mle}
 Let $f: 2^{[m]} \rightarrow \mathbb{R}$ be a $\sigma$-subadditive set function and $F:[0,1]^m \rightarrow \mathbb{R}$ be its multilinear extension. If $a,b,c \in [0,1]^m$ satisfy $c = a+b$, then $F(c) \leq F(a) + F(b)$. 
\end{theorem}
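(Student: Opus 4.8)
The plan is to prove the inequality via a probabilistic coupling. Recall that the multilinear extension can be written as $F(x) = \mathbb{E}_{S \sim D_x}[f(S)]$, where $D_x$ is the product distribution on $2^{[m]}$ that places each item $j \in [m]$ into $S$ independently with probability $x_j$ (this is exactly the summation $\sum_{S} f(S) \prod_{j \in S} x_j \prod_{j \notin S}(1-x_j)$ defining $F$). Hence it suffices to construct, on a common probability space, three random subsets $S, T, T'$ of $[m]$ satisfying: (i) $S \sim D_{a+b}$; (ii) $T \sim D_{a}$ and $T' \sim D_{b}$; and (iii) $T \cap T' = \emptyset$ and $T \cup T' = S$ almost surely. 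Given such a coupling, $\sigma$-subadditivity of $f$ applied to the disjoint sets $T, T'$ gives the pointwise bound $f(S) = f(T \cup T') \leq f(T) + f(T')$, and taking expectations together with linearity yields $F(c) = F(a+b) = \mathbb{E}[f(S)] \leq \mathbb{E}[f(T)] + \mathbb{E}[f(T')] = F(a) + F(b)$.

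To build the coupling I would proceed coordinate by coordinate, independently across $j \in [m]$. For each item $j$, the hypothesis $c = a + b \in [0,1]^m$ (together with $a, b \geq 0$) gives $a_j + b_j \leq 1$, so we may define three mutually exclusive events for item $j$: put $j$ in $T$ only with probability $a_j$; put $j$ in $T'$ only with probability $b_j$; and put $j$ in neither with the remaining probability $1 - a_j - b_j \geq 0$. In every case declare $j \in S$ precisely when $j \in T \cup T'$. Since the coordinates are handled independently, $T$ is distributed as $D_a$, $T'$ as $D_b$, and $S$ as $D_{a+b}$; moreover $T \cap T' = \emptyset$ by construction and $T \cup T' = S$ by definition. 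This establishes properties (i)--(iii), and the theorem follows.

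I expect the main point to be conceptual rather than computational: the disjointness requirement built into the definition of $\sigma$-subadditivity is the only potential obstruction, and the key observation is that the constraint $a_j + b_j \leq 1$ is exactly what makes it feasible to split the ``probability mass'' of each item $j$ between the two samples $T$ and $T'$ without overlap. Once the coupling is in place, verifying the marginals is just the standard fact that a coordinatewise-independent construction yields a product distribution, and the conclusion is one line. (A purely algebraic alternative would be induction on the number of fractional coordinates of $a$ and $b$ using the multilinearity identity $F(x) = x_m F(x_{-m}, 1) + (1 - x_m) F(x_{-m}, 0)$, reducing to the case $m=1$ or to subadditivity on genuine subsets, but the coupling argument is shorter and avoids bookkeeping.)
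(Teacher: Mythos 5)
Your proof is correct, and it takes a genuinely different route from the paper. The paper argues algebraically by induction on $k = |\mathrm{supp}(a) \cap \mathrm{supp}(b)|$: the base case (disjoint supports) splits the defining sum for $F(c)$ over pairs of disjoint sets $A \subseteq \mathrm{supp}(a)$, $B \subseteq \mathrm{supp}(b)$ and applies $f(A \cup B) \leq f(A) + f(B)$ termwise, and the inductive step isolates a coordinate $j^*$ in the support intersection, decomposes $F$ at that coordinate, and applies the induction hypothesis to the three resulting pairs of projected vectors. Your coupling subsumes both steps at once: by making the events ``$j \in T$'' and ``$j \in T'$'' mutually exclusive at each coordinate (feasible precisely because $a_j + b_j = c_j \leq 1$), you get disjoint samples $T \sim D_a$, $T' \sim D_{b}$ with $T \cup T' \sim D_{a+b}$, and the pointwise bound $f(T \cup T') \leq f(T) + f(T')$ integrates to the claim. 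The one subtlety worth noting explicitly is that $T$ and $T'$ are \emph{not} independent of each other under this coupling (they are negatively associated coordinatewise), but this is harmless since $\mathbb{E}[f(T)]$ and $\mathbb{E}[f(T')]$ depend only on the marginal laws of $T$ and $T'$, which are exactly $D_a$ and $D_b$. Your argument is shorter, avoids the support bookkeeping, and makes transparent why the constraint $a + b \in [0,1]^m$ is exactly the right hypothesis; the paper's inductive argument, in exchange, stays entirely within elementary algebraic manipulation of the multilinear form and requires no probabilistic language. Your parenthetical alternative (induction on fractional coordinates via the identity $F(x) = x_m F(x_{-m},1) + (1-x_m)F(x_{-m},0)$) is in fact close in spirit to what the paper's induction step does.
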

\begin{proof}
	Define $\mathrm{supp}(a) \coloneqq \{j \in [m] \mid a_j > 0\}$ and $\mathrm{supp}(b) \coloneqq \{j \in [m] \mid b_j > 0\}$ to be the set of nonnegative components of $a$ and $b$, respectively. We will establish the desired inequality $F(c) \leq F(a) + F(b)$ for all $a,b,c \in [0,1]^m$, with $c = a+b$, via a proof by induction. Specifically, we write $\mathcal{P}(k)$ to denote that proposition that the desired inequality holds for any $a, b, c \in [0,1]^m$ satisfying $c = a + b$ and the condition that $|\mathrm{supp}(a) \cap \mathrm{supp}(b)| = k$. Inducting on $k$, we will prove that the proposition $\mathcal{P}(k)$ holds for all $k \in \{0,1,\ldots, m\}$. 
	
We will first prove the base case, i.e., show that $\mathcal{P}(0)$ holds. Then, for induction, we will show that for any $1 \leq k \leq m$, if the proposition $\mathcal{P}(k-1)$ is true, then so is $\mathcal{P}(k)$. Together, these steps will complete the proof.
	
\paragraph{Base case:} $\mathcal{P}(k)$ for $k=0$. Since $k=0$, it holds that $\mathrm{supp}(a) \cap \mathrm{supp}(b) = \emptyset$. Hence, the set $[m]$ is partitioned by the following three pairwise disjoint subsets: $\mathrm{supp}(a) , \mathrm{supp}(b)$, and $R \coloneqq [m] \setminus \left( \mathrm{supp}(a) \cup \mathrm{supp}(b) \right)$. To prove $F(c) \leq F(a) + F(b)$ for this case, we begin by noting that, by definition: 
\begin{align}
F(c)  = \sum_{S\subseteq [m]} f(S)  \ \prod\limits_{j \in S} c_j  \  \prod\limits_{j \in [m] \setminus S} (1-c_j) \label{ineq:defnMult}
\end{align}

Since $[m]$ is partitioned by subsets $\mathrm{supp}(a)$, $\mathrm{supp}(b)$, and $R$, the outer summation over $S \subseteq [m]$ in equation (\ref{ineq:defnMult}) can be replaced with three nested summations over subsets $A \subseteq \mathrm{supp}(a)$, $B \subseteq \mathrm{supp}(b)$, and $C \subseteq R$. Hence, 
\begin{align}
F(c) & = \sum_{S\subseteq [m]} f(S) \ \prod\limits_{j \in S} c_j \ \prod\limits_{j \in [m] \setminus S} (1-c_j) \nonumber \\
& = \sum_{A\subseteq \mathrm{supp}(a)} \sum_{B\subseteq \mathrm{supp}(b)} \sum_{C\subseteq R} f(A \cup B \cup C) \  \prod\limits_{j \in A \cup B \cup C} c_j \  \prod\limits_{j \in [m] \setminus (A \cup B \cup C)} (1-c_j) \label{equation:subadditivity-lemma-multilinear-extension}
\end{align}
Using again the fact that $[m]$ is partitioned into $\mathrm{supp}(a)$, $\mathrm{supp}(b)$, and $R$, we get that the $[m] \setminus (A \cup B \cup C)$ is itself the union of three disjoint sets: $\mathrm{supp}(a) \setminus A$, $\mathrm{supp}(b) \setminus B$, and $R \setminus C$. Using this observation, we expand the term in the right-hand-side summand of equation (\ref{equation:subadditivity-lemma-multilinear-extension}) as follows
\begin{align}
& f(A \cup B \cup C) \  \prod\limits_{j \in A \cup B \cup C} c_j \  \prod\limits_{j \in [m] \setminus (A \cup B \cup C)} (1-c_j)  \nonumber \\
& \ \ = f(A \cup B \cup C) \  \prod\limits_{j \in A \cup B \cup C} c_j  \  \prod\limits_{j \in \mathrm{supp}(a) \setminus A} (1-c_j) \  \prod\limits_{j \in \mathrm{supp}(b) \setminus B} (1-c_j) \  \prod\limits_{j \in R \setminus C} (1-c_j) \label{ineq:interimTerm}
\end{align}
Further, $c_j = 0$ for each $j \in R$. This follows from the fact that $c=a+b$ and $R = [m] \setminus \left( \mathrm{supp}(a) \cup \mathrm{supp}(b) \right)$. Hence, all the terms where $C \neq \emptyset$ have a zero contribution in the right-hand-side summation of equation (\ref{equation:subadditivity-lemma-multilinear-extension}). Equivalently, the terms---given in equation (\ref{ineq:interimTerm})---that have a nonzero contribution in the summation must have $C = \emptyset$ and, hence, are of the form  
\begin{align}
& f(A \cup B)  \  \prod\limits_{j \in A \cup B} c_j \ \prod\limits_{j \in \mathrm{supp}(a) \setminus A} (1-c_j) \  \prod\limits_{j \in \mathrm{supp}(b) \setminus B} (1-c_j) \  \prod\limits_{j \in R} (1-c_j) \nonumber \\
& = f(A \cup B) \ \prod\limits_{j \in A \cup B} c_j \  \prod\limits_{j \in \mathrm{supp}(a) \setminus A} (1-c_j) \  \prod\limits_{j \in \mathrm{supp}(b) \setminus B} (1-c_j)  \label{equation:subadditivity-lemma-multilinear-extension-simplified}
\end{align}
Here, the last equality holds since $\prod\limits_{j \in R} (1-c_j) = 1$; recall that $c_j = 0$ for all $j \in R$. 

Next, note that sets $A$ and $B$ are always disjoint, since $A \subseteq \mathrm{supp}(a)$, $B \subseteq \mathrm{supp}(b)$, and $\mathrm{supp}(a) \cap \mathrm{supp}(b) = \emptyset$. Hence, the $\sigma$-subadditivity of $f$ gives us $f(A \cup B) \leq f(A) + f(B)$. Using this inequality along with equations (\ref{equation:subadditivity-lemma-multilinear-extension}) and (\ref{equation:subadditivity-lemma-multilinear-extension-simplified}), we obtain 
\begin{align}
F(c) & \leq \sum_{A\subseteq \mathrm{supp}(a)} \sum_{B\subseteq \mathrm{supp}(b)}  (f(A) +  f(B)) \  \prod\limits_{j \in A \cup B} c_j \  \prod\limits_{j \in \mathrm{supp}(a) \setminus A} (1-c_j) \  \prod\limits_{j \in \mathrm{supp}(b) \setminus B} (1-c_j) \nonumber \\
& = \sum_{A\subseteq \mathrm{supp}(a)}  f(A) \  \prod\limits_{j \in A} c_j \  \prod\limits_{j \in \mathrm{supp}(a) \setminus A} (1-c_j) \  \left( \sum_{B\subseteq \mathrm{supp}(b)} \prod\limits_{j \in B} c_j \  \prod\limits_{j \in \mathrm{supp}(b) \setminus B} (1-c_j) \right) \nonumber \\
& \ \ \ \ \ \ \ + \sum_{B\subseteq \mathrm{supp}(b)} f(B) \  \prod\limits_{j \in B} c_j \  \prod\limits_{j \in \mathrm{supp}(b) \setminus B} (1-c_j) \left( \sum_{A\subseteq \mathrm{supp}(a)} \prod\limits_{j \in A} c_j \  \prod\limits_{j \in \mathrm{supp}(a) \setminus A} (1-c_j) \right) \label{ineq:paran}
\end{align}
We will next simplify the two summands in the above equation. Towards this, note that, in the current case, $c_i = a_i$ for all $i \in \mathrm{supp}(a)$ and $c_i = b_i$ for all $i \in \mathrm{supp}(b)$. Further, the sums $\sum_{A\subseteq \mathrm{supp}(a)} \prod\limits_{j \in A} a_j  \ \prod\limits_{j \in \mathrm{supp}(a) \setminus A} (1-a_j)$ and $\sum_{B\subseteq \mathrm{supp}(b)} \prod\limits_{j \in B} b_j \ \prod\limits_{j \in \mathrm{supp}(b) \setminus B} (1-b_j)$ are equal to $1$, since the sums correspond to the total probability of drawing a set from the product distributions induced by $a$ and $b$, respectively. These observations imply that both the parenthesized terms in equation (\ref{ineq:paran}) are equal to $1$. Hence, the equation reduces to 
 \begin{align*}
F(c) & \leq \sum_{A\subseteq \mathrm{supp}(a)}  f(A) \ \prod\limits_{j \in A} a_j \ \prod\limits_{j \in \mathrm{supp}(a) \setminus A} (1-a_j) \ \ \ + \ \ \ \sum_{B\subseteq \mathrm{supp}(b)} f(B) \ \prod\limits_{j \in B} b_j \ \prod\limits_{j \in \mathrm{supp}(b) \setminus B} (1-b_j)\\
& = F(a) + F(b).
\end{align*}
Here, the last equality follows from the definitions of $F(a)$ and $F(b)$. This completes the proof of the base case.

\noindent
\paragraph{Induction step:} $\mathcal{P}(k-1) \implies \mathcal{P}(k)$ for $1\leq k \leq m$. To prove $\mathcal{P}(k)$ we fix any $a,b,c \in [0,1]^m$ with the properties that $c = b+a$ and $k = |\mathrm{supp}(a) \cap \mathrm{supp}(b)| \geq 1$. Since $k \geq 1$, we know there exists a $j^* \in [m]$ such that $a_{j^*}, b_{j^*} > 0$. Define sets $S_a \coloneqq \mathrm{supp}(a) \setminus \{j^* \}$ and $S_b  \coloneqq \mathrm{supp}(b) \setminus \{j^* \}$.

Additionally, let $a^{-}, b^{-}, c^- \in [0,1]^m$ be the projected vectors obtained by setting the $j^*$th component to zero: set $a^{-}_k = a_k$,  $b^{-}_k = b_k$, and $c^{-}_k = c_k$, for all $k \in [m]\setminus \{j^*\}$, and set $a^-_{j^*} = b^-_{j^*} = c^-_{j^*} = 0$. Similarly, define $a^{+}, b^{+}, c^+ \in [0,1]^m$ to be the vectors obtained by setting the $j^*$th component to one: set $a^{+}_k = a_k$,  $b^{+}_k = b_k$, and $c^{+}_k = c_k$, for all $k \in [m]\setminus \{j^*\}$, and set $a^+_{j^*} = b^+_{j^*} = c^+_{j^*} = 1$. 
    
The definition of $F$ gives us 
    \begin{align*}
        F(a) & = a_{j^*} \ F(a^+) \ \ + \ \ (1-a_{j^*}) \ F(a^-) \\
        & = a_{j^*}  \  F(a^+) + (c_{j^*} - a_{j^*})  F(a^-) + (1-c_{j^*}) F(a^-) \numberthis \label{equation:MLE-subadditive-one}
    \end{align*}
 Similarly, 
    \begin{align*}
        F(b) & = b_{j^*} \  F(b^+) + (1-b_{j^*})  F(b^-)\\
        & = b_{j^*}  \ F(b^+) + (c_{j^*}-b_{j^*})  F(b^-) + (1-c_{j^*}) F(b^-)\\
        & = (c_{j^*} - a_{j^*})  F(b^+) + a_{j^*} \  F(b^-) + (1-c_{j^*}) F(b^-) \numberthis \label{equation:MLE-subadditive-two}
    \end{align*}
The last equality follows from $c_{j^*} = b_{j^*} + a_{j^*}$. Summing equations~(\ref{equation:MLE-subadditive-one}) and~(\ref{equation:MLE-subadditive-two}) gives us
    \begin{align}
        F(a) + F(b) = a_{j^*} \left(F(a^+) + F(b^-)\right) + (c_{j^*} - a_{j^*}) \left(F(a^-) + F(b^+)\right) + (1-c_{j^*}) \left(F(a^-) + F(b^-)\right) \label{equation:MLE-subadditive-LHS}
    \end{align}
Further, we can write $F(c)$ 
   	\begin{align}
  		F(c) = c_{j^*} \ F(c^+) + (1-c_{j^*}) F(c^-) = a_{j^*} F(c^+) + (c_{j^*} - a_{j^*}) F(c^+) + (1-c_{j^*}) F(c^-) \label{equation:MLE-subadditive-RHS}
   	\end{align}
   	We will now prove the following three inequalities: 
	
\noindent	
(i) $F(c^+) \leq F(a^+) + F(b^-)$, 

\noindent	
(ii) $F(c^+) \leq F(a^-) + F(b^+)$, and 

\noindent	
(iii) $F(c^-) \leq F(a^-) + F(b^-)$. 

Note that, the desired inequality ($F(c) \leq F(a) + F(b)$) follows from (i), (ii), and (iii) along with a term-by-term comparison between equations (\ref{equation:MLE-subadditive-LHS}) and (\ref{equation:MLE-subadditive-RHS}). 
To establish inequality (i), note that $c^+ = a^+ + b^-$ and $|\mathrm{supp}(a^+) \cap \mathrm{supp}(b^-)| = k - 1$. Hence, the induction hypothesis (i.e., proposition $\mathcal{P}(k-1)$) implies that $F(c^+) \leq F(a^+) + F(b^-)$. Inequalities (ii) and (iii) follow from similar arguments. This completes the induction step and the theorem stands proved.
\end{proof}

We also include an elegant, coupling-based argument for \Cref{lemma:subadditive-mle} that was suggested by an anonymous reviewer. For each element $e \in [m]$, sample a value $u_e$ uniformly from $[0,1]$ and independently from all other samples. Initialize sets $A,B = \emptyset$. Put the element $e$ in the set $A$ if $u_e \leq a_e$ and put $e$ in the set $B$ if $a_e < u_e \leq a_e + b_e$. Then, by construction, the random sets $A$ and $B$ are always disjoint, and $A$ is distributed according to the \emph{multilinear distribution}, where each element $e$ is placed into $A$ independently with probability $a_e$. Similarly, $B$ is distributed according to the multilinear distribution corresponding to $b$. Moreover, $A \cup B$ is distributed according to the multilinear distribution corresponding to $c=a+b$. By $\sigma$-subadditivity, we have $f(A)+f(B) \geq f(A \cup B)$, taking expectation over both sides gives us $F(a)+F(b) \geq F(c)$.

\subsection{Envy-Free Division of Burnt Cake with $\sigma$-Subadditive Valuations} 
\label{section:ef-cake-subadd}
This section establishes the existence of envy-free cake divisions under valuations that are $\sigma$-subadditive and globally nonnegative.
 
Recall that a cake valuation $f$ is said to be $\sigma$-subadditive if $f( I \cup J) \leq f(I) + f(J)$ for all disjoint intervals $I, J \subseteq [0,1]$. More generally, $f$ is said to be $c\sigma$-subadditive if the inequality holds for all contiguous intervals $I = [x,y]$ and $J = [y,z]$, with $0 \leq x \leq y \leq z \leq 1$. Further, $f$ is globally nonnegative if $f([0,1]) \geq 0$; indeed, parts of the cake can still have negative value (can be burnt) under a globally nonnegative $f$. Also, recall that, throughout the paper, we will assume that the agents' valuations are normalized, $f_i(\emptyset) = 0$.
  
The theorem below is established by applying Sperner's lemma along the lines of \cite{Su1999rental}. However, the proof here is distinct in its utilization of $c\sigma$-subadditivity.  

\begin{theorem}\label{theorem:burnt-cake-envy-free}
Let $\mathcal{C} = \langle [n], \{f_i\}_{i=1}^n \rangle$ be a cake division instance with continuous, $c\sigma$-subadditive, and globally nonnegative valuations. Then, $\mathcal{C}$  necessarily admits a contiguous envy-free division $\mathcal{I} = (I_1, \ldots, I_n)$.
\end{theorem}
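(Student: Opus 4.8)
The plan is to reduce to Su's envy-free cake division theorem (\Cref{theorem:cake-sperner-ef}) via a perturbation argument, handling the obstacle that the hungry condition may fail. The key structural fact I would extract from $c\sigma$-subadditivity plus global nonnegativity is a \emph{length-monotonicity of smallness}: if we split $[0,1]$ into contiguous pieces $I_1, \ldots, I_n$, then $\sum_i f([0,1])$-type telescoping gives $f([0,1]) \leq \sum_i f(I_i)$ only for the single valuation along a fixed partition; more usefully, for each agent $i$ and each contiguous partition, at least one piece has value at least $f_i([0,1])/n \geq 0$. The real point, though, is that we cannot directly apply \Cref{theorem:cake-sperner-ef} because $f_i$ need not be strictly positive on small intervals (the cake is burnt in places), so I would instead run the Sperner-labeling machinery directly on a sequence of triangulations, as in Su's proof, but choose the labeling rule to be robust to zero/negative values.

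\textbf{The argument.} Consider the simplex $\Delta^{n-1} = \{ x \in \mathbb{R}_{\geq 0}^n : \sum_i x_i = 1\}$ of contiguous cake divisions, where $x_i = \len(I_i)$ and pieces are laid left to right. Take the barycentric subdivisions $\mathcal{T}_t$ with mesh $\to 0$. For each vertex $x$ of a triangulation, owned (in Su's scheme) by a particular agent $a(x)$, assign the label $\lambda(x) = $ an index $j$ maximizing $f_{a(x)}(I_j)$ among all nonempty pieces $I_j$ of the division $x$ — breaking ties and handling empty pieces so that the Sperner boundary condition holds (an agent on the facet $x_j = 0$, i.e. with $I_j = \emptyset$, has a nonempty complement whose pieces sum in value to at least $f_{a(x)}([0,1]) \geq 0$ by $c\sigma$-subadditivity, so some nonempty piece is weakly preferred to the empty one, and we can label with that piece's index, which is $\neq j$). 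This is exactly where global nonnegativity is used: it guarantees a nonempty piece is always at least as good as $\emptyset$, so the "best nonempty piece" labeling is well-defined and Sperner-valid even without the hungry condition. Sperner's lemma (\Cref{theorem:sperners-lemma}) then yields a fully-labeled elementary simplex in each $\mathcal{T}_t$; by compactness, passing to a subsequence, its vertices converge to a common point $x^*$, i.e. a contiguous division $\mathcal{I}^* = (I_1^*, \ldots, I_n^*)$ that is simultaneously the limit of divisions where, for each $j \in [n]$, some agent weakly prefers piece $j$ (among nonempty pieces) to all other nonempty pieces. Using continuity of the $f_i$, and assigning piece $j$ to the agent whose label was $j$, we get $f_i(I_i^*) \geq f_i(I_j^*)$ for every $j$ with $I_j^* \neq \emptyset$. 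For $j$ with $I_j^* = \emptyset$, global nonnegativity and $c\sigma$-subadditivity again give $f_i(I_i^*) \geq 0 = f_i(\emptyset) = f_i(I_j^*)$ — here one needs that $I_i^*$, being weakly best among nonempty pieces and with the nonempty pieces' values summing to at least $f_i([0,1]) \geq 0$, is itself nonnegative. Hence $\mathcal{I}^*$ is envy-free.

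\textbf{Main obstacle.} The delicate point is the last step: showing $f_i(I_i^*) \geq 0$ so that envy toward agents holding empty (or burnt) pieces is genuinely resolved. A piece can have negative value, and the limiting division may assign some agent a piece she values negatively; I must rule this out. The fix is the $c\sigma$-subadditivity bound applied along the contiguous cut points of $\mathcal{I}^*$: iterating $f_i([0,1]) = f_i(I_1^* \cup \cdots \cup I_n^*) \leq \sum_{j} f_i(I_j^*)$, and since $f_i([0,1]) \geq 0$, the average of $f_i(I_j^*)$ over the (at most $n$) nonempty pieces is $\geq 0$, so the \emph{maximum}-valued nonempty piece has value $\geq 0$; as agent $i$'s label was the argmax among nonempty pieces, $f_i(I_i^*) \geq 0$. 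One must be careful that the "best nonempty piece" relation is preserved under the limit — this follows from continuity of each $f_i$ and the finiteness of the index set, taking a further subsequence on which the ownership/label pattern of the fully-labeled simplices is constant. Assembling these pieces carefully, with attention to the degenerate partitions, completes the proof.
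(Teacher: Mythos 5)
Your proposal is correct and follows essentially the same route as the paper's proof: a Sperner labeling on the length simplex by ``most-preferred nonempty piece,'' with the telescoping bound $0 \le f_i([0,1]) \le \sum_{j} f_i(I_j)$ from $c\sigma$-subadditivity and global nonnegativity standing in for the hungry condition, followed by the standard refinement/compactness/continuity limit. The only immaterial difference is that the paper takes the argmax over \emph{all} pieces with a tie-break toward nonempty ones (so envy toward empty pieces is handled automatically at the limit), whereas you restrict the argmax to nonempty pieces and then separately verify that the chosen piece has nonnegative value.
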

\begin{proof}
We will establish the theorem by applying Sperner's lemma. Consider the $(n-1)$-simplex $\Delta^{n-1} \coloneqq \{ x \in \mathbb{R}_{\geq 0}^n \mid \sum_{i=1}^n x_i = 1\}$. Note that $\Delta^{n-1}  = \mathrm{conv}(\{e_1, \ldots, e_n\})$, where $e_1, \ldots, e_n$ are the standard basis vectors of $\mathbb{R}^n$. Further, each point $x = (x_1, \ldots, x_n) \in \Delta^{n-1}$ induces a division of the cake, $[0,1]$, into $n$ intervals, with $[\sum_{j=0}^{k-1} x_j, \sum_{j=0}^{k} x_j]$ as the $k${th} interval from left; here, by convention, $x_0 =0$. We will write $\calI^x=(I^x_1, \ldots, I^x_n)$ to denote the contiguous cake division induced by $x$, i.e., interval $I^x_k = [\sum_{j=0}^{k-1} x_j, \sum_{j=0}^{k} x_j]$ and $\len(I^x_k) = x_k$. 

We denote the $n$ facets of $\Delta^{n-1}$ as $F_1,\ldots, F_n$ and index them such that $i$th facet $  F_i \coloneqq \mathrm{conv}(\{e_1, \ldots, e_{i-1}, \allowbreak e_{i+1}, \ldots, e_n \}) = \{x \in \Delta^{n-1} : x_i = 0\}$. As noted in \cite{Su1999rental}, by repeatedly performing the Barycentric subdivision, the simplex $\Delta^{n-1}$ can be triangulated to obtain a triangulation $\mathcal{T}$ with arbitrary small granularity. In addition, the Barycentric subdivision ensures that each vertex $v \in \mathrm{vert}(\mathcal{T})$ can be mapped to an agent $k \in [n]$, which is referred to as the \emph{owner} of the vertex $v$. In particular, this ownership mapping satisfies the following property: in each {elementary simplex} of the triangulation $\mathcal{T}$, the $n$ vertices are owned by $n$ different agents. That is, each agent owns exactly one vertex in every elementary simplex of $\mathcal{T}$~\cite{Su1999rental}. 
 
We next define a labeling $\lambda: \mathrm{vert}(\mathcal{T}) \mapsto [n]$ and then apply the Sperner's lemma. Towards this, we consider each vertex $y \in \mathrm{vert}(\mathcal{T})$ and the agent $i \in [n]$ who owns $y$. Then, we set $\lambda(y)$ as the index of the interval most-valued by agent $i$ in the induced division $\calI^{y}=(I^{y}_1,\ldots, I^{y}_n)$. Specifically, $\lambda(y) \in \argmax_{k \in [n]} \ \big\{f_i(I^{y}_k) \big\}$. In case of a tie (i.e., if $\left| \argmax_{k \in [n]} \ \big\{f_i(I^{y}_k) \big\} \right| > 1$), the tiebreak is performed in favor of a nonempty interval, i.e., $\lambda(y) = \widehat{k}$ with $\widehat{k} \in \argmax_{k \in [n]} \ \big\{f_i(I^{y}_k) \big\}$ and $y_{\widehat{k}} = \len \left(I^y_{\widehat{k}} \right) > 0$. We will next show that, under $\sigma$-subadditive valuations, one can always execute the tie-breaking rule and the resulting labeling $\lambda$ is in fact a Sperner labeling. 

Towards this, we write $Z_y \coloneqq \{ j \in [m] \mid y_j = 0\}$ to denote the set of zero component of vertex $y$. Note that, for each index $j \in Z_y$, the $j$th interval from the left in the induced division (i.e., $I^y_j$) is empty: $\len(I^y_j) = y_j = 0$. Further, $Z_y$ is the collection of indices of all the facets that contain $y$; in particular, $y \in \cap_{j \in Z_y} F_j$. 
Also, $[n] \setminus Z_y$ corresponds to the set of nonzero components in $y$. We will show (in the claim below) that $ \left( [n] \setminus Z_y \right) \cap \left(\argmax_{k \in [n]} \ \big\{f_i(I^{y}_k) \big\} \right) \neq \emptyset$. This intersection implies that stated tie-breaking rule can always be executed. 

\begin{claim}
\label{claim:stronghungry}
For every vertex $y$, we have $\left( [n] \setminus Z_y \right) \cap \left(\argmax_{k \in [n]} \ \big\{f_i(I^{y}_k) \big\} \right) \neq \emptyset$.
\end{claim}
\begin{proof}
For the agent $i \in [n]$ who owns $y$ we have 
\begin{align*}
0 & \leq f_i([0,1]) \tag{$f_i$ is globally nonnegative}\\
& = f_i(\cup_{k=1}^n I^y_k) \tag{$\calI^y= (I^y_1, \ldots, I^y_n)$ is a cake division}\\
& \leq \sum_{k=1}^n f_i(I^y_k) \tag{$f_i$ is $c\sigma$-subadditive}\\
& = \sum_{k \in [n]\setminus Z_{y}} f_i(I^y_k) \tag{$f_i(I^y_j) = f_i(\emptyset) = 0$ for all $j \in Z_y$}\\
& \leq |[n] \setminus  Z_{y}|  \ \max_{k \in [n]\setminus Z_{y}} f_i(I^y_k).
\end{align*}
Hence, there always exists an index $k^* \in [n] \setminus Z_y$ with the property that $f_i(I^y_{k^*}) \geq 0 = f_i(\emptyset)$. That is, we always have a nonempty interval $I^y_{k^*}$ of value at least as much as an empty one. This observation implies that $\left( [n] \setminus Z_y \right) \cap \left(\argmax_{k \in [n]} \ \big\{f_i(I^{y}_k) \big\} \right) \neq \emptyset$ and completes the proof of the claim. 
\end{proof}

Claim \ref{claim:stronghungry} ensures that, while defining the labeling $\lambda$, the tie-breaking rule (in favor of non-empty intervals) can always be implemented. In particular, $\lambda$ is well-defined. Moreover, the resulting property that, for every vertex $y$, the label $\lambda(y) \in [n] \setminus Z_y$ implies that $\lambda$ is a Sperner labeling. Therefore, via \Cref{theorem:brouwer-fpt}, we obtain that exists a fully-labeled elementary simplex $L$ in the triangulation $\mathcal{T}$.
 
Finally, using a limiting argument and the continuity of the cake valuations $f_i$, we will prove the existence of a point $x^* \in \Delta^{n-1}$ with the property that the induced division $\calI^{x^*}$ is envy-free. As established above, in any Barycentric triangulation $\mathcal{T}$ there exists a fully-labeled elementary simplex $L$. By performing finer and finer the Barycentric subdivisions of $\Delta^{n-1}$,\footnote{In particular, given a triangulation $\mathcal{T}$, to obtain a finer triangulation, we can further triangulate each elementary simplex of $\mathcal{T}$ via a Baycentric subdivision.} we can obtain an infinite sequence $L_1, L_2, \ldots$ of fully labeled elementary simplices that converge at a point $x^* \in \Delta^{n-1}$. We will show that $\calI^{x^*}$ is an envy-free cake division. Towards this, note that there are $n!$ possible fully labeled elementary simplices, based on the number of ways to label the $n$ vertices of an $(n-1)$-simplex with distinct labels. Hence, applying the Bolzano-Weierstrass theorem, we get that there must exist a converging (to $x^*$) subsequence $L_{i_1}, L_{i_2}, \ldots$ such that the labeling of each $L_{i_j}$ is the same. Since the agents' valuations $f_i$ are continuous, we get that at the limit point $x^*$, all the agents will assign labels consistent with the one at $\mathcal{L}_{i_j}$s. In particular, we obtain that under the cake division $\calI^{x^*}$, all the agents will have distinct most-valued intervals. That is, the interval  chosen by each agent will be of value at least as much as any other interval in $\calI^{x^*}$. Therefore, the contiguous cake division $\calI^{x^*}$  is indeed envy-free. This completes the proof of the theorem. 
\end{proof}

\subsection{$\EFt$ Allocations Under $\sigma$-Subadditive Valuations}
\label{section:ef-three-subadditive}
This section addresses allocation of indivisible items under $\sigma$-subadditive valuations. Here, we establish that $\EFt$ allocations always exist for $\sigma$-subadditive and globally nonnegative valuations. Recall that a set function $v$ is said to be globally nonnegative if $v([m]) \geq 0$. 

The $\EFt$ existential guarantee is obtained by first constructing a cake division instance $\mathcal{C}$---via \textsc{CakeConstruction} (\Cref{algo:cake-reduction})---from the given discrete fair division instance $\mathcal{D}$. Then, we invoke Theorems \ref{lemma:subadditive-mle} and \ref{theorem:burnt-cake-envy-free} to argue that $\mathcal{C}$ admits an envy-free division $\mathcal{I}$. We will show in the proof of the following theorem that rounding $\calI$---via \textsc{CakeRounding} (\Cref{algo:cake-rounding})---gives us an $\EFt$ allocation $\calA$.  

\begin{theorem}[$\EFt$ for $\sigma$-Subadditive]\label{theorem:ef-three-subadditive}
Every discrete fair division instance $\langle [n], [m], \{ v_i \}_{i=1}^n \rangle$, with $\sigma$-subadditive, globally nonnegative, and normalized valuations, admits an $\EFt$ allocation $\calA$; in particular, under $\alloc = (A_1, \ldots, A_n)$, for all $i,j \in [n]$, there exist subsets $A'_i, A'_j$ with the properties that $v_i(A'_i) \geq v_i(A'_j)$ and $|A_i \triangle A'_i| + |A_j \triangle A'_j| \leq 3$.  
\end{theorem}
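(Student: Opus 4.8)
The plan is to combine the three ingredients the paper has assembled: \textsc{CakeConstruction} (\Cref{algo:cake-reduction}), the subadditivity of multilinear extensions (\Cref{lemma:subadditive-mle}), the envy-free cake division result for $c\sigma$-subadditive and globally nonnegative valuations (\Cref{theorem:burnt-cake-envy-free}), and finally the $\EF$ Rounding Lemma (\Cref{lemma:cake-rounding}). Concretely: starting from the given discrete instance $\mathcal{D} = \langle [n],[m],\{v_i\}_{i=1}^n\rangle$, run \textsc{CakeConstruction} to obtain the cake instance $\mathcal{C} = \langle [n], \{f_i\}_{i=1}^n\rangle$ with $f_i = V_i \circ b$, where $V_i$ is the multilinear extension of $v_i$ and $b$ is the (additive, continuous) interval-to-vector map.

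The core verification step is that each $f_i$ satisfies the hypotheses of \Cref{theorem:burnt-cake-envy-free}, namely that it is continuous, $c\sigma$-subadditive, and globally nonnegative. Continuity is immediate since $V_i$ and $b$ are both continuous (as noted in the construction). For global nonnegativity, observe that $b([0,1]) = (1,1,\ldots,1)$, so $f_i([0,1]) = V_i(\mathbf{1}) = v_i([m]) \geq 0$ by the globally nonnegative assumption on $v_i$. The key step is $c\sigma$-subadditivity: take two contiguous intervals $I = [x,y]$ and $J = [y,z]$. Since $b$ is additive on disjoint intervals, $b(I \cup J) = b(I) + b(J)$, and moreover $b(I), b(J), b(I\cup J)$ all lie in $[0,1]^m$ (each component is a fraction of an item). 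Then \Cref{lemma:subadditive-mle} — applied with $a = b(I)$, $b = b(J)$, $c = b(I\cup J)$ and the fact that $v_i$ is $\sigma$-subadditive, so $V_i$ is its multilinear extension — gives $V_i(b(I\cup J)) \leq V_i(b(I)) + V_i(b(J))$, i.e. $f_i(I \cup J) \leq f_i(I) + f_i(J)$. So $f_i$ is $c\sigma$-subadditive.

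With all hypotheses verified, \Cref{theorem:burnt-cake-envy-free} yields a contiguous envy-free division $\mathcal{I} = (I_1,\ldots,I_n)$ of $\mathcal{C}$, so $f_i(I_i) \geq f_i(I_j)$ for all $i,j$. Now feed $\mathcal{D}$, $\mathcal{C}$, and $\mathcal{I}$ into \textsc{CakeRounding} (\Cref{algo:cake-rounding}) and invoke the $\EF$ Rounding Lemma (\Cref{lemma:cake-rounding}) with $\alpha = 0$: the hypothesis $f_i(I_i) + 0 \geq f_i(I_j)$ holds, so the lemma returns an allocation $\mathcal{A} = (A_1,\ldots,A_n)$ such that for every pair $i,j$ there exist subsets $A_i', A_j'$ with $v_i(A_i') \geq v_i(A_j')$ and $|A_i \triangle A_i'| + |A_j \triangle A_j'| \leq 3$. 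This is exactly the $\EFt$ guarantee asserted, completing the proof.

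I do not expect any serious obstacle here, since the theorem is essentially a composition of three previously established results; the only genuine content is checking that the constructed cake valuations inherit $c\sigma$-subadditivity and global nonnegativity, and the former is precisely what \Cref{lemma:subadditive-mle} was proved for. The one point to be slightly careful about is that \Cref{theorem:burnt-cake-envy-free} requires $c\sigma$-subadditivity (the contiguous-intervals version), not the full $\sigma$-subadditivity — but this is the weaker condition, and contiguous intervals are disjoint in the paper's convention, so it follows a fortiori from what \Cref{lemma:subadditive-mle} delivers. It is also worth noting explicitly that this existence proof is non-constructive, as it relies on the limiting argument in \Cref{theorem:burnt-cake-envy-free}.
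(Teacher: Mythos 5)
Your proposal is correct and follows essentially the same route as the paper: construct the cake instance, verify continuity, global nonnegativity, and $c\sigma$-subadditivity of the $f_i$ via the additivity of $b$ and \Cref{lemma:subadditive-mle}, invoke \Cref{theorem:burnt-cake-envy-free}, and round via \Cref{lemma:cake-rounding} with $\alpha = 0$. No gaps.
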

\begin{proof} 
Write $\calD = \langle [n], [m], \{ v_i \}_{i=1}^n \rangle$ to denote the given fair division instance and let $\calC = \langle [n], \{ f_i \}_{i=1}^n \rangle$ be the cake division instance obtained by executing \textsc{CakeConstruction} (\Cref{algo:cake-reduction}) with $\calD$ as input. By construction, the cake valuations $f_i$ are continuous. We will next show that here $f_i$s are $c\sigma$-subadditive as well. 

Towards this, consider any two contiguous intervals $I_1, I_2 \subseteq [0,1]$; in particular, $I_1$ and $I_2$ are disjoint and their union $I \coloneqq  I_1 \cup I_2$ is itself an interval. Note that 
\begin{align}
f_i(I) = V_i \circ b (I) = V_i \circ b (I_1 \cup I_2) = V_i (b(I_1) + b(I_2)) \label{ineq:multi-fV}
\end{align}
Here, the last equality follows from the fact that the function $b$ (defined in \Cref{section:cake-reduction}) is additive. Write vector $a = b(I_1) \in [0,1]^m$ along with $b= b(I_2)$ and $c = b(I_1 \cup I_2)$. Since the intervals $I_1$ and $I_2$ are disjoint, vectors $a,b,c \in [0,1]^m$ satisfy $c = a+b$. Further, given that $V_i$ is the multilinear extension of $\sigma$-subadditive valuation $v_i$, Theorem \ref{lemma:subadditive-mle} holds for $V_i$ and we have $V_i(c) \leq V_i(a) + V_i(b)$. That is, $ V_i (b(I_1) + b(I_2)) \leq V_i (b(I_1)) + V_i (b(I_2))$. This inequality and equation (\ref{ineq:multi-fV}) give us 
\begin{align}
f_i(I) = V_i (b(I_1) + b(I_2))  \leq V_i (b(I_1)) + V_i (b(I_2)) = f_i(I_1) + f_i(I_2) \label{ineq:f-is-suba}
\end{align}
Equation (\ref{ineq:f-is-suba}) shows that the constructed cake valuations $f_i$s are $c\sigma$-subadditive. In addition, note that the global nonnegativity of $v_i$ implies that $f_i$ is globally nonnegative as well, $f_i([0,1]) = V_i(b([0,1])) = v_i([m]) \geq 0$. Hence, applying Theorem \ref{theorem:burnt-cake-envy-free}, we obtain that the constructed cake division instance $\calC = \langle [n], \{ f_i \}_{i=1}^n \rangle$ admits a contiguous envy-free division $\calI$.  

Finally, we invoke \textsc{CakeRounding} (\Cref{algo:cake-rounding}) over instances $\calD$ and $\calC$ along with the division $\calI$ to obtain allocation $\calA$. Lemma \ref{lemma:cake-rounding} (instantiated with $\alpha =0$) implies that, as desired, the allocation $\calA$ is $\EFt$ for the given discrete fair division instance $\calD$. This completes the proof of the theorem. 
\end{proof}

\section{Envy-Freeness Under Nonnegative Valuations} 
\label{section:fair-div-nn}
This section develops envy-freeness guarantees for valuations that are nonnegative (and not necessarily monotone). Specifically, we will focus on the indivisible-items and prove that 
 any discrete fair division instance, in which the agents' valuations are nonnegative admits an $\EFt$ allocation.

\subsection{$\EFt$ Under Nonnegative Valuations} 
\label{section:eft-nn} 
Our $\EFt$ existential result for nonnegative valuations is established via a constructive proof -- we prove that \Cref{algo:EFThree-nonnegative} (detailed below) necessarily finds such allocations. Indeed, the algorithm is not claimed to be polynomial time since it entails computing a contiguous envy-free division, which is known to be a \textsc{PPAD}-complete problem, even for a constant number of agents~\cite{gao2024hardness}. 

\Cref{algo:EFThree-nonnegative} first takes the input instance $\langle [n], [m], \{v_i\}_{i=1}^n \rangle$ and constructs valuations $v^{\varepsilon}_i(\cdot)$s. This affine transformation essentially ensures that we can invoke the hungry condition in the cake division instance constructed via \textsc{CakeConstruction} (\Cref{algo:cake-reduction}). We will also show, in the algorithm's analysis, that allocation $\calA$ obtained via  \textsc{CakeRounding} (\Cref{algo:cake-rounding}) is $\EFt$, not only under the modified valuations $v^{\varepsilon}_i(\cdot)$, but also under the given ones $v_i$. 

\floatname{algorithm}{Algorithm}
\begin{algorithm}[h]
\caption{\textsc{ConstructiveEFE3}} \label{algo:EFThree-nonnegative}
\begin{tabularx}{\textwidth}{X}
{\bf Input:} A discrete fair division instance $\mathcal{D} = \langle [n], [m], \{ v_i \}_{i=1}^n \rangle$. \\
{\bf Output:} An $\EFt$ allocation $\alloc = (A_1, \ldots, A_n)$.
\end{tabularx}
  \begin{algorithmic}[1]
  		\STATE Define $\delta \coloneqq \min\limits_{i\in[n]} \ \min\limits_{S, T \subseteq [m]} \ \Big\{v_i(S) - v_i(T) \mid v_i(S) - v_i(T) > 0 \Big\}$ and set  parameter $\varepsilon \coloneqq \frac{\delta}{2} >0$. \label{line:ef-three-delta-def}
  		\STATE For each agent $i \in [n]$, define $v_i^\varepsilon : 2^{m} \rightarrow \mathbb{R}_{\geq 0}$ as follows: $v_i^\varepsilon(S) = v_i(S) + \varepsilon$, for every (nonempty) subset $S \neq \emptyset$, and $v_i^\varepsilon(\emptyset) = v_i(\emptyset) = 0$. \label{line:ef-three-eps-function-def}
		\STATE With instance $\mathcal{D}^\varepsilon \coloneqq \langle [n], [m], \{ v^\varepsilon_i \}_{i=1}^n \rangle$ as input, invoke \Cref{algo:cake-reduction}: \\ $\langle [n], \{ f_i \}_{i=1}^n \rangle = \textsc{CakeConstruction} (\mathcal{D}^\varepsilon)$. \label{line:ef-three-cake-reduction}	
		\STATE Let $\mathcal{I} = (I_1, \ldots, I_n)$ be a contiguous envy-free cake division in the constructed instance $\mathcal{C}^\varepsilon \coloneqq \langle [n], \{ f_i \}_{i=1}^n \rangle$. \label{line:ef-divisible-alloc}  \COMMENT{We will show that such a division $\calI$ always exists.}
		\RETURN Allocation $\calA = \textsc{CakeRounding}( \mathcal{D}^\varepsilon, \mathcal{C}^\varepsilon, \mathcal{I})$.\label{line:ef-three-cake-rounding}
		\end{algorithmic}
\end{algorithm}

\begin{theorem}[$\EFt$ for Nonnegative]\label{theorem:ef-three-nonnegative}
Given as input any discrete fair division instance $\langle [n], [m], \{ v_i \}_{i=1}^n \rangle$, with nonnegative and normalized valuations,~\Cref{algo:EFThree-nonnegative} always returns an $\EFt$ allocation $\alloc = (A_1, \ldots, A_n)$. That is, under nonnegative and normalized valuations, there always exists an allocation $\alloc = (A_1, \ldots, A_n)$ in which, for all $i,j \in [n]$, there exist subsets $A'_i, A'_j$ with the properties that $v_i(A'_i) \geq v_i(A'_j)$ and $|A_i \triangle A'_i| + |A_j \triangle A'_j| \leq 3$. 
 \end{theorem}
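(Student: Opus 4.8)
The strategy follows the blueprint sketched around \Cref{algo:EFThree-nonnegative}: perturb the given valuations by a tiny additive $\varepsilon$ so that the cake instance produced by \textsc{CakeConstruction} satisfies the hungry condition, invoke Su's result (\Cref{theorem:cake-sperner-ef}) to obtain an envy-free cake division, round it back with \textsc{CakeRounding} (\Cref{lemma:cake-rounding}), and finally check that the perturbation did not destroy the fairness guarantee. First I would dispose of the degenerate case in which the collection $\{v_i(S)-v_i(T): v_i(S)>v_i(T)\}$ (over all $i,S,T$) defining $\delta$ is empty, i.e., every $v_i$ is identically zero (the valuations are normalized): then every allocation is envy-free, hence $\EFt$. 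Otherwise $\delta$ is the minimum of a finite nonempty set of positive reals, so $0<\delta<\infty$ and $\varepsilon=\delta/2>0$; and one checks at once that each $v^\varepsilon_i$ is normalized while $v^\varepsilon_i(S)\ge\varepsilon>0$ for every nonempty $S$ (using $v_i(S)\ge 0$), so in particular $v^\varepsilon_i$ is nonnegative.

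The first substantive step is to verify that the constructed cake valuations $f_i=V^\varepsilon_i\circ b$ satisfy the hungry condition. For a nonempty interval $I$, the vector $b(I)\in[0,1]^m$ has a strictly positive coordinate, so $f_i(I)=V^\varepsilon_i(b(I))$ is a convex combination of the values $\{v^\varepsilon_i(S)\}_S$ in which $\emptyset$ carries weight $\prod_{j}(1-b_j(I))<1$; since every $v^\varepsilon_i(S)\ge 0$ and $v^\varepsilon_i(S)\ge\varepsilon$ for $S\ne\emptyset$, this yields $f_i(I)\ge \varepsilon\big(1-\prod_j(1-b_j(I))\big)>0=f_i(\emptyset)$. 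As the $f_i$ are also continuous, \Cref{theorem:cake-sperner-ef} applies to $\calC^\varepsilon$ and produces the contiguous envy-free division $\calI=(I_1,\dots,I_n)$ used by the algorithm. Envy-freeness gives $f_i(I_i)\ge f_i(I_j)$ for all $i,j$, so \Cref{lemma:cake-rounding} applies with $\alpha=0$ to $\calD^\varepsilon,\calC^\varepsilon,\calI$ and produces the allocation $\calA=(A_1,\dots,A_n)$ returned by \textsc{CakeRounding}, such that for each ordered pair $(i,j)$ there are sets $A'_i,A'_j$ with $v^\varepsilon_i(A'_i)\ge v^\varepsilon_i(A'_j)$ and $|A_i\triangle A'_i|+|A_j\triangle A'_j|\le 3$.

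It then remains to transfer this $\EFt$ certificate from $v^\varepsilon_i$ to $v_i$ with the same witnesses. Since $v^\varepsilon_i$ and $v_i$ differ only by the additive constant $\varepsilon$ on nonempty sets and agree on $\emptyset$, there are four cases. If $A'_i,A'_j$ are both nonempty or both empty, the $\varepsilon$'s cancel and $v_i(A'_i)\ge v_i(A'_j)$ is immediate. The case $A'_i=\emptyset$, $A'_j\ne\emptyset$ cannot occur, as it would force $0=v^\varepsilon_i(A'_i)\ge v^\varepsilon_i(A'_j)\ge\varepsilon>0$. In the remaining case $A'_i\ne\emptyset$, $A'_j=\emptyset$, the inequality reads $v_i(A'_i)+\varepsilon\ge 0$, and nonnegativity of $v_i$ directly gives $v_i(A'_i)\ge 0=v_i(A'_j)$ (equivalently: if $v_i(A'_i)<v_i(\emptyset)$ then, by the choice of $\delta$, $v_i(\emptyset)-v_i(A'_i)\ge\delta=2\varepsilon$, contradicting $v_i(A'_i)\ge-\varepsilon$). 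Hence $\calA$ is $\EFt$ for the original instance, which is the claim.

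I expect the main obstacle to be calibrating the perturbation in that last step: $\varepsilon$ must be strictly positive to buy the hungry condition (and with it the existence of the envy-free cake division), yet small enough that it cannot reverse any value comparison the rounding relies on; $\delta$ is precisely the smallest positive gap between two attainable values, so $\varepsilon=\delta/2$ threads the needle. A related subtlety is that the hungry condition must be established for the composed maps $f_i=V^\varepsilon_i\circ b$ rather than for the original set functions, which is why the full nonnegativity of $v^\varepsilon_i$ (not merely positivity on nonempty sets) is invoked when bounding the convex combination above.
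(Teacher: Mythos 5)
Your proposal is correct and follows essentially the same route as the paper's proof: perturb by $\varepsilon=\delta/2$, verify the hungry condition for $f_i = V_i \circ b$ to invoke \Cref{theorem:cake-sperner-ef}, round via \Cref{lemma:cake-rounding} with $\alpha=0$, and transfer the guarantee back to the original $v_i$. The only (harmless) divergence is in the final transfer step, where you case-split on the emptiness of $A'_i, A'_j$ — an argument that in fact shows any $\varepsilon>0$ would suffice for nonnegative valuations — whereas the paper derives a contradiction with the minimality of $\delta$.
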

\begin{proof}
\Cref{algo:EFThree-nonnegative} begins by identifying $\delta$ as the minimum positive difference between an agent's values across the subsets, $\delta \coloneqq \min\limits_{i\in[n]} \ \min\limits_{S, T \subseteq [m]} \ \Big\{v_i(S) - v_i(T) \mid v_i(S) - v_i(T) > 0 \Big\}$. We can assume, without loss of generality, that for each nonnegative valuation $v_i$, there exists a subset $X$ such that $v_i(X)>0$; an agent that values all the subsets at $0$ can be removed from consideration since such an agent will always be envy-free. Hence, the parameter $\delta$ is well-defined and so is $\varepsilon = \frac{\delta}{2} > 0$. 

Then, from the given discrete fair division instance $\mathcal{D} = \langle [n], [m], \{ v_i \}_{i=1}^n \rangle$, the algorithm constructs instance 
$\mathcal{D}^\varepsilon \coloneqq \langle [n], [m], \{ v^\varepsilon_i \}_{i=1}^n \rangle$   by defining set functions $v^\varepsilon_i$ as follows:  $v_i^\varepsilon(S) = v_i(S) + \varepsilon$, for each (nonempty) subset $S \neq \emptyset$, and $v_i^\varepsilon(\emptyset) = v_i(\emptyset) = 0$. Since $v_i$s are normalized and nonnegative, each $v_i^\varepsilon$ is normalized and has a positive value for every nonempty subset: $v^\varepsilon_i(S) \geq \varepsilon > 0$ for every $S \neq \emptyset$.  

First, we will show that the cake division instance $\mathcal{C}^\varepsilon = \langle [n], \{ f_i \}_{i=1}^n \rangle$ (obtained in Line \ref{line:ef-three-cake-reduction} via \textsc{CakeConstruction}) necessarily admits an envy-free division $\calI=(I_1, \ldots, I_n)$. Then, we will prove that the rounded allocation $\calA$ (computed in Line \ref{line:ef-three-cake-rounding} via \textsc{CakeRounding}) is $\EFt$ for $\mathcal{D}^\varepsilon$ as well as $\mathcal{D}$. 
	
To show that $\mathcal{C}^\varepsilon$ admits a contiguous envy-free cake division, $\calI$, we will invoke \Cref{theorem:cake-sperner-ef}. The theorem requires the cake valuations $\{ f_i \}_{i=1}^n$ to be continuous and uphold the hungry condition. Recall that, $f_i = V_i \circ b$, where $V_i$ is the multilinear extension of $v^\varepsilon_i$ and the linear function $b$ is as defined in \Cref{section:cake-reduction}. Hence, here, the $f_i$s are continuous. For the hungry condition, consider any nonempty interval $J\subseteq [0,1]$ (i.e., $\len(J) > 0$). For such an interval $J$, note that $b(J) \neq \mathbf{0}$. Further, given that  $V_i$ is the multilinear extension of $v^\varepsilon_i$ and $v^\varepsilon_i(S) > 0$, for every nonempty subset $S$, we have $f_i(J) = V_i(b(J)) > 0 = f_i(\emptyset)$. That is, the hungry condition holds and, hence, \Cref{theorem:cake-sperner-ef} ensures the existence of a contiguous cake division $\mathcal{I}=(I_1, \ldots, I_n)$ that satisfies envy-freeness:   
\begin{align}\label{equation:EFThree-nonnegative-envy-free-cake}
f_i(I_i) \geq f_i(I_j) \ \ \text{ for all agents } i,j \in [n].
\end{align}

Finally, we will prove that that the returned allocation $\calA$ (obtained by rounding $\calI$ via \textsc{CakeRounding}) is $\EFt$ for $\mathcal{D}^\varepsilon$ and $\mathcal{D}$. Using equation (\ref{equation:EFThree-nonnegative-envy-free-cake}) and invoking \Cref{lemma:cake-rounding}---with instances $\mathcal{D}^\varepsilon$ and $\mathcal{C}^\varepsilon$ along with division $\calI$ and parameter $\alpha = 0$---we obtain that $\calA$ is $\EFt$ for $\mathcal{D}^\varepsilon$. That is, in $\calA=(A_1, \ldots, A_n)$, for all $i,j \in [n]$, there exists subsets $A'_i, A'_j \subseteq [m]$ such that $v^\varepsilon_i(A'_i) \geq v^\varepsilon(A'_j)$ and $|A_i \triangle A'_i| + |A_j \triangle A'_j| \leq 3$. 

To complete the proof we next establish that $\calA$ is in fact $\EFt$ for the given instance $\mathcal{D} = \langle [n], [m], \{ v_i \}_{i=1}^n \rangle$. Towards this,  we will show that the inequality $v^\varepsilon_i(A'_i) \geq v_i^\varepsilon(A'_j)$ implies $v_i(A'_i) \geq v_i(A'_j)$.  Assume, towards a contradiction, that instead $v_i(A'_i) < v_i(A'_j)$. Hence, 
 \begin{align*}
0 & < v_i(A'_j) - v_i(A'_i) \\
 & \leq v^\varepsilon_i(A'_j)  - v_i(A'_i) \tag{$v_i(S) \leq v_i^\varepsilon(S)$ for all $S\subseteq [m]$} \\
 & \leq v^\varepsilon_i(A'_j) - \left( v_i^\varepsilon(A'_i) - \varepsilon \right) \tag{$v_i^\varepsilon(S) \leq v_i(S) + \varepsilon$ for all $S\subseteq [m]$}\\ 
 & \leq \varepsilon \tag{since $v^\varepsilon_i(A'_j) - v^\varepsilon_i(A'_i) \leq 0$}\\
 & < \delta,
\end{align*}
where the final equality follows from $\varepsilon<\delta$. However, this bound, $0 < v_i(A'_j) - v_i(A'_i) < \delta$, contradicts the definition of $\delta$. Therefore, for all $i,j \in [n]$, we have $v_i(A'_i) \geq v_i(A'_j)$ where $A'_i,A'_j \subseteq [m]$ satisfy $|A_i \triangle A'_i| + |A_j \triangle A'_j| \leq 3$. That is, $\alloc = (A_1, \ldots, A_n)$ is $\EFt$ in the given instance $\mathcal{D}$.  The theorem stands proved. 
\end{proof}

\section{Equitability Under Nonnegative Valuations}
\label{section:eq-nn}

This section develops our existential guarantees for equitability under nonnegative valuations. Section \ref{section:equitable-cake-div} addresses cake division, and Section \ref{section:eqt-nn} covers the indivisible-items setting.

\subsection{Contiguous Equitable Cake Division Under Nonnegative Valuations}
\label{section:equitable-cake-div}
This section addresses equitable cake divisions under nonnegative valuations. The work of Bhaskar, Sricharan, and Vaish \cite{bhaskar2025connected} shows that, under nonnegative cake valuations, equitability is guaranteed along with a useful ordering property: for any ordering (indexing) of the agents, $\{1,2,\ldots, n\}$, there exists a contiguous equitable division $\calI = (I_1, I_2, \ldots, I_n)$ that allocates the cake---left to right---in that order. \cite{bhaskar2025connected} obtains this result via Sperner's Lemma; their existential result holds under more general conditions. Next, for equitable cake division, we provide a complementary proof of existence via Brouwer's fixed point theorem. We will later leverage the ordering property obtained here to develop our algorithms for finding approximately equitable divisions.

\begin{theorem}\label{theorem:eq-cake-nn}
Let $\calC = \langle [n], \{f_i\}_{i=1}^n \rangle$ be a cake division instance with continuous, nonnegative, and normalized valuations. Then, for any (re)ordering of the agents, $\{1,2,\ldots, n\}$, there exists a contiguous cake division $\calI = (I_1, \ldots, I_n)$ in $\calC$ such that, for all $1 \leq i < n$, equitability holds, $f_i(I_i) = f_{i+1}(I_{i+1})$, and the interval $I_i$ lies to the left of $I_{i+1}$. 
\end{theorem}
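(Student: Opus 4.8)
The plan is to reuse the simplex set-up from the proof of \Cref{theorem:burnt-cake-envy-free}, but replace the Sperner argument by a direct application of Brouwer's fixed point theorem (\Cref{theorem:brouwer-fpt}). Work over the standard simplex $\Delta^{n-1} = \{x \in \mathbb{R}^n_{\geq 0} : \sum_i x_i = 1\}$, and associate to each $x$ the contiguous division $\calI^x = (I^x_1, \ldots, I^x_n)$ with $I^x_k = [\sum_{j<k}x_j,\ \sum_{j\le k}x_j]$, so that $\len(I^x_k) = x_k$ and, by construction, $I^x_i$ always lies (weakly) to the left of $I^x_{i+1}$. This makes the ordering part of the statement automatic: given the prescribed order of the agents, simply assign the $i$-th agent in that order the interval $I^x_i$. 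Set $g_i(x) := f_i(I^x_i)$; each $g_i$ is continuous since $x \mapsto (\text{endpoints of } I^x_i)$ is linear and each $f_i$ is continuous. Write $M(x) := \max_{j\in[n]} g_j(x)$, which is again continuous.

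Next I would define the self-map $T : \Delta^{n-1} \to \Delta^{n-1}$ by
\[
T(x)_i \ := \ \frac{x_i + \big(M(x) - g_i(x)\big)}{1 + \sum_{j=1}^n \big(M(x) - g_j(x)\big)}.
\]
Each numerator is nonnegative (because $M(x) \ge g_i(x)$), the denominator is at least $1$, and the numerators sum to the denominator, so $T$ maps $\Delta^{n-1}$ into itself and is continuous. Brouwer's theorem yields a fixed point $x^*$. Writing $D := \sum_{j}\big(M(x^*) - g_j(x^*)\big) \ge 0$ and rearranging $T(x^*) = x^*$ gives the identity $x^*_i\, D = M(x^*) - g_i(x^*)$ for every $i$.

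The crux is to rule out $D > 0$. If $D > 0$, pick $k$ with $g_k(x^*) = M(x^*)$; then the right-hand side vanishes at $i=k$, forcing $x^*_k = 0$, i.e. $I^{x^*}_k$ is a zero-length interval and hence (by normalization, following the convention used for zero-length pieces in the proof of \Cref{theorem:burnt-cake-envy-free}) $g_k(x^*) = 0$. Thus $M(x^*) = 0$, so $g_j(x^*) \le 0$ for all $j$; combined with nonnegativity of the $f_j$'s, $g_j(x^*) = 0$ for all $j$ — but then $D = 0$, a contradiction. Hence $D = 0$, and since each term $M(x^*) - g_j(x^*) \ge 0$ and they sum to zero, all are zero: $f_1(I^{x^*}_1) = \cdots = f_n(I^{x^*}_n)$. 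Together with the built-in left-to-right ordering of the $I^{x^*}_i$, this is the desired equitable division for the given agent order.

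I expect the main obstacle to be precisely this last case analysis. Because the $f_i$ need not be monotone, the naive ``grow the interval of the agent with the smallest value'' intuition does not by itself yield equitability; one instead has to argue that the only fixed points of $T$ are equitable divisions, and it is exactly there that nonnegativity of the valuations is used (to collapse the degenerate $D>0$ branch). A minor point to settle cleanly is the convention that a zero-length interval is worth $0$, which is consistent with $f_i(\emptyset)=0$ and matches how such pieces were handled earlier in the paper.
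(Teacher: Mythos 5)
Your proposal is correct, and it follows the same high-level strategy as the paper---apply Brouwer's fixed-point theorem on the length simplex $\Delta^{n-1}$ with the induced left-to-right division $\calI^x$---but the self-map and the fixed-point analysis are genuinely different. The paper perturbs each coordinate by the deviation of $f_i(I^x_i)$ from the \emph{average} value, scaled by $1/\gamma$ for a constant $\gamma$ satisfying $f_i([a,b]) \leq \gamma\,(b-a)$; nonnegativity of the image then hinges on the bound $f_i(I^x_i) \leq \gamma x_i$, and the fixed-point equation yields equitability immediately with no case analysis. Your map is instead the max-based normalization (as in Nash's equilibrium existence proof): it needs no scaling constant and hence works verbatim for any continuous nonnegative valuation, but you pay for this with the short argument ruling out the degenerate branch $D>0$, which is where nonnegativity enters for you. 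That argument is sound: $x^*_k D = M(x^*) - g_k(x^*) = 0$ at a maximizer $k$ forces $x^*_k=0$, the paper's convention $f_k(\text{zero-length interval}) = f_k(\emptyset) = 0$ (used identically in the proof of \Cref{theorem:burnt-cake-envy-free}) gives $M(x^*)=0$, and nonnegativity collapses everything to $D=0$. A minor point in your favor: the constant $\gamma$ the paper relies on need not exist for valuations that are merely continuous (consider $f([a,b])=\sqrt{b-a}$), so your route avoids an implicit Lipschitz-type assumption; conversely, the paper's map pins the common value to the average $\frac{1}{n}\sum_j f_j(I^*_j)$, which your argument does not (and need not) provide.
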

\begin{proof}
We will establish this result via Brouwer's fixed point theorem (\Cref{theorem:brouwer-fpt}). Write $\Delta^{n-1} \coloneqq \{x \in [0,1]^n \mid \sum_{i=1}^n x_i = 1\}$ to denote the $(n-1)$-simplex. With every point $x \in \Delta^{n-1}$ we can associate a contiguous division $\calI^x = (I^x_1, \ldots, I^x_n)$ of the the cake $[0,1]$ by setting the $k$th interval from the left $I^x_k \coloneqq \left[ \sum_{j=0}^{k-1} x_j, \sum_{j=0}^{k} x_j\right]$; here, by convention, $x_0 =0$. Note that $\len(I^x_k) = x_k$. Also, let $\gamma > 0$ be a sufficiently large parameter such that $f_i([a,b]) \leq \gamma \  (b-a)$ for each $i \in [n]$ and all intervals $[a,b] \subseteq [0,1]$. The normalization and (Lipschitz) continuity of $f_i$s ensure that such a parameter $\gamma < \infty$ exists. 

To apply Brouwer's fixed point theorem, we define a function $g: \Delta^{n-1} \mapsto \Delta^{n-1}$. Specifically, for any $x \in \Delta^{n-1}$, write $g(x) = (g_1(x), g_2(x), \ldots, g_n(x))$ with the $i$th component 
\begin{align*}
g_i(x) \coloneqq  x_i + \frac{1}{\gamma} \left(\frac{1}{n} \sum_{j=1}^n f_j(I^x_j) - f_i(I^x_i)\right).
\end{align*}
The fact that $g_i$s are obtained by composing continuous valuations $f_i$s implies that $g$ is continuous as well. Next, we will show that, for each $x \in \Delta^{n-1}$, the image $g(x) \in \Delta^{n-1}$. Consider the sum of the components of $g(x)$:

\begingroup %%%%%%%%%%%%%%%%%
\allowdisplaybreaks
\begin{align*}
\sum_{i=1}^n g_i(x) & = \sum_{i=1}^n \left( x_i + \frac{1}{\gamma} \left(\frac{1}{n} \sum_{j=1}^n f_j(I^x_j) - f_i(I^x_i)\right) \right) \\
& = \sum_{i=1}^n x_i + \frac{1}{\gamma} \sum_{i=1}^n \left( \frac{1}{n} \sum_{j=1}^n f_j(I^x_j) \right)- \frac{1}{\gamma}\sum_{i=1}^nf_i(I^x_i)  \\
& = \sum_{i=1}^n x_i + \frac{1}{\gamma}   \frac{n}{n} \sum_{j=1}^n f_j(I^x_j) - \frac{1}{\gamma}\sum_{i=1}^nf_i(I^x_i)  \\
& = 1 + \frac{1}{\gamma}\sum_{j=1}^n f_j(I^x_j) - \frac{1}{\gamma}\sum_{i=1}^n f_i(I^x_i) \tag{$\sum_{i=1}^n x_i = 1$, since $x \in \Delta^{n-1}$}\\
& = 1.
\end{align*}
\endgroup

Further, the following inequality shows that $g_i(x) \geq 0$ for each $i \in [n]$ and $x \in \Delta^{n-1}$: 
\begin{align*}
g_i(x) & = x_i + \frac{1}{\gamma} \left(\frac{1}{n} \sum_{j=1}^n f_j(I_j) - f_i(I_i)\right)\\
& \geq x_i - \frac{1}{\gamma} f_i(I_i) \tag{nonnegativity of $f_j$s}\\
& \geq 0.
\end{align*}
Here, the last inequality follows from the definition of $\gamma$; in particular, $f_i(I_i) \leq \gamma \len(I_i) = \gamma x_i$. Hence, the vector $g(x) \in \Delta^{n-1}$. The above-mentioned observations imply that that the continuous function $g$ maps points in the nonempty, convex, compact set $\Delta^{n-1}$ to $\Delta^{n-1}$ itself. Therefore, Brouwer's fixed point theorem (\Cref{theorem:brouwer-fpt}) implies that there exists $x^* \in \Delta^{n-1}$ with the property that $g(x^*) = x^*$. 

Write $\calI^*=(I^*_1, \ldots, I^*_n)$ to denote the cake division induced by $x^*$, i.e., $I^*_k =  [\sum_{j=0}^{k-1} x^*_j, \sum_{j=0}^i x^*_j]$ is the $k$th interval from the left in $\calI^{*}$. The equality $g(x^*) = x^*$ implies that, for each $i \in [n]$, we have $x^*_i = g_i(x^*) = x^*_i + \frac{1}{\gamma} \left(\frac{1}{n} \sum_{j=1}^n f_j(I^{*}_j) - f_i(I^{*}_i) \right)$. Simplifying we obtain $f_i(I^{*}_i) = \frac{1}{n} \sum_{j=1}^n f_j(I^{*}_j) = f_k(I^{*}_k)$ for all $i,k \in [n]$. Hence, the contiguous cake division $\calI^{*}=(I^*_1, \ldots, I^*_n)$ is equitable. Also, note that the interval $I^*_i$ lies to the left of $I^*_{i+1}$ for each $1 \leq i < n$. The theorem stands proved. 
\end{proof}

\subsection{$\EQt$ Allocations for Nonnegative Valuations}
\label{section:eqt-nn}
In this section, we establish the existence of $\EQt$ allocations for nonnegative valuations.  

The $\EQt$ existential guarantee for nonnegative valuations is obtained by first constructing a cake division instance $\mathcal{C}$---via \textsc{CakeConstruction} (\Cref{algo:cake-reduction})---from the given discrete fair division instance $\mathcal{D}$. Then, we invoke Theorem \ref{theorem:eq-cake-nn} to argue that $\mathcal{C}$ admits an equitable division $\mathcal{I}$. We will show in the proof of the following theorem that rounding $\calI$ yields an $\EQt$ allocation $\calA$ (\Cref{lemma:cake-rounding-2}).  

\begin{theorem}[$\EQt$ for Nonnegative]\label{theorem:eq-three-nonnegative}
Every discrete fair division instance $\langle [n], [m], \{ v_i \}_{i=1}^n \rangle$, with nonnegative and normalized valuations valuations, admits an $\EQt$ allocation $\calA$; in particular, under $\alloc = (A_1, \ldots, A_n)$, for all $i,j \in [n]$, there exist subsets $A'_i, A'_j$ with the properties that $v_i(A'_i) \geq v_j(A'_j)$ and $|A_i \triangle A'_i| + |A_j \triangle A'_j| \leq 3$. 
\end{theorem}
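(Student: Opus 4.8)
The plan is to mirror the proof of Theorem~\ref{theorem:ef-three-subadditive}, replacing the envy-freeness ingredients with their equitability counterparts. First I would take the given discrete instance $\mathcal{D} = \langle [n], [m], \{v_i\}_{i=1}^n \rangle$ with nonnegative, normalized valuations and run \textsc{CakeConstruction} (\Cref{algo:cake-reduction}) on it to obtain the cake instance $\mathcal{C} = \langle [n], \{f_i\}_{i=1}^n \rangle$ where $f_i = V_i \circ b$. By construction each $f_i$ is continuous. The key structural observation is that nonnegativity of $v_i$ lifts to $f_i$: since $V_i$ is the multilinear extension of $v_i$, for any interval $I$ the value $V_i(b(I))$ is a convex combination of values $v_i(S)$ over subsets $S \subseteq [m]$, all of which are $\geq 0$; hence $f_i(I) = V_i(b(I)) \geq 0$ for every interval $I$, and $f_i(\emptyset) = V_i(\mathbf{0}) = v_i(\emptyset) = 0$, so the $f_i$ are normalized as well.

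Next I would invoke \Cref{theorem:eq-cake-nn} on $\mathcal{C}$ (with, say, the identity ordering of agents), which gives a contiguous equitable cake division $\calI = (I_1, \ldots, I_n)$ satisfying $f_i(I_i) = f_j(I_j)$ for all $i, j \in [n]$. In particular, $\calI$ trivially satisfies the hypothesis of the $\EQ$ Rounding Lemma (\Cref{lemma:cake-rounding-2}) with parameter $\alpha = 0$, namely $f_i(I_i) + 0 \geq f_j(I_j)$ for all $i, j$. Feeding the instances $\mathcal{D}$, $\mathcal{C}$ and the division $\calI$ into \textsc{CakeRounding} (\Cref{algo:cake-rounding}) then yields an allocation $\calA = (A_1, \ldots, A_n)$ such that, for every pair $i, j$, there exist subsets $A'_i, A'_j \subseteq [m]$ with $v_i(A'_i) \geq v_j(A'_j)$ and $|A_i \triangle A'_i| + |A_j \triangle A'_j| \leq 3$. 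This is precisely the $\EQt$ condition, so the theorem follows.

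I do not expect any serious obstacle here — the whole argument is a routine assembly of already-established pieces, and the only genuinely new (but immediate) point is checking that nonnegativity and normalization are preserved under the cake construction. The one thing to be slightly careful about is that \Cref{theorem:eq-cake-nn} as stated requires the cake valuations to be continuous, nonnegative and normalized, all of which I would verify explicitly for the constructed $f_i$ before applying it; and that \Cref{lemma:cake-rounding-2} is stated with a one-sided inequality $f_i(I_i) + \alpha \geq f_j(I_j)$, which the equitable division satisfies with $\alpha = 0$ (indeed with equality). No new technical lemmas are needed beyond what the excerpt already provides.
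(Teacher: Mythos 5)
Your proposal is correct and follows exactly the paper's own proof: construct the cake instance via \textsc{CakeConstruction}, verify that the resulting $f_i$ are continuous and nonnegative, apply \Cref{theorem:eq-cake-nn} to obtain a contiguous equitable division, and round via \Cref{lemma:cake-rounding-2} with $\alpha = 0$. No differences of substance.
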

\begin{proof}
Write $\calD = \langle [n], [m], \{ v_i \}_{i=1}^n \rangle$ to denote the given fair division instance and let $\calC = \langle [n], \{ f_i \}_{i=1}^n \rangle$ be the cake division instance obtained by executing \textsc{CakeConstruction} (\Cref{algo:cake-reduction}) with $\calD$ as input. By construction, the cake valuations $f_i$ are continuous. We will next note that here $f_i$s are nonnegative as well. 

Recall that, for each interval $I \subseteq [0,1]$ the cake valuation $f_i(I) = V_i \circ b (I)$. Further, by definition of the linear function $b$, we have $b(I) \in [0,1]^m$. Now, the fact that the valuations $v_i$ are nonnegative imply that their multilinear extensions $V_i$ are nonnegative as well. Hence, we obtain that $f_i$s are nonnegative. Therefore, applying Theorem \ref{theorem:eq-cake-nn}, we obtain that the constructed cake division instance $\calC = \langle [n], \{ f_i \}_{i=1}^n \rangle$ admits a contiguous equitable division $\calI$.  

Finally, we invoke \textsc{CakeRounding} (\Cref{algo:cake-rounding}) over instances $\calD$ and $\calC$ along with the division $\calI$ to obtain allocation $\calA$. Lemma \ref{lemma:cake-rounding-2} (instantiated with $\alpha =0$) implies that, as desired, the allocation $\calA$ is $\EQt$ for the given discrete fair division instance $\calD$. This completes the proof of the theorem. 
\end{proof}

\section{Finding Approximately Equitable Divisions Under Nonnegative Valuations} 
\label{section:computation}
This section develops algorithms for finding approximately equitable divisions under nonnegative valuations. We start with cake division and then address the indivisible-items setting.

The ordering property obtained in Theorem \ref{theorem:eq-cake-nn} ensures that, under nonnegative valuations, we can restrict attention to equitable cake divisions $\calI^*=(I^*_1, \ldots, I^*_n)$ in which, for each $i \in [n]$, agent $i$'s interval, $I^*_i$, is the $i$th one from the left. That is, $I^*_1$ is the left-most interval in the division and $I^*_n$ is the right most. Our algorithm leverages this property and grid searches for a threshold $\tau$ that is additively close to the agents' value in the equitable division, i.e., close to $\tau^* \coloneqq f_i(I^*_i)$. With a (guessed) $\tau$ in hand, the algorithm constructs, for each agent $i \in [n]$, a family of intervals, $\calF^\tau_i$, that contains intervals of value close to $\tau$ for agent $i$; the endpoints of the intervals in $\calF^\tau_i$ are discretized to ensure that the families are of polynomial size. Then, via a dynamic program, we search for whether there exist intervals $I_i \in \calF^\tau_i$ that partition the cake and uphold the ordering (i.e., $I_i$ is to the left of $I_{i+1}$). This dynamic program is detailed in Section \ref{subsection:interval-partitioning}.

We will prove that, with the right choice of approximation parameters, for at least one $\tau$ the dynamic program will succeed in finding the ordered intervals $I_i \in \calF^\tau_i$ that partition the cake. Further, any such cake division $(I_1,\ldots, I_n)$ is guarantee to be approximately equitable, since $I_i \in \calF^\tau_i$ ensures that each agent's value, $f_i(I_i)$, is close to $\tau$. Details of our algorithm for finding approximately equitable cake divisions are provided in \Cref{section:fptas-for-eq-div}. This algorithmic result requires that the nonnegative cake valuations $f_i$s be Lipschitz continuous. Specifically, we assume that the there exists a known Lipschitz constant $\gamma > 0$ such that $\left| f_i([\ell, r]) - f_i ([\ell', r']) \right| \leq \gamma \|(\ell, r) - (\ell', r') \|_1$ for all intervals $[\ell, r], [\ell', r'] \subseteq [0,1]$ and all $i \in [n]$. The algorithm operates in the standard Robertson-Webb query model and its running time is polynomial in $n$, $\gamma$, and $1/\varepsilon$, where $\varepsilon$ is the absolute approximation factor. Specifically, our algorithm finds a cake division $\calI=(I_1, \ldots, I_n)$ in which $\left| f_i(I_i) - f_j(I_j) \right| \leq \varepsilon$, for all agents $i, j \in [n]$ (Theorem \ref{theorem:fptas-for-equitable-cake-div}).  

\Cref{section:eq-three-nonnegative} develops our algorithmic result for finding approximately equitable allocations of indivisible items. At a high level, we use ideas similar to the ones in Sections \ref{section:cake-reduction} and \ref{section:CakeRounding} and construct a cake instance from the given discrete fair division instance. However, before applying the equitable cake division algorithm (from Section \ref{section:fptas-for-eq-div}) we modify the cake valuations by adding a piecewise linear function to them. The need for this transformation stems from the requirement that computed, approximately equitable allocation $\calA = (A_1, \ldots, A_n)$ should consist of nonempty bundles. This property of nonempty bundles has important implications for the graph-theoretic applications mentioned earlier. Also, analogous to the Lipschitz requirement, in the discrete fair division case we assume an upper bound $\Lambda >0$ on the marginal values of $v_i$s, i.e., for each item $g \in [m]$ and subset $S \subseteq [m]$ we have $|v_i(S \cup \{g\}) - v_i(S)| \leq \Lambda$. Our algorithm runs in time polynomial in $n$, $m$, and $\Lambda$. It computes an allocation $\calA = (A_1, \ldots, A_n)$ with nonempty bundles that satisfy $|v_i(A_i) - v_j(A_j)| \leq 5\Lambda + 1$, for all $i, j \in [n]$; see Theorem \ref{theorem:EQ3ApxComputation}.  
	
\subsection{Ordered Interval Selection}\label{subsection:interval-partitioning}
In this section, we address an ordered interval selection problem. For the problem, we provide a dynamic programming algorithm, which will be used as a subroutine in subsequent methods. 

In this interval selection problem, we are given $n$ (indexed) families of intervals $\mathcal{F}_1, \mathcal{F}_2, \ldots, \mathcal{F}_n$. Here, each $\calF_i$ consists of finitely many cake intervals $[p,q]$, with $0 \leq p \leq q \leq 1$. The objective is to determine whether there exist $n$ pairwise disjoint intervals $I_i \in \calF_i$ that partition the cake and uphold the ordering, i.e., $I_i$ is to the left of $I_{i+1}$ in the cake, for each $1 \leq i < n$. Note that the requirement that the intervals partition the cake ensures $\cup_{i=1}^n I_i = [0,1]$ and $I_i \cap I_j = \emptyset$ for all $i \neq j$. If such intervals exist, then we want to efficiently find them as well. 

The following theorem shows that the \textsc{IntervalSelect} algorithm (\Cref{algo:interval-dp}) stated below solves this problem in polynomial time.  

\floatname{algorithm}{Algorithm}
\begin{algorithm}[h]
\caption{\textsc{IntervalSelect}} \label{algo:interval-dp}
\begin{tabularx}{\textwidth}{X}
{\bf Input:} Families of intervals $\mathcal{F}_1, \mathcal{F}_2, \ldots, \mathcal{F}_n$. \\
{\bf Output:} Intervals $I_i \in \mathcal{F}_i$, for $1 \leq i \leq n$, or \text{false}.
\end{tabularx}
  \begin{algorithmic}[1]
  		\STATE Set $P$ as the collection of all the endpoints of the given intervals, $P \coloneqq  \big\{p,q \in [0,1] \ \mid \  [p, q] \in \cup_{i=1}^n \mathcal{F}_i \big\} \ \cup \{0,1\}$. Index the points in $P = \{p_1, p_2, \ldots, p_{|P|}\}$ such that $0 = p_1 \leq p_2 \leq \ldots \leq p_{|P|} = 1$. \label{line:points-initialization}
		\STATE Initialize $T[q, i] = \text{false}$ for all $1 \leq q \leq |P|$ and all $0 \leq i \leq n$. \label{line:partition-set-initialization} \\
		\COMMENT{Table entry $T[q,i]$ will be \text{true} iff the interval $[p_1 =0, p_q]$ can be partitioned by some intervals $I_1 \in \mathcal{F}_1, I_2 \in \mathcal{F}_2, \ldots, I_i \in \mathcal{F}_i$.}
  		\STATE Set $T[1, 0] = \text{true}$. \COMMENT{Base case of the dynamic program.}
  		\FOR{$q = 1$ to $|P|$}
  			\FOR{$i= 1$ to $n$}
  				\STATE If there exists $1 \leq \widehat{q} < q$ such that $T[\widehat{q}, i-1] = \text{true}$ and interval $[p_{\widehat{q}}, p_{q}] \in \mathcal{F}_i$, then set table entry $T[q, i] = \text{true}$. \
  	    		\ENDFOR 
    	\ENDFOR 
	\STATE If $T[|P|,n] = \text{true}$, then retrace the \text{true} entries in the table and return the $n$ satisfying intervals $I_i \in \calF_i$. Otherwise, if $T[|P|,n] = \text{false}$, then return $\text{false}$. 
		\end{algorithmic}
\end{algorithm}

\begin{theorem}\label{theorem:partition-dp}
Given interval families $\mathcal{F}_1, \ldots, \mathcal{F}_n$ as input, \Cref{algo:interval-dp} (\textsc{IntervalSelect}) correctly determines in polynomial  time whether there exist $n$ pairwise disjoint intervals $I_i \in \calF_i$ that partition the cake and uphold the ordering (i.e., $I_i$ is to the left of $I_{i+1}$, for each $1 \leq i < n$). If such intervals exist, then the algorithm returns them. 
\end{theorem}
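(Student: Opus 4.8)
The plan is to prove correctness by establishing, via induction, the table invariant stated in the comment of \Cref{algo:interval-dp}: for every $1 \le q \le |P|$ and $0 \le i \le n$, the entry $T[q,i]$ equals $\text{true}$ if and only if the prefix interval $[0, p_q]$ admits an \emph{ordered partition} by intervals $I_1 \in \mathcal{F}_1, \ldots, I_i \in \mathcal{F}_i$ --- that is, pairwise disjoint intervals whose union is $[0, p_q]$ and for which $I_j$ lies to the left of $I_{j+1}$ for all $1 \le j < i$. Once this invariant is in hand, the theorem follows immediately: the problem instance is solvable iff $[0,1] = [0, p_{|P|}]$ admits an ordered partition by $I_1 \in \mathcal{F}_1, \ldots, I_n \in \mathcal{F}_n$, which by the invariant is exactly the condition $T[|P|, n] = \text{true}$ tested in the last line.

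First I would record the key structural observation that drives the recurrence: whenever $I_1, \ldots, I_i$ is an ordered partition of some interval $[0, t]$, the ordering and covering constraints force each $I_j$ to have the form $[t_{j-1}, t_j]$ with $0 = t_0 \le t_1 \le \cdots \le t_i = t$; moreover, for each $j \ge 1$ the point $t_j$ is an endpoint of the interval $I_j \in \mathcal{F}_j$, hence $t_j \in P$. In particular $t = p_q$ for some $q$, the prefix $I_1, \ldots, I_{i-1}$ is itself an ordered partition of $[0, p_{\widehat{q}}]$ for $p_{\widehat{q}} = t_{i-1}$, and the last piece satisfies $I_i = [p_{\widehat{q}}, p_q] \in \mathcal{F}_i$. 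Conversely, concatenating an ordered partition of $[0, p_{\widehat{q}}]$ by $\mathcal{F}_1, \ldots, \mathcal{F}_{i-1}$ with any interval $[p_{\widehat{q}}, p_q] \in \mathcal{F}_i$ yields an ordered partition of $[0, p_q]$ by $\mathcal{F}_1, \ldots, \mathcal{F}_i$. This equivalence is precisely the update rule in the inner loop.

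With this observation, the induction is routine. The base case is $i = 0$: the only interval admitting an ordered partition by the empty list of families is the degenerate interval $[0,0] = [p_1, p_1]$, matching the initialization $T[1,0] = \text{true}$ and $T[q,0] = \text{false}$ for $q > 1$. For the inductive step, assuming the invariant holds for index $i-1$ across all $q$, the structural observation shows $[0, p_q]$ is ordered-partitionable by $\mathcal{F}_1, \ldots, \mathcal{F}_i$ iff there is a breakpoint $p_{\widehat{q}}$ with $[p_{\widehat{q}}, p_q] \in \mathcal{F}_i$ and $[0, p_{\widehat{q}}]$ ordered-partitionable by $\mathcal{F}_1, \ldots, \mathcal{F}_{i-1}$ --- i.e., by the induction hypothesis, with $T[\widehat{q}, i-1] = \text{true}$ --- which is exactly what the algorithm checks. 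A small point to get right here is the handling of degenerate pieces and the indexing of $P$: the loop scans only $\widehat{q} < q$, which is sound provided either no family contributes a zero-length interval $[p_q, p_q]$, or such a point appears in $P$ with the multiplicity needed so that some $\widehat{q} < q$ has $p_{\widehat{q}} = p_q$; I would fix the indexing convention explicitly to close this gap. Reconstruction is then immediate: if $T[|P|, n] = \text{true}$, walk backward from $(|P|, n)$, at each step $(q, i)$ choosing a witness $\widehat{q}$ with $T[\widehat{q}, i-1] = \text{true}$ and $[p_{\widehat{q}}, p_q] \in \mathcal{F}_i$, and output $I_i = [p_{\widehat{q}}, p_q]$; the invariant guarantees the returned intervals form a valid ordered partition.

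For the running time: $|P| \le 2\sum_{i=1}^n |\mathcal{F}_i| + 2$, the table has $O(n\,|P|)$ entries, filling each entry costs $O(|P|)$ lookbacks with a membership test $[p_{\widehat{q}}, p_q] \in \mathcal{F}_i$ answerable in $O(\log |\mathcal{F}_i|)$ time after sorting each family (or $O(1)$ via hashing), and retracing costs $O(n\,|P|)$; hence the total is polynomial in $\sum_i |\mathcal{F}_i|$, i.e.\ in the input size. The whole argument is standard once the invariant is pinned down, so the only real obstacle is the careful bookkeeping in the structural observation --- ensuring the ordering constraint makes the prefix decomposition both necessary and sufficient, and that degenerate intervals and coincident endpoints are treated consistently with the indexing of $P$.
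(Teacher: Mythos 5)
Your proof is correct and follows essentially the same route as the paper's: establish the table invariant ($T[q,i]$ is true iff $[0,p_q]$ admits an ordered partition by $\mathcal{F}_1,\ldots,\mathcal{F}_i$) by induction on the recurrence, read off correctness from $T[|P|,n]$, and bound the running time by $O\left(n\,|P|^2\right)$. The one place you go beyond the paper --- flagging that a degenerate interval $[p_q,p_q]$ in some family requires either exclusion or the right multiplicity convention in the indexing of $P$ so that some $\widehat{q}<q$ has $p_{\widehat{q}}=p_q$ --- is a legitimate detail the paper's terse argument glosses over, and your proposed fix is the right one.
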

\begin{proof}
In \Cref{algo:interval-dp}, the table entry $T[q,i]$ is set to be \text{true} iff the interval $[p_1 =0, p_q]$ can be partitioned by some intervals $I_1 \in \mathcal{F}_1, I_2 \in \mathcal{F}_2, \ldots, I_i \in \mathcal{F}_i$. Here, the correctness of the algorithm directly follows an inductive argument over the implemented recurrence relation: $T[q,i]$ is set to be \text{true} iff $T[\widehat{q}, i-1] = \text{true}$ and interval $[p_{\widehat{q}}, p_{q}] \in \mathcal{F}_i$. Hence, the algorithm correctly identifies whether the desired intervals exist and returns them if they do. Considering two for-loops, we get that the time complexity of the algorithm is $O(n \ |P|^2) = O\left(n \ \left(\sum_{i=1}^n |\mathcal{F}_i| \right)^2 \right)$. This completes the proof of the theorem. 
\end{proof}

\subsection{Finding Approximately Equitable Cake Divisions}
\label{section:fptas-for-eq-div}
This section provides our approximation algorithm for equitable cake division. Here, we address cake valuations $f_i$ that are nonnegative and Lipschitz continuous. In particular, a cake valuation $f$ is said to be $\gamma$-Lipschitz continuous if, for all intervals $[\ell, r], [\ell', r'] \subseteq [0,1]$, we have 
\begin{align}
\left| f([\ell, r]) - f ([\ell', r']) \right| \leq \gamma \left\| \begin{pmatrix} \ell \\ r \end{pmatrix}  - \begin{pmatrix} \ell' \\ r' \end{pmatrix}  \right\|_1 \label{eqn:Lipschitz}
\end{align}
We assume that (an upper bound on) $\gamma$ is known for the given cake division instance. 

As mentioned, the ordering property obtained in Theorem \ref{theorem:eq-cake-nn} ensures that, under nonnegative valuations, we can restrict attention to equitable cake divisions $\calI^*=(I^*_1, \ldots, I^*_n)$ in which, for each $i \in [n]$, agent $i$'s interval, $I^*_i$, is the $i$th one from the left. \textsc{ApxEQ} (\Cref{algo:EQ-apx}) works towards finding such an ordered equitable division. The algorithm operates in the standard Robertson-Webb query model. In particular, the algorithm only requires (black-box) access to an oracle that, when queried with an interval $[p,q]$, returns the value $f_i([p,q])$, for the agents $i \in [n]$.  

The algorithm, for the given parameter $\varepsilon > 0$,  grid searches for a threshold $\tau$ that is $\varepsilon/4$ additively close to $\tau^* \coloneqq f_i(I^*_i)$. For each guessed $\tau$ and each agent $i \in [n]$, the algorithm constructs a family, $\calF^\tau_i$, that contains intervals $[p,q]$ of value $f_i([p,q]) \in \left[ \tau - \frac{\varepsilon}{2},  \tau + \frac{\varepsilon}{2}\right]$. The endpoints, $p$ and $q$, of the populated intervals are confined to be integer multiples of $\frac{\varepsilon}{8\gamma}$. This discretization bounds the cardinalities of the interval families $\calF^\tau_i$. 

Subsequently, the algorithm invokes, \textsc{IntervalSelect} (\Cref{algo:interval-dp}) as a subroutine to test whether there exist $n$ pairwise disjoint intervals $I_i \in \calF^\tau_i$ that partition the cake and uphold the ordering, i.e., $I_i$ is to the left of $I_{i+1}$ for each $1 \leq i < n$. We will show in the proof of Theorem \ref{theorem:fptas-for-equitable-cake-div} (stated below), that, for at least one $\tau$, \textsc{IntervalSelect} will succeed in finding the desired intervals $I_i$. We will also show that any such returned cake division $(I_1,\ldots, I_n)$ is approximately equitable. Theorem \ref{theorem:fptas-for-equitable-cake-div} states the guarantee for \textsc{ApxEQ} (\Cref{algo:EQ-apx}).
 
\floatname{algorithm}{Algorithm}
\begin{algorithm}[h]
\caption{\textsc{CakeApxEQ}} \label{algo:EQ-apx}
\begin{tabularx}{\textwidth}{X}
{\bf Input:} Cake division instance $\langle [n], \{f_i\}_{i=1}^n \rangle$, with $\gamma$-Lipschitz valuations, and parameter $\varepsilon \in \left(0, \frac{1}{4\gamma}\right)$. \\
{\bf Output:} Contiguous cake division $\mathcal{I} = (I_1, \ldots, I_n)$.
\end{tabularx}
  \begin{algorithmic}[1]
  		\STATE Define the set of grid points $G \coloneqq \{0, \frac{\varepsilon}{8 \gamma}, \frac{2\varepsilon}{8\gamma}, \frac{3 \varepsilon}{8\gamma}, \ldots, 1 \}$. Also, define collection $\calG$ of intervals whose endpoints are in $G$, i.e., $\calG  \coloneqq \left\{ [p,q] \mid p\leq q \text{ and } p,q \in G \right\}$. \label{line:eq-discretization} 
  		\FOR{each $\tau \in \left\{0, \frac{\varepsilon}{2}, \frac{2\varepsilon}{2}, \frac{3\varepsilon}{2}, \ldots, \gamma \right\}$} \label{line:eq-loop-start}
  			\STATE For each agent $i \in [n]$, define family $\calF^\tau_i \coloneqq \left\{ [p,q] \in \calG \ \mid \ f_i \left([p,q] \right) \in \left[\tau - \frac{\varepsilon}{2}, \tau + \frac{\varepsilon}{2} \right] \right\}$. \label{line:eq-family-def}
  			\STATE If \textsc{IntervalSelect}($\calF^\tau_1, \ldots, \calF^\tau_n$) successfully finds intervals $I_i \in \mathcal{F}^\tau_i$, then return cake division $\calI = (I_1, \ldots, I_n)$. Else, continue the for loop. \label{line:if-int-sel}
		\ENDFOR \label{line:eq-loop-end}
		\end{algorithmic}
\end{algorithm}
\begin{theorem}\label{theorem:fptas-for-equitable-cake-div}
Given any cake division instance, with $\langle [n], \{f_i\}_{i=1}^n \rangle$, with nonnegative and $\gamma$-Lipschitz continuous valuations, and parameter $\varepsilon \in \left(0, \frac{1}{4\gamma}\right)$ as input, \Cref{algo:EQ-apx} (\textsc{CakeApxEQ}) runs in time polynomial in $n$, $\gamma$, and $\frac{1}{\varepsilon}$, and it computes a contiguous cake division $\mathcal{I} = (I_1, \ldots, I_n)$ that satisfies $|f_i(I_i) - f_j(I_j)| \leq  \varepsilon$ for all $i,j \in [n]$. 
\end{theorem}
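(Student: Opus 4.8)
The plan is to prove the two claims of the theorem separately: first that \textsc{CakeApxEQ} terminates within the stated time, and second that it always outputs a cake division, which is moreover $\varepsilon$-equitable. The running-time bound is a direct count. Writing $h \coloneqq \frac{\varepsilon}{8\gamma}$, the grid $G$ has $O(1/h) = O(\gamma/\varepsilon)$ points, so $|\calG| = O(\gamma^2/\varepsilon^2)$; the outer loop ranges over $O(\gamma/\varepsilon)$ thresholds $\tau$; for each $\tau$ the families $\calF^\tau_i$ are built with $O(n\,|\calG|)$ oracle queries; and each call to \textsc{IntervalSelect} runs in time $O\!\big(n(\sum_i |\calF^\tau_i|)^2\big) = O(n^3 |\calG|^2)$ by \Cref{theorem:partition-dp}. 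Multiplying these gives a bound polynomial in $n$, $\gamma$, and $1/\varepsilon$.

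For correctness, the easy direction is that \emph{any} division produced by the algorithm is $\varepsilon$-equitable: such a division is returned in \Cref{line:if-int-sel} only when \textsc{IntervalSelect} finds intervals $I_i \in \calF^\tau_i$ for a common threshold $\tau$, and by the definition of $\calF^\tau_i$ this forces $f_i(I_i) \in \left[\tau - \frac{\varepsilon}{2}, \tau + \frac{\varepsilon}{2}\right]$ for every $i$; hence $|f_i(I_i) - f_j(I_j)| \le \varepsilon$ for all $i,j \in [n]$.

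The substantive step is to show the loop does not run to completion without returning, i.e., that for at least one $\tau$ the subroutine succeeds. Here I would invoke \Cref{theorem:eq-cake-nn} (with the identity ordering of the agents) to obtain an ordered equitable division $\calI^* = (I^*_1, \ldots, I^*_n)$ with cut points $0 = a^*_0 \le a^*_1 \le \cdots \le a^*_n = 1$ and common value $\tau^* \coloneqq f_i(I^*_i)$ for every $i$. Nonnegativity gives $\tau^* \ge 0$, and applying the $\gamma$-Lipschitz inequality against the empty interval (using $f_i(\emptyset)=0$) gives $f_i([\ell,r]) \le \gamma(r-\ell)$, hence $\tau^* \le \gamma\,\len(I^*_i) \le \gamma$; so some threshold $\tau$ examined by the loop satisfies $|\tau - \tau^*| \le \frac{\varepsilon}{4}$, since the thresholds are $\frac{\varepsilon}{2}$-spaced over $[0,\gamma]$. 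Now round the interior cut points down to the grid: set $a_i \coloneqq \lfloor a^*_i/h\rfloor h$ for $1 \le i \le n-1$ and $a_0 \coloneqq 0$, $a_n \coloneqq 1$. Rounding down is monotone, so $0 = a_0 \le a_1 \le \cdots \le a_n = 1$ and the intervals $I_i \coloneqq [a_{i-1}, a_i]$ form an ordered, grid-aligned partition of the cake with $0 \le a^*_i - a_i < h$ for each interior $i$. Consequently the $\ell_1$-displacement between the endpoint pair of $I_i$ and that of $I^*_i$ is at most $2h$, so $\gamma$-Lipschitzness yields $|f_i(I_i) - f_i(I^*_i)| \le 2\gamma h = \frac{\varepsilon}{4}$, and therefore $|f_i(I_i) - \tau| \le |f_i(I_i) - \tau^*| + |\tau^* - \tau| \le \frac{\varepsilon}{2}$. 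Thus $I_i \in \calF^\tau_i$ for every $i$, so the ordered interval selection instance $(\calF^\tau_1, \ldots, \calF^\tau_n)$ is feasible, and by \Cref{theorem:partition-dp} the call to \textsc{IntervalSelect} in iteration $\tau$ (or an earlier one) returns a feasible tuple. Combined with the easy direction, the returned cake division satisfies $|f_i(I_i) - f_j(I_j)| \le \varepsilon$ for all $i,j$.

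The main obstacle is the rounding argument in the third paragraph: one must snap the cut points of $\calI^*$ onto the grid while simultaneously (a) preserving the left-to-right ordering, (b) keeping each interval's two endpoints within $\ell_1$-distance $O(h)$ of the originals so the Lipschitz estimate applies, and (c) choosing $h$ and the $\tau$-spacing so that the two error terms --- at most $\frac{\varepsilon}{4}$ from rounding the cuts and at most $\frac{\varepsilon}{4}$ from guessing $\tau$ --- sum to at most $\frac{\varepsilon}{2}$, exactly the half-window defining $\calF^\tau_i$. Rounding every cut in the same direction is what makes (a) automatic, and the choices $h = \frac{\varepsilon}{8\gamma}$ and $\tau$-spacing $\frac{\varepsilon}{2}$ are precisely calibrated for (b) and (c); the hypothesis $\varepsilon < \frac{1}{4\gamma}$ only ensures the discretization is nontrivial and does not enter the error accounting.
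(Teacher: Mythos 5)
Your proposal is correct and follows essentially the same route as the paper's proof: bound $\tau^*\le\gamma$ via the Lipschitz condition against the empty interval, locate a grid threshold $\tau$ within $\frac{\varepsilon}{4}$ of $\tau^*$, round the cut points of the ordered equitable division from \Cref{theorem:eq-cake-nn} down to the grid so the ordering is preserved and each interval's value moves by at most $2\gamma\cdot\frac{\varepsilon}{8\gamma}=\frac{\varepsilon}{4}$, and conclude membership in the families $\calF^\tau_i$ so that \textsc{IntervalSelect} succeeds. The error accounting, the running-time count, and the ``easy direction'' all match the paper's argument.
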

\begin{proof}
First, note that the for-loop (Line \ref{line:eq-loop-start}) in the algorithm iterates $O\left( \frac{\gamma}{\varepsilon} \right)$ times. Further, each iteration itself runs in polynomial time -- this follows from the $O\left( \left( \frac{\gamma}{\varepsilon}\right)^2 \right)$ size bound on the constructed interval families $\calF^\tau_i$ and efficiency of the subroutine \textsc{IntervalSelect} (Theorem \ref{theorem:partition-dp}). These observations imply that the algorithm runs in time polynomial in $n$, $\gamma$, and $\frac{1}{\varepsilon}$. 

We will prove that in Algorithm \ref{algo:EQ-apx}, for at least one value of $\tau$, the call to the \textsc{IntervalSelect} subroutine (in Line \ref{line:if-int-sel}) succeeds in finding the desired intervals. Further, we will show that the returned cake division $\mathcal{I}$ is approximately equitable, as stated.  

Since the given cake valuations $f_i$ are nonnegative and continuous, \Cref{theorem:eq-cake-nn} ensures the existence a contiguous equitable cake division $\mathcal{I}^* = (I^*_1, \ldots, I^*_n)$ in which, for all $i \in [n]$, the assigned interval $I^*_i$ is to the left of $I^*_{i+1}$ in the cake. Write $\tau^* \coloneqq f_i(I^*_i)$ and note that $\tau^* \leq  \gamma$. This bound follows from applying the Lipschitz continuity between interval $I^*_i=[e^*_{i-1}, e^*_i]$ and the empty interval $[1/2, 1/2]$:\footnote{Recall that $f_i(\emptyset) = 0$, by normalization.}
\begin{align*}
\tau^* = f_i(I^*_i) = \left| f_i(I^*_i) - f_i([1/2, \ 1/2]) \right| \leq \gamma \left\| (e^*_{i-1}, e^*_i) - (1/2, \ 1/2) \right\|_1 = \gamma \ \left( \left|e^*_{i-1} -1/2\right| + \left|e^*_i - 1/2 \right| \right) \leq  \gamma.
\end{align*} 
Hence, the for-loop (Line \ref{line:eq-loop-start}) of the algorithm selects, at some point during its execution, a $\tau$  that is additively $\frac{\varepsilon}{4}$ close to $\tau^*$. 

We will first show that for a $\tau$ that satisfies $| \tau - \tau^*| \leq \frac{\varepsilon}{4}$, the if-condition in Line \ref{line:if-int-sel} executes successfully. That is, for such a $\tau$, there exist intervals $I_i \in \calF^{\tau}_i$ that uphold the requirements of \Cref{theorem:partition-dp}. Hence, the subroutine \textsc{IntervalSelect} successfully returns intervals for the $\tau$ at hand. 

The desired intervals $I_i$ are obtained by rounding the endpoints of the intervals $I^*_i$ to lie within the set $G \coloneqq \{0, \frac{\varepsilon}{8 \gamma}, \frac{2\varepsilon}{8\gamma}, \frac{3 \varepsilon}{8\gamma}, \ldots, 1 \}$; this is the set of grid points defined in Line \ref{line:eq-discretization} of the algorithm. Write $0=e^*_0 \leq e^*_1 \leq e^*_2 \leq \ldots e^*_{n-1} \leq e^*_n = 1$ to denote all the endpoints of the intervals $I^*_i$. The ordering of these intervals imply that $I^*_i = [e^*_{i-1}, e^*_i]$ for each $i \in [n]$. For each index $j \in [n]$,  define $e_j$ to be the point in $G$ that is closest to $e^*_j$ from below, $e_j \coloneqq \max \{ x \in G \mid  x \leq e^*_j \}$. By definition, $e_j$s satisfy $0 = e_0 \leq e_1 \leq e_2 \leq \ldots \leq e_n = 1$. We define interval $I_i \coloneqq [e_{i-1}, e_i]$ for each $i \in [n]$. Note that the intervals $I_1, \ldots, I_n$ are pairwise disjoint, they partition the cake $[0,1]$, and are ordered left to right according to their index. In addition, since each $e_i \in G$, every interval $I_i \in \mathcal{G}$; where, $\mathcal{G}$ is the collection defined in Line \ref{line:eq-discretization} of the algorithm. The values of the intervals $I_i$ satisfy 
\begin{align}
|f_i(I_i) - f_i(I^*_i)| & \leq \gamma \ \|(e_{i-1}, e_{i}) - (e^*_{i-1}, e^*_i) \|_1 \tag{$f_i$ is $\gamma$ Lipschitz} \nonumber \\
& = \gamma \left( \frac{\varepsilon}{8 \gamma} +  \frac{\varepsilon}{8 \gamma} \right) \tag{by the granularity of $G$} \nonumber \\
& = \frac{\varepsilon}{4} \label{ineq:gran}
\end{align}
Equation (\ref{ineq:gran}) and the fact that $| \tau - \tau^*| \leq \frac{\varepsilon}{4}$ give us $f_i(I_i) \in \left[\tau - \frac{\varepsilon}{2}, \tau + \frac{\varepsilon}{2} \right]$ for each $i \in [n]$. Therefore, for the $\tau$ at hand and each $i \in [n]$, interval $I_i$ is included in the family $\calF^\tau_i$ in Line \ref{line:eq-family-def} of the algorithm. This inclusion and the properties of the intervals $I_i$ ensure that, for the families $\calF^\tau_1, \ldots, \calF^\tau_n$, the conditions in \Cref{theorem:partition-dp} hold and, hence, during this subroutine call \textsc{IntervalSelect} (\Cref{algo:interval-dp}) will return the desired intervals. Hence, \Cref{algo:EQ-apx} is guaranteed to return a contiguous cake division $\calI = (I_1, \ldots, I_n)$.

We will complete the proof by showing that the division $\calI=(I_1, \ldots, I_n)$ returned by \Cref{algo:EQ-apx} is approximately equitable. This directly follows from the fact that the intervals $I_i$ were obtained via subroutine \textsc{IntervalSelect} and, hence, for each $i \in [n]$, interval $I_i \in \calF^\tau_i$. The construction of the families $\calF^\tau_i$ (Line \ref{line:eq-family-def}) ensures that $f_i(I_i) \in \left[\tau - \frac{\varepsilon}{2}, \tau + \frac{\varepsilon}{2} \right]$, for each $i \in [n]$. Therefore, the returned division $\calI$ is approximately equitable: $\left| f_i(I_i) - f_j(I_j)\right| \leq \varepsilon$ for all $i, j \in [n]$. The theorem stands proved. 
\end{proof}

\subsection{Finding Approximately Equitable Allocations}\label{section:eq-three-nonnegative}
This section presents our algorithm (\Cref{algo:EQThree-identical}) for finding approximately equitable allocations of indivisible items under nonnegative valuations. We will prove that, given any discrete fair division instance $\langle [n], [m], \{v_i\}_{i=1}^n \rangle$ with nonnegative valuations as input,  \Cref{algo:EQThree-identical} computes an allocation $\alloc = (A_1, \ldots, A_n)$ that satisfies: $|v_i(A_i) -  v_j(A_j)| \leq 5 \Lambda + 1$. Here, $\Lambda$ is the maximum marginal value across the valuations, i.e., for each agent $i \in [n]$, all subsets $S \subseteq [m]$, and items $g \in [m]$, we have $|v_i(S \cup \{g\}) - v_i(S)| \leq \Lambda$. Furthermore, the returned bundles are guaranteed to be nonempty, $A_i \neq \emptyset$. This property of nonempty bundles has important implications for the graph-theoretic applications mentioned earlier; later in~\Cref{remark:nonempty-condition}, we show that our algorithm works even without the requirement of nonempty bundles, resulting in a slightly better fairness guarantee. Also, we will assume, without loss of generality, that $\Lambda \geq 1$. 

For the given discrete fair division $\calD$, \Cref{algo:EQThree-identical} (\textsc{AllocApxEQ}) starts by invoking the \textsc{CakeConstruction} method (Section \ref{section:cake-reduction}) to obtain a cake division instance, $\calC = \langle [n], \{f_i\}_{i=1}^n \rangle$. Then, to the obtained cake valuations $f_i$, the algorithm adds a piecewise linear function $\beta: \mathbb{R}_+ \mapsto \mathbb{R}_+$.\footnote{We will show that this padding implies that the eventually computed cake division will consist of intervals of sufficiently large length. This length property will extend to ensuring that, in the final allocation, each agent receives a nonempty bundle.} For each $x \in \mathbb{R}_+$, the function $\beta$ is defined as $\beta(x) \coloneqq \min \left\{ m \Lambda x, \ 4 \Lambda + 1 \right\}$. Specifically, the algorithm considers modified cake valuations $\widehat{f}_i$ defined as $\widehat{f}_i(J) \coloneqq f_i(J) + \beta(\len(J))$, for each interval $J \subseteq [0,1]$. Note that $\beta$ depends only on the length of the considered interval $J$. 

Next, the algorithm computes an approximately equitable cake division $\calI$ for the cake division instance $\widehat{\calC} \coloneqq \left\langle [n], \left\{ \widehat{f}_i \right\}_{i=1}^n \right\rangle$. Finally, using subroutine \textsc{CakeRounding} (Section \ref{section:CakeRounding}), the algorithm rounds $\calI$ to compute allocation $\calA$ for the given discrete fair division instance $\calD$. Note that while $\calI$ is computed as an approximately equitable cake division in $\widehat{\calC}$, the rounding is performed considering $\calC$. For one, \textsc{CakeRounding} requires it to receive a cake division instance that is constructed out of the underlying discrete instance (\Cref{lemma:cake-rounding-2}). Even with this distinction, we will prove that the returned allocation $\calA$ upholds the desired approximate equitability (Theorem \ref{theorem:EQ3ApxComputation}).   

We also note that all the subroutines invoked in \Cref{algo:EQThree-identical} only require query access to $f_i(J)$ and $\widehat{f}_i(J)$, for selected intervals $J$. Such queries are referred to as \text{Eval} queries in the standard Robertson-Webb query model. In the current context, \text{Eval} queries can be directly implemented by evaluating the given valuations $v_i$ over the indivisible items. Hence, to execute \Cref{algo:EQThree-identical} and its subroutines, it suffices to have value-oracle access to the valuations $v_i$s. 

\floatname{algorithm}{Algorithm}
\begin{algorithm}[h]
\caption{\textsc{AllocApxEQ}} \label{algo:EQThree-identical}
\begin{tabularx}{\textwidth}{X}
{\bf Input:} Discrete fair division instance $\calD = \langle [n], [m], \{v_i\}_{i=1}^n \rangle$ wherein the valuations are nonnegative and have marginal values at most $\Lambda$. \\
{\bf Output:} Allocation $\alloc = (A_1, \ldots, A_n)$.
\end{tabularx}
  \begin{algorithmic}[1]
		\STATE Set cake instance $\calC = \langle [n], \{f_i\}_{i=1}^n \rangle = \textsc{CakeConstruction}(\calD)$. \label{line:construct-f}
		\STATE Define function $\beta(x) \coloneqq \min \left\{ m \Lambda x, \ 4 \Lambda + 1 \right\}$ for each $x \in \mathbb{R}_+$. 
		\STATE For each agent $i \in [n]$ and each interval $J \subseteq [0,1]$, define cake valuation $\widehat{f}_i(J) \coloneqq f_i(J) + \beta \left( \len(J) \right)$. \label{line:define-hat-f}
		 \STATE With cake division instance $\widehat{\calC} \coloneqq \left\langle [n], \left\{ \widehat{f}_i \right\}_{i=1}^n \right\rangle$ and parameter $\varepsilon = \frac{1}{8 m \Lambda}$ as input, invoke \textsc{CakeApxEQ} (\Cref{algo:EQ-apx}) to compute cake division $\calI \coloneqq \textsc{CakeApxEQ} \left( \widehat{\calC} , \varepsilon \right)$. \label{line:call-cake-apx-eq}
		 \RETURN Allocation $\calA = \textsc{CakeRounding}(\calD, \calC, \calI)$. \label{line:call-round}
		\end{algorithmic}
\end{algorithm}

The theorem below provides our main result for \Cref{algo:EQThree-identical}. Before proving the theorem, we establish some useful lemmas. 

\begin{restatable}{theorem}{EQThreeApxComputation}\label{theorem:EQ3ApxComputation}
Let $\calD = \langle [n], [m], \{v_i\}_{i=1}^n \rangle$ be a discrete fair division instance in which the valuations are nonnegative and have marginal values at most $\Lambda$. Then, with $\calD$ as input, \Cref{algo:EQThree-identical} computes---in time polynomial in $n$, $m$ and $\Lambda$---an allocation $\alloc = (A_1, \ldots, A_n)$ with nonempty bundles ($A_i \neq \emptyset$) that satisfy $|v_i(A_i) - v_j(A_j)| \leq 5 \Lambda + 1$, for all $i, j \in [n]$.  
\end{restatable}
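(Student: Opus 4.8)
The plan is to run \textsc{AllocApxEQ} and verify its three promised properties — polynomial running time, nonemptiness of the $A_i$, and the $5\Lambda+1$ inequity bound — relying on \Cref{theorem:fptas-for-equitable-cake-div} for the cake phase and on \Cref{lemma:cake-rounding-2} (and the analysis inside \Cref{lemma:cake-rounding}) for the rounding phase. The running time is routine: \textsc{CakeConstruction} is $O(m)$; each evaluation $\widehat f_i([p,q])$ costs $O(1)$ value-oracle calls to $v_i$, because an interval $[p,q]$ fractionally covers at most two items, so $V_i(b([p,q]))$ is a weighted sum of the four terms $v_i(B)$, $v_i(B{+}\ell)$, $v_i(B{+}r)$, $v_i(B{+}\ell{+}r)$; and $\widehat f_i$ is continuous, normalized ($\beta(0)=0$), nonnegative (both $f_i$ and $\beta$ are nonnegative), and $\gamma$-Lipschitz with $\gamma = O(m\Lambda)$ — since $f_i$ is $m\Lambda$-Lipschitz (the partials of $V_i$ are at most $\Lambda$ in absolute value and $b$ is $m$-Lipschitz in $\ell_1$) and $\beta\circ\len$ has Lipschitz constant $m\Lambda$. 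With $\varepsilon=\frac{1}{8m\Lambda}$ the hypotheses of \Cref{theorem:fptas-for-equitable-cake-div} are met (choosing $\gamma$ to just accommodate $\varepsilon$), so \textsc{CakeApxEQ} runs in time polynomial in $n,\gamma,1/\varepsilon$, i.e.\ in $n,m,\Lambda$, and returns $\calI=(I_1,\dots,I_n)$ with $|\widehat f_i(I_i)-\widehat f_j(I_j)|\le\varepsilon$ for all $i,j$; \textsc{CakeRounding} is polynomial, so the total time is polynomial in $n,m,\Lambda$.

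The heart of the argument is a quantitative control of the lengths $\len(I_i)$, which is exactly what the padding $\beta$ is designed to buy. I would use two elementary facts: $f_i(I)\le m\Lambda\,\len(I)$ for every interval $I$ (from $m\Lambda$-Lipschitzness and $f_i(\emptyset)=0$), and $\beta(x)=m\Lambda x$ for $x\le\frac{4\Lambda+1}{m\Lambda}$. Write $t^\ast=\min_i\len(I_i)$ and $\widehat\tau=\min_i\widehat f_i(I_i)$. The shortest interval gives $\widehat\tau\le \widehat f_{j^\ast}(I_{j^\ast})\le 2m\Lambda\,t^\ast$, while the longest (of length $\ge 1/n$) gives $\widehat\tau\ge\beta(1/n)-\varepsilon$; combining these with $\sum_i\len(I_i)=1$ and $\beta(\len(I_i))\le\widehat f_i(I_i)\le\widehat\tau+\varepsilon$ for every $i$ yields, after a short computation, a dichotomy: either every interval is \emph{saturated} ($\len(I_i)\ge\frac{4\Lambda+1}{m\Lambda}>\frac2m$, so $\beta$ equals its cap $4\Lambda+1$ everywhere), or no interval is saturated and all $\beta(\len(I_i))$ lie in a band of width controlled by $\widehat\tau$; moreover the mere existence of one saturated interval forces $\widehat\tau\ge 4\Lambda+1-\varepsilon$ and hence forces every interval to have length exceeding $\frac2m$. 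I expect pinning down the exact band width (and thereby the final constant) to be delicate, but the shape of the dichotomy — ``all intervals long'' versus ``all $\beta$-values clustered'' — is the key output.

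For equitability, feed $\calI$ into \textsc{CakeRounding} to obtain $\calA$. Since $\widehat f_i=f_i+\beta\circ\len$, we have $|f_i(I_i)-f_j(I_j)|\le \varepsilon+|\beta(\len(I_i))-\beta(\len(I_j))|$, so $\calI$ is $\alpha$-equitable for $f$ with $\alpha=\varepsilon+\max_{i,j}|\beta(\len(I_i))-\beta(\len(I_j))|$. Applying \Cref{lemma:cake-rounding-2} gives, for each $i,j$, sets $A'_i,A'_j$ with $v_i(A'_i)+\alpha\ge v_j(A'_j)$, where $A'_i$ differs from $A_i$ in at most one item and $A'_j$ in at most two; since marginals are at most $\Lambda$, this converts to $v_i(A_i)-v_j(A_j)\ge-(3\Lambda+\alpha)$, and by symmetry $|v_i(A_i)-v_j(A_j)|\le 3\Lambda+\varepsilon+\max_{i,j}|\beta(\len(I_i))-\beta(\len(I_j))|$. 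The band estimate from the previous paragraph, using the specific slope $m\Lambda$ and cap $4\Lambda+1$ of $\beta$ and the choice $\varepsilon=\frac{1}{8m\Lambda}$, must then be shown to keep the last term within $2\Lambda+1-\varepsilon$, giving the claimed $5\Lambda+1$. (The auxiliary fact $|v_i(A_i)-f_i(I_i)|\le 2\Lambda$, from $A_i$ being one of the four candidate sets and $f_i(I_i)$ a convex combination of their values, gives a cross-check of this bookkeeping.)

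For nonemptiness, the saturated case of the dichotomy is immediate: every interval has length $>\frac2m$, hence fully contains some cell, so $B_i\subseteq A_i$ is nonempty. In the clustered case I would argue directly from \textsc{CakeRounding}: since $B_i\subseteq A_i$ always, the only route to $A_i=\emptyset$ is $B_i=\emptyset$ (so $I_i$ lies inside at most two consecutive cells) together with the left fractional item $\ell_i$ having already been assigned; but $\ell_i=r_{i-1}$, and an agent $i-1$ that took $r_{i-1}$ has $A_{i-1}\ne\emptyset$, so — using the comparable-length property (no two consecutive intervals can both collapse into a single cell) and breaking ties in the choice of $A^\ast_i$ toward the nonempty candidate — such an empty bundle cannot actually arise. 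I expect this combinatorial case analysis of \textsc{CakeRounding}, together with getting the constant in the previous paragraph down to exactly $5\Lambda+1$ rather than a larger multiple of $\Lambda$, to be the main obstacle; the rest is bookkeeping around the inequalities $f_i(I)\le m\Lambda\,\len(I)$ and $|v_i(A_i)-f_i(I_i)|\le 2\Lambda$ and the properties of $\beta$.
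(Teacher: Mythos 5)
Your overall architecture matches the paper's: pad the constructed cake valuations with $\beta$, run \textsc{CakeApxEQ} with $\varepsilon=\frac{1}{8m\Lambda}$, transfer approximate equitability from $\widehat f$ back to $f$, and finish with \Cref{lemma:cake-rounding-2} plus the marginal bound $\Lambda$ to get $\Lambda+(2\Lambda+1)+2\Lambda=5\Lambda+1$. The genuine gap is in the middle: you set up a dichotomy (``all intervals saturated'' versus ``no interval saturated, $\beta$-values clustered'') and then leave the second branch unresolved on both counts — you concede the band width is not pinned down, and your nonemptiness argument there is wrong. The paper avoids the second branch entirely by proving (Lemma~\ref{lemma:call-to-cake-eq}) that \emph{every} interval satisfies $\len(I_i)\ge \frac{2}{m}+\frac{1}{2\Lambda m^2}$: when $n\le m/5$ the longest interval has length at least $1/n\ge 5/m$, so $\beta$ of it hits the cap $4\Lambda+1$, whereas a short interval would have $\widehat f$-value below $4\Lambda+\frac1m$, a gap exceeding $\tfrac12>\varepsilon$ — contradiction. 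You actually write down the two ingredients ($\widehat\tau\ge\beta(1/n)-\varepsilon$ and ``one saturated interval forces all intervals long'') but never combine them, which is the one-line step that collapses your dichotomy and makes the ``clustered'' case vacuous. With the uniform length bound in hand, $\beta(\len(I_j))\ge 2\Lambda+\frac{1}{2m}$ and $\beta(\len(I_i))\le 4\Lambda+1$ give $\alpha\le 2\Lambda+1$ directly (the paper's Lemma~\ref{lemma:apx-eq-wrt-f}), and nonemptiness is immediate since $\len(I_i)>\frac2m$ forces $B_i\neq\emptyset$.

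Two further points. First, your clustered-case nonemptiness argument does not work as sketched: \textsc{CakeRounding} as specified can return $A_i=\emptyset$ when $B_i=\emptyset$, $\ell_i\notin U$, and $A^*_i\in\{B_i,B_i+\ell_i\}$ (the table then sets $A_i=B_i$); the fact that agent $i-1$ received $r_{i-1}$ and has a nonempty bundle does not prevent this, and ``breaking ties toward the nonempty candidate'' is not available when $A^*_i=B_i$ is the strict maximizer. Second, the saturation argument needs $n\le m/5$; for $n>m/5$ the longest interval can fail to be saturated and your dichotomy lands in the broken branch. The paper handles this regime separately (footnote to Lemma~\ref{lemma:call-to-cake-eq}): with $n>m/5$ one can directly form nonempty bundles of size at most five, whose values all lie in $[0,5\Lambda]$. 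You would need to add this case split for the proof to be complete.
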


We will next bound the Lipschitz constant (see equation (\ref{eqn:Lipschitz})) of the constructed cake valuations $f_i$ and $\widehat{f}_i$.  
\begin{lemma}
\label{lemma:Lip-f}
The cake valuations $f_i$ (obtained in Line \ref{line:construct-f} of \Cref{algo:EQThree-identical}) are $\left( m \Lambda \right)$-Lipschitz continuous and nonnegative. 
\end{lemma}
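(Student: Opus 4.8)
The plan is to decompose $f_i = V_i \circ b$ (the multilinear extension $V_i$ of $v_i$ composed with the additive map $b$ from \Cref{algo:cake-reduction}) and to control the two pieces separately. Nonnegativity is immediate: since $v_i(S) \geq 0$ for every $S \subseteq [m]$, the multilinear extension $V_i(x) = \sum_{S \subseteq [m]} v_i(S) \prod_{j \in S} x_j \prod_{j \notin S}(1-x_j)$ is a nonnegative combination of nonnegative terms, hence $V_i(x) \geq 0$ for all $x \in [0,1]^m$; as $b$ maps intervals into $[0,1]^m$, we get $f_i(J) = V_i(b(J)) \geq 0$ for every interval $J \subseteq [0,1]$.

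For Lipschitz continuity, the first step is to show that $V_i$ is $\Lambda$-Lipschitz with respect to the $\ell_1$-norm on $[0,1]^m$. The key observation is that, for each coordinate $k$, the partial derivative $\partial V_i / \partial x_k$ at any point $x \in [0,1]^m$ equals a convex combination (over $S \subseteq [m]\setminus\{k\}$, with weights $\prod_{j \in S} x_j \prod_{j \in [m]\setminus(S+k)}(1-x_j)$) of the marginals $v_i(S+k) - v_i(S)$, each of which has absolute value at most $\Lambda$; hence $|\partial V_i/\partial x_k(x)| \leq \Lambda$, i.e. $\|\nabla V_i(x)\|_\infty \leq \Lambda$ everywhere. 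Integrating $\nabla V_i$ along the segment from $y$ to $x$ and using $|\langle \nabla V_i, x-y \rangle| \leq \|\nabla V_i\|_\infty \, \|x-y\|_1$ then gives $|V_i(x) - V_i(y)| \leq \Lambda \, \|x-y\|_1$.

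The second step is to bound how far $b$ can stretch distances: $\|b([\ell,r]) - b([\ell',r'])\|_1 \leq m\big(|\ell-\ell'| + |r-r'|\big)$ for all intervals $[\ell,r],[\ell',r'] \subseteq [0,1]$. The cleanest route is to set $c(t) \coloneqq b([0,t])$, so that $c_k(t) = m \cdot \mathrm{len}\big([0,t] \cap [\tfrac{k-1}{m},\tfrac{k}{m}]\big)$; since the intervals $[\tfrac{k-1}{m},\tfrac{k}{m}]$ partition $[0,1]$, each coordinate of $c$ is nondecreasing in $t$ and $\sum_k c_k(t) = mt$, which forces $\|c(t) - c(t')\|_1 = m|t-t'|$. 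Additivity of $b$ gives $b([\ell,r]) = b([0,r]) - b([0,\ell]) = c(r) - c(\ell)$, so a triangle-inequality split of $\|(c(r) - c(r')) - (c(\ell) - c(\ell'))\|_1$ yields the claim. Composing the two bounds: $|f_i([\ell,r]) - f_i([\ell',r'])| = |V_i(b([\ell,r])) - V_i(b([\ell',r']))| \leq \Lambda \, \|b([\ell,r]) - b([\ell',r'])\|_1 \leq m\Lambda \, \|(\ell,r) - (\ell',r')\|_1$, which is exactly $(m\Lambda)$-Lipschitz continuity in the sense of equation (\ref{eqn:Lipschitz}).

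I do not expect a genuine obstacle here; the one place that needs a little care is the second step, where introducing $c(t) = b([0,t])$ sidesteps the need to pass through a possibly degenerate intermediate interval. An equivalent alternative is to use the integral representation $b_k(I) = m\int_I \mathbf{1}_{[(k-1)/m,\,k/m]}$ to get $\|b([\ell,r]) - b([\ell',r'])\|_1 \leq m \cdot \mathrm{len}\big([\ell,r]\,\triangle\,[\ell',r']\big)$ and then note $\mathrm{len}\big([\ell,r]\,\triangle\,[\ell',r']\big) \leq |\ell-\ell'| + |r-r'|$; either version works.
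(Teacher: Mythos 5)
Your proposal is correct and follows essentially the same route as the paper: decompose $f_i = V_i \circ b$, show $V_i$ is $\Lambda$-Lipschitz in $\ell_1$ (the paper does this via single-coordinate finite differences, which for a multilinear function is equivalent to your partial-derivative bound) and $b$ is $m$-Lipschitz, then compose. Your treatment of the $b$ bound via $c(t) = b([0,t])$ fills in a detail the paper dismisses as a ``component-wise analysis,'' but the overall argument is the same.
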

\begin{proof}
Each valuation $f_i$ is defined as the composition $V_i$ and $b$ (see Section \ref{section:cake-reduction}). Here, $V_i$ is the multilinear extension of the given valuation $v_i$, and, for any interval $I \subseteq [0,1]$, the components of the function $b(I) = (b_1(I), \ldots, b_m(I)) \in [0,1]^m$ are defined as follows $b_k(I) = m \  \len \left(I \cap \left[\frac{k-1}{m}, \frac{k}{m} \right] \right)$. 

As mentioned previously, $f_i$s are continuous. Also note that the multilinear extensions, $V_i$s, are defined as an expected value with respect to nonnegative valuations $v_i$. Hence, each cake valuation $f_i = V_i \circ b$ is also nonnegative.  

We will next show that the Lipschitz constants of $V_i$ and $b$ are $\Lambda$ and $m$, respectively. Given that $f_i$ is the composition of these two functions, $f_i = V_i \circ b$, we obtain, as stated, that $f_i$ is $\left( m \Lambda \right)$-Lipschitz continuous.   

For the Lipschitz constant of $b$, consider any pair of intervals $[\ell,r], [\ell', r'] \subseteq [0,1]$. Now, a component-wise analysis gives us $\| b([\ell, r]) - b([\ell', r') \|_1 \leq m \left( |\ell - \ell'| + \ | r - r'| \right)$. Hence, the Lipschitz constant of $b$ is $m$. 

For the Lipschitz constant of $V_i$, consider any pair of vectors $x \in  [0,1]^m$ and $x + \eta e_g \in [0,1]^m$; here, $e_g \in [0,1]^m$ is vector with a $1$ in its $g$th component and zeros elsewhere. For any subset $S \subseteq [m] \setminus \{g\}$, write $p(S) \coloneqq  \prod_{j \in S} x_j \prod_{ j \in [m] \setminus \left(S + g \right)} (1 - x_j)$. That is, $p(S)$ denotes the probability of drawing $S$ from the product distribution induced by $(x_1, x_2, \ldots, x_{g-1}, x_{g+1}, \ldots, x_m)$.  Note that $\sum_{S \subseteq [m] \setminus \{g\}} \ p(S) = 1$. For the multilinear extension $V_i(x)$ we have 
\begin{align}
V_i(x) & = \sum_{T \subseteq [m]}  \left( \prod_{j \in T} x_j \allowbreak  \ \prod_{j \in [m] \setminus T} (1 - x_j) \right) v_i(T) = \sum_{S \subseteq [m] \setminus \{g\}} p(S) \Big(x_g v_i(S + g) \ + \  \left( 1-x_g \right) v_i(S) \Big) \label{eqn:split-mle}
\end{align} 
Similarly, $V_i(x + \eta e_g ) =  \sum_{S \subseteq [m] \setminus \{g\}} p(S) \Big( \left(x_g + \eta \right) v_i(S + g) \ + \  \left( 1-x_g - \eta \right) v_i(S) \Big)$. The difference of this equation and (\ref{eqn:split-mle}) gives us 
\begin{align}
V_i(x + \eta e_g) - V_i(x) & = \sum_{S \subseteq [m] \setminus \{g\}} p(S) \Big( \eta v_i(S + g) - \eta v_i(S) \Big) = \eta \sum_{S \subseteq [m] \setminus \{g\}} p(S) \Big( v_i(S + g) - v_i(S) \Big) \label{eq:mle-diff}
\end{align}
Using the triangle inequality and the bound $|v_i(S + g) - v_i(S)| \leq \Lambda$, from equation (\ref{eq:mle-diff}) we obtain 
\begin{align}
|V_i(x + \eta e_g) - V_i(x) | \leq \eta \sum_{S \subseteq [m] \setminus \{g\}} p(S) \left|  v_i(S + g) - v_i(S) \right|  \leq  \eta  \sum_{S \subseteq [m] \setminus \{g\}} p(S) \Lambda  = \eta \Lambda \label{ineq:mle-abs-diff}
\end{align} 
Here, the last equality follows from the fact that $\sum_{S \subseteq [m] \setminus \{g\}} \ p(S) = 1$. 

Therefore, for any two vectors $x, y \in [0,1]^m$, the triangle inequality and a component-by-component application of inequality (\ref{ineq:mle-abs-diff}) (with $\eta$s set as $|y_g - x_g|$s) give us 
\begin{align}
|V_i(y) - V_i(x)| \leq \Lambda \left( |y_1 - x_1| + \ldots + |y_m - x_m| \right) = \Lambda \| y - x  \|_1
\end{align}
Hence, $V_i$ is $\Lambda$-Lipschitz. As mentioned, this bound implies that the cake valuation $f_i = V_i \circ b$ is a $(m \Lambda)$-Lipschitz continuous function. The lemma stands proved. 
\end{proof}

Note that the piecewise linear function $\beta$ is $(m \Lambda)$-Lipschitz. Also, for any $x \geq 0$, the function value $\beta(x) \geq 0$, i.e., $\beta$ is nonnegative when applied to lengths of intervals. Hence, using the definition of the cake valuation $\widehat{f}_i(I) = f_i(I) + \beta(\len(I))$, we directly obtain the following corollary from Lemma \ref{lemma:Lip-f}.
\begin{corollary}\label{corollary:Lip-hat-f}
The cake valuations $\widehat{f}_i$ (obtained in Line \ref{line:define-hat-f} of \Cref{algo:EQThree-identical}) are $\left( 2 m \Lambda \right)$-Lipschitz continuous and nonnegative. 
\end{corollary}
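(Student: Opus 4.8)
The plan is to obtain the corollary directly from Lemma~\ref{lemma:Lip-f} by writing $\widehat{f}_i = f_i + (\beta \circ \len)$ and treating the two summands separately. Lemma~\ref{lemma:Lip-f} already supplies that $f_i$ is $(m\Lambda)$-Lipschitz and nonnegative, so the only thing left to establish is that the map $I \mapsto \beta(\len(I))$ is itself $(m\Lambda)$-Lipschitz and nonnegative. The corollary then follows from two elementary facts about the metric used in~\eqref{eqn:Lipschitz}: a sum of an $L_1$-Lipschitz and an $L_2$-Lipschitz function is $(L_1 + L_2)$-Lipschitz, and a sum of nonnegative functions is nonnegative; together these give the claimed Lipschitz constant $m\Lambda + m\Lambda = 2m\Lambda$ together with nonnegativity of $\widehat{f}_i$.

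To handle the term $\beta \circ \len$, I would first observe that the length map is $1$-Lipschitz with respect to the $\ell_1$ norm on interval endpoints:
\[
|\len([\ell,r]) - \len([\ell',r'])| = |(r-\ell) - (r'-\ell')| \le |\ell - \ell'| + |r - r'| = \| (\ell, r) - (\ell', r') \|_1.
\]
Next, $\beta(x) = \min\{m\Lambda x,\, 4\Lambda + 1\}$ is $(m\Lambda)$-Lipschitz on $\mathbb{R}_{\ge 0}$, being a pointwise minimum of the $(m\Lambda)$-Lipschitz map $x \mapsto m\Lambda x$ and the constant (hence $0$-Lipschitz) map $x \mapsto 4\Lambda + 1$. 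Composing a $1$-Lipschitz inner map with an $(m\Lambda)$-Lipschitz outer map leaves the constant at $m\Lambda$, so $I \mapsto \beta(\len(I))$ is $(m\Lambda)$-Lipschitz. Nonnegativity of $\beta$ on $\mathbb{R}_{\ge 0}$ is immediate since $m\Lambda x \ge 0$ and $4\Lambda + 1 > 0$, hence $\beta(\len(I)) \ge 0$ for every interval $I \subseteq [0,1]$.

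Putting these together, for intervals $I = [\ell,r]$ and $I' = [\ell',r']$ in $[0,1]$ we obtain
\[
|\widehat{f}_i(I) - \widehat{f}_i(I')| \le |f_i(I) - f_i(I')| + |\beta(\len(I)) - \beta(\len(I'))| \le 2m\Lambda \, \| (\ell, r) - (\ell', r') \|_1,
\]
and $\widehat{f}_i(I) = f_i(I) + \beta(\len(I)) \ge 0$. I do not anticipate a genuine obstacle here: the corollary is essentially a one-line consequence of Lemma~\ref{lemma:Lip-f}, and the only point requiring a little care is matching the $\ell_1$-metric convention of~\eqref{eqn:Lipschitz} so that composition does not inflate the Lipschitz constant and the sum adds the two constants exactly.
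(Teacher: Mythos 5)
Your proposal is correct and follows essentially the same route as the paper: the paper likewise observes that $\beta$ is $(m\Lambda)$-Lipschitz and nonnegative and adds this to the $(m\Lambda)$-Lipschitz bound of Lemma~\ref{lemma:Lip-f} to get the constant $2m\Lambda$. Your write-up merely makes explicit the $1$-Lipschitzness of the length map and the pointwise-minimum argument, which the paper leaves implicit.
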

 
The following lemma asserts that the algorithm---via the call to the \textsc{CakeApxEQ} (\Cref{algo:EQ-apx})---successfully finds a cake division $\calI$ in which the interval lengths are sufficiently large. 
\begin{lemma}
\label{lemma:call-to-cake-eq}
The call to \textsc{CakeApxEQ} (in Line \ref{line:call-cake-apx-eq}) successfully finds---in time polynomial in $n$, $m$, and $\Lambda$---a contiguous cake division $\calI=(I_1, \ldots, I_n)$ with the properties that 

\noindent
(i) $\left| \widehat{f}_i(I_i) - \widehat{f}_j (I_j) \right| \leq \frac{1}{8 m \Lambda}$, for all $i, j \in [n]$, and 

\noindent
(ii) $\len(I_i) \geq \frac{2}{m} + \frac{1}{2 \Lambda m^2}$, for all $i \in [n]$. 
\end{lemma}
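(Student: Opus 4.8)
The plan is to derive both claims directly from the approximate-equitability guarantee for cake division, \Cref{theorem:fptas-for-equitable-cake-div}, after checking that the padded valuations $\widehat f_i$ meet its hypotheses, and then to read the interval-length bound off the fact that the padding $\beta$ pins the common value of the computed division near $4\Lambda+1$.

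For claim (i) and the running time: \Cref{corollary:Lip-hat-f} gives that the valuations $\widehat f_i$ passed to \textsc{CakeApxEQ} in Line \ref{line:call-cake-apx-eq} are nonnegative and $(2m\Lambda)$-Lipschitz, and the call uses $\varepsilon=\tfrac{1}{8m\Lambda}$, which lies in the admissible range $\bigl(0,\tfrac{1}{4\gamma}\bigr)$ for $\gamma=2m\Lambda$ (should strict inequality be required, replace $\varepsilon$ by $(1-\eta)\varepsilon$ with $\eta>0$ arbitrarily small, which only tightens everything below). Invoking \Cref{theorem:fptas-for-equitable-cake-div} then produces, in time polynomial in $n$, $\gamma=2m\Lambda$, and $\tfrac1\varepsilon=8m\Lambda$---hence polynomial in $n,m,\Lambda$---a contiguous cake division $\calI=(I_1,\dots,I_n)$ of $\widehat{\calC}$ with $\bigl|\widehat f_i(I_i)-\widehat f_j(I_j)\bigr|\le\varepsilon=\tfrac{1}{8m\Lambda}$ for all $i,j$, which is precisely claim (i). (The Robertson--Webb \textsc{Eval} queries on $\widehat f_i$ that \textsc{CakeApxEQ} needs are implementable from a value oracle for $v_i$, since $b([p,q])$ has at most two non-integral coordinates; cf.\ the remarks preceding \Cref{algo:EQThree-identical}.)

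For claim (ii): the point is that a short interval cannot carry enough $\widehat f$-value to be approximately equitable with the longest one. Since $\sum_i\len(I_i)=1$, fix $i_0$ with $\len(I_{i_0})\ge\tfrac1n$; as $f_{i_0}\ge0$, this gives $\widehat f_{i_0}(I_{i_0})\ge\beta\bigl(\len(I_{i_0})\bigr)\ge\beta\bigl(\tfrac1n\bigr)$. In the regime $\tfrac1n\ge\tfrac{4\Lambda+1}{m\Lambda}$---i.e.\ $n$ small enough that $\beta$ has saturated at its cap; I expect the complementary regime to be absorbed into a separate and essentially trivial case in the proof of \Cref{theorem:EQ3ApxComputation}---this yields $\widehat f_{i_0}(I_{i_0})\ge 4\Lambda+1$. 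Combining with claim (i), every $j\in[n]$ has $\widehat f_j(I_j)\ge 4\Lambda+1-\varepsilon$. On the other hand, $\widehat f_j$ is $(2m\Lambda)$-Lipschitz (\Cref{corollary:Lip-hat-f}) and vanishes on a degenerate interval, so $\widehat f_j(I_j)\le 2m\Lambda\,\len(I_j)$. Hence $\len(I_j)\ge\frac{4\Lambda+1-\varepsilon}{2m\Lambda}$, and substituting $\varepsilon=\tfrac1{8m\Lambda}$ and simplifying (the inequality reduces to $8\Lambda(m-1)\ge1$, true for $\Lambda\ge1$ and $m\ge2$, the case $m=1$ being degenerate) gives $\len(I_j)\ge\frac2m+\frac1{2\Lambda m^2}$, i.e.\ claim (ii).

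The main obstacle is this last step: one needs $\widehat f_j(I_j)$ to exceed $4\Lambda+\tfrac1m=2m\Lambda\bigl(\tfrac2m+\tfrac1{2\Lambda m^2}\bigr)$ with a margin that survives the $\varepsilon$-slack of approximate equitability, which is exactly why $\beta$ is capped at $4\Lambda+1$ and why \textsc{AllocApxEQ} invokes \textsc{CakeApxEQ} with the fine tolerance $\tfrac1{8m\Lambda}$ rather than something coarser---a larger $\varepsilon$ would let the approximation error swallow the $\tfrac1m$ cushion. The only genuinely delicate point beyond bookkeeping is ensuring $\beta$ is saturated on the longest interval, equivalently that $n$ is not too large relative to $m$; everything else is already packaged in \Cref{theorem:fptas-for-equitable-cake-div} and \Cref{corollary:Lip-hat-f}.
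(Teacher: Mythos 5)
Your proof is correct and follows essentially the same route as the paper: part (i) is the identical application of \Cref{theorem:fptas-for-equitable-cake-div} via \Cref{corollary:Lip-hat-f}, and part (ii) is simply the contrapositive of the paper's contradiction argument --- the saturated padding forces the longest interval (of length $\geq 1/n$, with $n$ small relative to $m$; the paper likewise relegates the complementary case to a footnote resolved in \Cref{theorem:EQ3ApxComputation}) to have $\widehat{f}$-value at least $4\Lambda+1$, approximate equitability propagates this to every interval up to $\varepsilon$, and the $(2m\Lambda)$-Lipschitz bound converts value into length, with the same arithmetic margin $1 - \frac{1}{8m\Lambda} \geq \frac{1}{m}$. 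You also correctly flag a boundary technicality the paper glosses over, namely that $\varepsilon = \frac{1}{8m\Lambda}$ sits exactly at the endpoint $\frac{1}{4\gamma}$ of the admissible open range for $\gamma = 2m\Lambda$, and your fix of shrinking $\varepsilon$ slightly is sound.
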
 
\begin{proof}
\textsc{CakeApxEQ} is invoked for cake division $\widehat{\calC} \coloneqq \left\langle [n], \left\{ \widehat{f}_i \right\}_{i=1}^n \right\rangle$. Note that, here, the cake valuations $\widehat{f}_i$s are normalized, nonnegative and $\left( 2 m \Lambda \right)$-Lipschitz continuous (Corollary \ref{corollary:Lip-hat-f}). Also, the selected parameter $\varepsilon = \frac{1}{8 m \Lambda}$. Hence, Theorem \ref{theorem:fptas-for-equitable-cake-div} holds and we obtain that the call to  \textsc{CakeApxEQ} runs in time polynomial in $n$, $m$, and $\Lambda$. In addition, Theorem \ref{theorem:fptas-for-equitable-cake-div} ensures that the returned contiguous cake division $\calI = (I_1, \ldots, I_n)$ satisfies $|\widehat{f}_i(I_i) - \widehat{f}_j(I_j) | \leq \varepsilon = \frac{1}{8 m \Lambda}$. This shows that property (i) in the lemma statement holds. 

For property (ii), assume, towards a contradiction, that, in the returned cake division $\calI$, there exists an interval $I_a$ with length $\len(I_a) < \frac{2}{m} + \frac{1}{2 \Lambda m^2}$. Note that the $n$ intervals in $\calI$ partition the cake $[0,1]$ and, hence, there exists an interval $I_b$ with length $\len(I_b) \geq  1/n$. We will show that the values of these intervals will differ by more than $\varepsilon = \frac{1}{8 m \Lambda}$; this fact would contradict property (i) established above. Towards the contradiction, first we upper bound the value of the interval $I_a$
\begin{align}
\widehat{f}_a (I_a) & = f_a(I_a) + \beta(\len(I_a)) \nonumber \\ 
& \leq m \Lambda \  \len(I_a) +  \beta(\len(I_a)) \tag{$f_a$ is $(m\Lambda)$-Lipschitz -- Lemma \ref{lemma:Lip-f}} \\
& < m \Lambda \left( \frac{2}{m} + \frac{1}{2 \Lambda m^2} \right) \ + \ \beta\left( \frac{2}{m} + \frac{1}{2 \Lambda m^2} \right) \tag{$\len(I_a) < \frac{2}{m} + \frac{1}{2 \Lambda m^2}$ and $\beta$ is non-decreasing} \\
& = 2 \Lambda + \frac{1}{2m} \ + \  \min \left\{ m \Lambda \left( \frac{2}{m} + \frac{1}{2 \Lambda m^2} \right), \ \ 4 \Lambda + 1 \right\} \nonumber \\
& = 4 \Lambda + \frac{1}{m} \label{ineq:val-Ia}
\end{align}
Next, we lower bound the value of the interval $I_b$. Here, we assume that $n \leq  \frac{m}{5}$.\footnote{In the complementary case, $n > m/5$, one can consider, for Theorem \ref{theorem:EQ3ApxComputation} directly, bundles $A_i$ of size at most five. This ensures $0 \leq v_i(A_i) \leq 5 \Lambda$ and, hence, the bound of $5 \Lambda + 1$, stated in the theorem, holds.}  
\begin{align}
\widehat{f}_b(I_b) & = f_b(I_b) + \beta(\len(I_b))  \nonumber \\
& \geq \beta\left( \len(I_b) \right) \tag{$f_b$ is nonnegative} \nonumber \\  
& \geq \beta \left( \frac{1}{n} \right) \tag{$\len(I_b) \geq  \frac{1}{n}$ and $\beta$ is non-decreasing} \\
& \geq \beta \left( \frac{5}{m} \right) \tag{$n \leq m/5$} \\
& = \min\left\{ m \Lambda \frac{5}{m}, \ \ 4 \Lambda + 1 \right\}  \nonumber \\
& = 4 \Lambda + 1 \label{ineq:val-Ib} 
\end{align} 
We can assume, without loss of generality, that $m \geq 2$. Hence, inequalities (\ref{ineq:val-Ia}) and (\ref{ineq:val-Ib}) reduce to $\widehat{f}_b(I_b) > \widehat{f}_a (I_a)  + \frac{1}{2}$. This bound, however, contradicts property (i) established above, since $\frac{1}{8 m \Lambda} < \frac{1}{2}$; recall that $m \geq 2$ and $\Lambda \geq 1$. Therefore, by way of contradiction, we obtain that, in the computed cake division $\calI=(I_1, \ldots, I_n)$, all the intervals have length $\len(I_i) \geq \frac{2}{m} + \frac{1}{2 \Lambda m^2}$. This establishes property (ii) and completes the proof of the lemma. 
\end{proof}

A useful implication of the length bound obtained in Lemma \ref{lemma:call-to-cake-eq} is that, in the rounded allocation $\calA$ (obtained in Line \ref{line:call-round}), the bundles are guaranteed to be nonempty. Formally, 
\begin{corollary}
\label{corollary:non-empty-bundles}
The allocation $\calA = (A_1, \ldots, A_n)$ (computed in Line \ref{line:call-round} via \textsc{CakeRounding}) consists of non-empty bundles: $A_i \neq \emptyset$ for all $i \in [n]$.  
\end{corollary}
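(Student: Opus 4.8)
The plan is to deduce nonemptiness of the bundles directly from the interval-length bound of Lemma~\ref{lemma:call-to-cake-eq}(ii), together with the structural fact---already recorded in the proof of \Cref{lemma:cake-rounding}---that \textsc{CakeRounding} always places the set $B_i \coloneqq \{k \in [m] : b_k(I_i) = 1\}$ of items fully contained in $I_i$ inside agent $i$'s bundle, i.e., $B_i \subseteq A_i$. Hence it suffices to show that $B_i \neq \emptyset$ for every $i \in [n]$, where $\calI = (I_1, \ldots, I_n)$ is the cake division handed to \textsc{CakeRounding} in Line~\ref{line:call-round}.

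First I would show that any interval $I_i = [x,y]$ with $\len(I_i) \geq \frac{2}{m} + \frac{1}{2 \Lambda m^2}$ necessarily contains one of the item sub-intervals $\left[\frac{k-1}{m}, \frac{k}{m}\right]$ entirely. For this, count the grid points $\frac{j}{m}$ with integer $0 \le j \le m$ lying in $[x,y]$: the number of integers $j$ with $mx \le j \le my$ is at least $\lfloor my \rfloor - \lceil mx \rceil + 1 > m(y-x) - 1 \geq \left(2 + \frac{1}{2 \Lambda m}\right) - 1 > 1$, so, being a nonnegative integer, it is at least $2$. Thus there are indices $j_1 < j_2$ with $\frac{j_1}{m}, \frac{j_2}{m} \in [x,y]$, and since $j_1 + 1 \le j_2$ the sub-interval of item $j_1 + 1$ satisfies $\left[\frac{j_1}{m}, \frac{j_1+1}{m}\right] \subseteq \left[\frac{j_1}{m}, \frac{j_2}{m}\right] \subseteq I_i$, i.e., $b_{j_1+1}(I_i) = 1$ and hence $j_1 + 1 \in B_i$.

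Combining this with Lemma~\ref{lemma:call-to-cake-eq}(ii), which guarantees $\len(I_i) \geq \frac{2}{m} + \frac{1}{2 \Lambda m^2}$ for every $i$, I would conclude $B_i \neq \emptyset$, whence $A_i \supseteq B_i \neq \emptyset$; since the reindexing performed at the start of \textsc{CakeRounding} is a bijection on the agents, all bundles of the returned allocation $\calA$ are then nonempty. I do not expect a genuine obstacle here; the only point requiring care is the arithmetic verifying that the slack $\frac{1}{2 \Lambda m^2}$ in the length bound is exactly what pushes the grid-point count strictly above $1$, and therefore to $2$ --- which is precisely why Lemma~\ref{lemma:call-to-cake-eq}(ii) was stated with that particular slack.
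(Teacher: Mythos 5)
Your proposal is correct and follows essentially the same route as the paper: invoke Lemma~\ref{lemma:call-to-cake-eq}(ii) to get the length lower bound, conclude that each $I_i$ fully contains some item sub-interval so that $B_i \neq \emptyset$, and finish with $B_i \subseteq A_i$. The only difference is that you spell out the grid-point counting argument that the paper leaves implicit (and in fact $\len(I_i) > 2/m$ alone already suffices for that count, so the extra slack $\frac{1}{2\Lambda m^2}$ is not what makes this step work).
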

\begin{proof}
In Line \ref{line:call-round}, the algorithm executes \textsc{CakeRounding} with the given discrete fair division instance $\calD$, the associated cake division instance $\calC$, and the cake division $\calI=(I_1, \ldots, I_n)$. For this division and all $i \in [n]$, we have $\len(I_i) \geq \frac{2}{m} + \frac{1}{2 \Lambda m^2} > \frac{2}{m}$ (Lemma \ref{lemma:call-to-cake-eq}). This length bound ensures that, for each $I_i$, there necessarily exists an index $k \in [m]$ such that the $k$th component of linear function $b$ (as defined in Section \ref{section:cake-reduction}) satisfies $b_k(I_i) = 1$. Hence, in Line \ref{line:define-B-i} of \textsc{CakeRounding} (\Cref{algo:cake-rounding}), each defined set $B_i \neq \emptyset$. Further, by construction (see Line \ref{line:table} of \textsc{CakeRounding}), we always have $B_i \subseteq A_i$. Therefore, the returned bundles $A_i$s are nonempty. This completes the proof of the corollary.  
\end{proof}
The last lemma in the current chain shows that the computed cake division, $\calI$, upholds approximate equitability even with respected to the cake valuations, $f_i$s, obtained directly from the given valuations, $v_i$s. 
\begin{lemma}
\label{lemma:apx-eq-wrt-f}
The computed cake division $\calI=(I_1, \ldots, I_n)$ satisfies $|f_i(I_i) - f_j(I_j)| \leq 2 \Lambda + 1$, for all $i, j \in [n]$.
\end{lemma}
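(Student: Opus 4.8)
The plan is to transfer the approximate-equitability guarantee from the padded valuations $\widehat{f}_i$ back to the original cake valuations $f_i$, paying for the removal of the padding $\beta$. By construction (Line~\ref{line:define-hat-f}) we have $\widehat{f}_i(I_i) = f_i(I_i) + \beta(\len(I_i))$ for every $i$, so for any pair $i,j \in [n]$,
\[
f_i(I_i) - f_j(I_j) = \bigl(\widehat{f}_i(I_i) - \widehat{f}_j(I_j)\bigr) + \bigl(\beta(\len(I_j)) - \beta(\len(I_i))\bigr),
\]
and it suffices to bound the two bracketed terms.

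For the first term, Lemma~\ref{lemma:call-to-cake-eq}(i) gives $|\widehat{f}_i(I_i) - \widehat{f}_j(I_j)| \le \frac{1}{8 m \Lambda}$. For the second term, I will use that $\beta(x) = \min\{m\Lambda x,\ 4\Lambda + 1\}$ is non-decreasing and bounded above by $4\Lambda+1$, so $\beta(\len(I_j)) \le 4\Lambda + 1$. For a lower bound on $\beta(\len(I_i))$, Lemma~\ref{lemma:call-to-cake-eq}(ii) states $\len(I_i) \ge \frac{2}{m} + \frac{1}{2\Lambda m^2}$, whence, by monotonicity, $\beta(\len(I_i)) \ge \beta\bigl(\frac{2}{m} + \frac{1}{2\Lambda m^2}\bigr)$; a direct evaluation (the linear branch of the $\min$ is active since $\Lambda \ge 1$) shows $\beta\bigl(\frac{2}{m} + \frac{1}{2\Lambda m^2}\bigr) = 2\Lambda + \frac{1}{2m}$. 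Hence $\beta(\len(I_j)) - \beta(\len(I_i)) \le (4\Lambda + 1) - \bigl(2\Lambda + \frac{1}{2m}\bigr) = 2\Lambda + 1 - \frac{1}{2m}$.

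Combining the two bounds gives $f_i(I_i) - f_j(I_j) \le \frac{1}{8m\Lambda} + 2\Lambda + 1 - \frac{1}{2m} \le 2\Lambda + 1$, where the last step uses $\Lambda \ge 1$ so that $\frac{1}{8m\Lambda} \le \frac{1}{2m}$. Exchanging the roles of $i$ and $j$ yields the matching lower bound, so $|f_i(I_i) - f_j(I_j)| \le 2\Lambda + 1$ for all $i,j \in [n]$, as claimed. The argument is essentially bookkeeping; the one conceptual point — and the reason the guarantee degrades to $2\Lambda+1$ rather than staying $O(1/(m\Lambda))$ — is that the near-equitability we actually have is with respect to $\widehat{f}_i$, and stripping off the padding $\beta$ can shift each agent's value by as much as the full range of $\beta$ on the relevant intervals, namely roughly $2\Lambda+1$. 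This $2\Lambda+1$ is precisely what propagates, via Lemma~\ref{lemma:cake-rounding-2}, into the $5\Lambda+1$ bound of Theorem~\ref{theorem:EQ3ApxComputation}.
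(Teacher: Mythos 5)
Your proof is correct and follows essentially the same route as the paper's: both decompose $f_i(I_i)-f_j(I_j)$ into the $\widehat f$-inequity (bounded by $\frac{1}{8m\Lambda}$ via Lemma~\ref{lemma:call-to-cake-eq}(i)) plus the padding gap, bounded using $\beta\le 4\Lambda+1$ on one side and $\beta(\len(I_i))\ge\beta\bigl(\tfrac{2}{m}+\tfrac{1}{2\Lambda m^2}\bigr)=2\Lambda+\tfrac{1}{2m}$ on the other via Lemma~\ref{lemma:call-to-cake-eq}(ii), with $\Lambda\ge 1$ absorbing the small terms. The paper merely writes this as a single chain of inequalities rather than two separately bounded terms.
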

\begin{proof}
Fix any pair of agents $i, j \in [n]$. By definition of the cake valuation $\widehat{f}_i$, we have 

\begingroup %%%%%%%%%%%%%%%%%
\allowdisplaybreaks
\begin{align}
f_i(I_i) & = \widehat{f}_i(I_i) - \beta( \len(I_i)) \nonumber \\
& \geq \widehat{f}_i(I_i) - (4 \Lambda + 1) \tag{by definition, $\beta(\cdot)$ is at most $4\Lambda + 1$} \\
& \geq \widehat{f}_j(I_j) - \frac{1}{8m\Lambda} - (4 \Lambda + 1) \tag{Lemma \ref{lemma:call-to-cake-eq}} \\
& = f_j(I_j) + \beta(\len(I_j)) - \frac{1}{8m\Lambda}  - (4 \Lambda + 1)   \nonumber \\
& \geq f_j(I_j) + \beta\left( \frac{2}{m} + \frac{1}{2 \Lambda m^2} \right)  - \frac{1}{8m\Lambda}  - (4 \Lambda + 1)  \tag{$\len(I_j) \geq \frac{2}{m} + \frac{1}{2 \Lambda m^2}$ - Lemma \ref{lemma:call-to-cake-eq}} \\
& = f_j(I_j) + 2 \Lambda + \frac{1}{2m} -  \frac{1}{8m\Lambda} - (4 \Lambda + 1) \nonumber \\
& \geq f_j(I_j) + 2 \Lambda - (4 \Lambda + 1) \tag{$\Lambda \geq 1$} \\
& = f_j(I_j) - (2 \Lambda + 1) \label{ineq:apx-ef-wrt-f}
\end{align}
\endgroup

Since inequality (\ref{ineq:apx-ef-wrt-f}) holds for all pairs $i, j \in [n]$, the lemma stands proved. 
\end{proof}
\subsubsection{Proof of Theorem \ref{theorem:EQ3ApxComputation}}
Lemma \ref{lemma:call-to-cake-eq} ensures that the algorithm successfully finds the contiguous cake division $\calI$ in Line \ref{line:call-cake-apx-eq}, and it does so in time polynomial in $n$, $m$, and $\Lambda$. In addition, \textsc{CakeRounding} (in Line \ref{line:call-round}) computes allocation $\calA$ in polynomial time. Hence, we get the stated time complexity for \Cref{algo:EQThree-identical}. Also, as desired, the bundles in the computed $\calA = (A_1, \ldots, A_n)$ are non-empty: $A_i \neq \emptyset$ (Corollary \ref{corollary:non-empty-bundles}). 

We next complete the proof by showing that $\calA$ is equitable within the stated additive margin. Towards this, note that the cake division $\calI=(I_1, \ldots, I_n)$ satisfies $f_i(I_i) \geq f_j(I_j) - (2 \Lambda + 1)$, for all $i, j \in [n]$ (Lemma \ref{lemma:apx-eq-wrt-f}). Hence, we can invoke Lemma \ref{lemma:cake-rounding-2} with $\alpha = 2 \Lambda + 1$ to obtain that, under $\calA = (A_1, \ldots, A_n)$ and for each $i, j \in [n]$, there exists subsets $A'_i, A'_j$ with the properties that $v_i(A'_i)  \geq v_j(A'_j) - (2 \Lambda + 1)$ and $| A_i \triangle A'_i| + |A_j \triangle A'_j| \leq 3$. In fact, the proof of the lemma gives us the following individual bounds: $| A_i \triangle A'_i| \leq 1$ and $|A_j \triangle A'_j| \leq 2$. Using these inequalities and the fact that $v_i$s have marginal values at most $\Lambda$, we obtain $v_i(A_i) \geq v_i(A'_i) - \Lambda$ along with $v_j(A'_j) \geq v_j(A_j) - 2 \Lambda$. We utilize these bounds to obtain
%\begin{align}
%v_i(A_i) & \geq v_i(A'_i) - \Lambda \geq  v_j(A'_j) - (2 \Lambda + 1) - \Lambda \geq v_j(A_j)  - 2 \Lambda - (3 %\Lambda + 1) = v_j(A_j) - (5 \Lambda + 1) \label{ineq:five-eq}
%\end{align}
\begin{align}
v_i(A_i) & \geq v_i(A'_i) - \Lambda \nonumber \\
& \geq  v_j(A'_j) - (2 \Lambda + 1) - \Lambda \nonumber \\
& \geq v_j(A_j)  - 2 \Lambda - (3 \Lambda + 1) \nonumber \\
& = v_j(A_j) - (5 \Lambda + 1) \label{ineq:five-eq}
\end{align}
Since inequality (\ref{ineq:five-eq}) holds for all pairs $i, j \in [n]$, we get that the returned allocation $\calA$ satisfies equitability within an additive factor of $5 \Lambda + 1$. The theorem stands proved.

\begin{remark}\label{remark:nonempty-condition}
In Theorem \ref{theorem:EQ3ApxComputation}, if we relinquish the the requirement that the bundles have to be nonempty, then the following improved bound can be obtained: $|v_i(A_i) - v_j(A_j)| \leq 3 \Lambda + 1$. In particular, we can forgo adding the function $\beta$ and invoke the cake rounding directly with $f$. Applying the analysis directly on $f$ (instead of going via $\widehat{f}$) yields the stated improvement, though at the cost of  possibly finding allocations in which the bundles are empty. 
\end{remark}

\section{Finding Approximately Equitable Divisions Under Identical $\sigma$-Subadditive Valuations}
\label{section:find-apx-eq-id-sub}

This section presents our algorithmic results for instances wherein the agents' valuations are $\sigma$-subadditive and identical. Note that $c \sigma$-subadditive cake valuations can assign negative values to parts of the cake and, hence, we cannot apply Theorem \ref{theorem:eq-cake-nn} here. Nonetheless, since equitability and envy-freeness are equivalent criteria with identical valuations, in the current context of identical valuations, the envy-freeness guarantee of Theorem \ref{theorem:burnt-cake-envy-free} translates to an equitability one. Hence, we have the existence of a contiguous equitable cake division $\calI = (I_1, \ldots, I_n)$ in the current context.

In addition, since the agents' valuations are identical, we can assume, without loss of generality, that the intervals $I_i$s are ordered left-to-right on the cake in order of their index. These observations imply the following corollary of Theorem \ref{theorem:burnt-cake-envy-free}. 
\begin{corollary}
\label{corollary:eq-subadd-identical}
Let $\mathcal{C} = \langle [n], f \rangle$ be a cake division instance wherein the agents have an identical valuation $f$ that is continuous, $c\sigma$-subadditive, and globally nonnegative. Then, $\mathcal{C}$ admits a contiguous equitable cake division $(I_1, \ldots, I_n)$ in which interval $I_i$ is to the left of $I_{i+1}$, for all $1 \leq i < n$. 
\end{corollary}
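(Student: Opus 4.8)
The plan is to obtain the corollary directly from Theorem~\ref{theorem:burnt-cake-envy-free}, using the elementary fact that when all agents share a single valuation, envy-freeness and equitability are the same condition. First I would apply Theorem~\ref{theorem:burnt-cake-envy-free} to the cake division instance with $f_1 = f_2 = \cdots = f_n = f$; the hypotheses of that theorem hold since $f$ is continuous, $c\sigma$-subadditive, and globally nonnegative by assumption. This yields a contiguous envy-free division $(I_1, \ldots, I_n)$ of $[0,1]$, i.e., $f(I_i) \geq f(I_j)$ for all $i, j \in [n]$. Swapping the roles of $i$ and $j$ gives the reverse inequality as well, so $f(I_i) = f(I_j)$ for every pair $i, j$; hence $(I_1, \ldots, I_n)$ is in fact an equitable division.

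Next I would address the ordering requirement. The intervals $I_1, \ldots, I_n$ partition $[0,1]$ and therefore possess a unique left-to-right order along the cake: let $\sigma$ be the permutation of $[n]$ for which $I_{\sigma(1)}$ is the leftmost interval, $I_{\sigma(2)}$ the next, and so on up to $I_{\sigma(n)}$, the rightmost. Relabel by setting $J_k \coloneqq I_{\sigma(k)}$ and assigning interval $J_k$ to agent $k$. This relabeling does not alter the underlying partition of $[0,1]$; it only changes which agent receives which piece. Consequently $(J_1, \ldots, J_n)$ is again a contiguous cake division with $J_k$ lying to the left of $J_{k+1}$ for each $1 \leq k < n$, and since $f(J_k) = f(I_{\sigma(k)})$ equals the common value established above for every $k$, the relabeled division is still equitable: $f(J_k) = f(J_\ell)$ for all $k, \ell \in [n]$. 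This division is the one claimed by the corollary.

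There is essentially no obstacle here; the entire content lies in the two observations that (i) identical valuations collapse the envy-freeness guarantee of Theorem~\ref{theorem:burnt-cake-envy-free} into an equitability guarantee, and (ii) because every interval of an equitable division under an identical valuation carries the same value, the agent-to-interval assignment may be permuted arbitrarily, and in particular reindexed to agree with the geometric left-to-right order on $[0,1]$, without disturbing equitability. The only point that warrants a line of care is that the reindexing step is purely a relabeling of agents, so contiguity of the division and the left-to-right ordering of the $J_k$'s follow immediately from the geometry of the interval $[0,1]$.
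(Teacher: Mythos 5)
Your proposal is correct and matches the paper's argument exactly: the paper likewise obtains the division from Theorem~\ref{theorem:burnt-cake-envy-free}, notes that envy-freeness collapses to equitability under identical valuations, and reindexes the intervals left-to-right without loss of generality. No gaps.
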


Analogous to Theorem \ref{theorem:eq-cake-nn}, which held for nonnegative valuations, Corollary \ref{corollary:eq-subadd-identical} provides an existential and ordering property for identical $\sigma$-subadditive valuations. Hence, using a proof similar to the one of Theorem \ref{theorem:fptas-for-equitable-cake-div},\footnote{Recall that the the proof of Theorem \ref{theorem:fptas-for-equitable-cake-div} invokes Theorem \ref{theorem:eq-cake-nn}.} we obtain the following algorithmic result for equitable cake division under identical $\sigma$-subadditive valuations. 

\begin{theorem}\label{theorem:cake-alg-id-subadd}
Let $\calC = \langle [n], f \rangle$ be a cake division instance wherein the agents have an identical valuation $f$ that is $\gamma$-Lipschitz continuous, $c\sigma$-subadditive, and globally nonnegative. Then, given $\calC$ and parameter $\varepsilon \in \left(0, \frac{1}{4\gamma}\right)$ as input, we can compute---in time polynomial in $n$, $\gamma$, and $\frac{1}{\varepsilon}$---a contiguous cake division $\mathcal{I} = (I_1, \ldots, I_n)$ that satisfies $|f(I_i) - f(I_j)| \leq  \varepsilon$ for all $i,j \in [n]$. 
\end{theorem}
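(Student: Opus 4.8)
The plan is to run \textsc{CakeApxEQ} (\Cref{algo:EQ-apx}) on the instance $\calC = \langle [n], f \rangle$ — viewing the common valuation as $f_1 = \dots = f_n = f$ — with the given parameter $\varepsilon$, and then to re-run the argument of \Cref{theorem:fptas-for-equitable-cake-div} almost verbatim, with \Cref{corollary:eq-subadd-identical} playing the role that \Cref{theorem:eq-cake-nn} played there. The first thing I would point out is that \textsc{CakeApxEQ} itself never uses nonnegativity of the valuations: it only evaluates $f$ on grid intervals to build the families $\calF^\tau_i$ and then calls \textsc{IntervalSelect}. Moreover, since the agents share one valuation, all these families coincide with the single family $\calF^\tau \coloneqq \{ [p,q] \in \calG \mid f([p,q]) \in [\tau - \varepsilon/2, \tau + \varepsilon/2] \}$, which only simplifies matters.

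Next I would supply the structural fact the correctness argument needs. By \Cref{corollary:eq-subadd-identical}, $\calC$ admits a contiguous equitable division $\calI^* = (I^*_1, \dots, I^*_n)$ with $I^*_i = [e^*_{i-1}, e^*_i]$ lying to the left of $I^*_{i+1}$; write $\tau^* \coloneqq f(I^*_i)$ for the common value. I would then check $0 \le \tau^* \le \gamma$. The upper bound $\tau^* \le \gamma$ follows from $\gamma$-Lipschitz continuity and $f(\emptyset) = 0$, by comparing $I^*_i$ with the degenerate interval $[1/2,1/2]$ exactly as in the proof of \Cref{theorem:fptas-for-equitable-cake-div}. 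The lower bound follows from $c\sigma$-subadditivity and global nonnegativity: $n\tau^* = \sum_{i=1}^n f(I^*_i) \ge f([0,1]) \ge 0$, so $\tau^* \ge 0$. Hence the for-loop of \Cref{algo:EQ-apx} reaches some threshold $\tau \in \{ 0, \varepsilon/2, \varepsilon, \dots, \gamma \}$ with $|\tau - \tau^*| \le \varepsilon/4$.

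For that $\tau$ I would mimic the rounding step of \Cref{theorem:fptas-for-equitable-cake-div}: round each endpoint $e^*_j$ down to the nearest grid point $e_j \in G$ (with $e_0 = 0$, $e_n = 1$) and set $I_i \coloneqq [e_{i-1}, e_i]$. These intervals are pairwise disjoint, partition $[0,1]$, are ordered left to right, and lie in $\calG$; and $\gamma$-Lipschitz continuity together with the grid granularity $\varepsilon/(8\gamma)$ gives $|f(I_i) - f(I^*_i)| \le \varepsilon/4$, so $f(I_i) \in [\tau - \varepsilon/2, \tau + \varepsilon/2]$, i.e. $I_i \in \calF^\tau$. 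Thus the hypotheses of \Cref{theorem:partition-dp} hold for $\calF^\tau_1, \dots, \calF^\tau_n$, so \textsc{IntervalSelect} succeeds and \Cref{algo:EQ-apx} returns a contiguous division $(I_1, \dots, I_n)$ with every $I_i$ in $\calF^\tau$; hence $|f(I_i) - f(I_j)| \le \varepsilon$ for all $i,j$. The running-time count is identical to that in \Cref{theorem:fptas-for-equitable-cake-div}: $O(\gamma/\varepsilon)$ iterations, each building a family of size $O((\gamma/\varepsilon)^2)$ and invoking the polynomial-time \textsc{IntervalSelect}, so the total is polynomial in $n$, $\gamma$, and $1/\varepsilon$.

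I do not expect a serious obstacle: the only real work is verifying that the two places where the proof of \Cref{theorem:fptas-for-equitable-cake-div} appeals to nonnegativity are both recoverable here — namely, the existence of an \emph{ordered} equitable cake division (now supplied by \Cref{corollary:eq-subadd-identical}) and the lower bound $\tau^* \ge 0$ that ensures the threshold grid $\{0, \varepsilon/2, \dots, \gamma\}$ brackets $\tau^*$ (now supplied by $c\sigma$-subadditivity together with $f([0,1]) \ge 0$). Once these are in place, the grid rounding, the Lipschitz estimate, the call to \textsc{IntervalSelect}, and the complexity bound all transfer without change.
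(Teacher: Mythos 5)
Your proposal is correct and follows essentially the same route as the paper, which simply states that the proof of \Cref{theorem:fptas-for-equitable-cake-div} carries over with \Cref{corollary:eq-subadd-identical} supplying the ordered equitable division in place of \Cref{theorem:eq-cake-nn}. Your explicit verification that $\tau^* \geq 0$ (via $n\tau^* = \sum_i f(I^*_i) \geq f([0,1]) \geq 0$ from $c\sigma$-subadditivity and global nonnegativity), so that the threshold grid $\{0, \varepsilon/2, \ldots, \gamma\}$ still brackets $\tau^*$, is a detail the paper leaves implicit and is exactly the right point to check.
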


The cake division result in \Cref{theorem:cake-alg-id-subadd} can be bootstrapped to obtain an algorithmic result for finding an approximately equitable allocation of indivisible items under identical $\sigma$-subadditive valuations. Formally, 

\begin{theorem}\label{theorem:EQ3ApxComputation-subadditive}
Let $\calD = \langle [n], [m], v \rangle$ be a discrete fair division instance wherein the agents have an identical valuation $v$ that is $\sigma$-subadditive, normalized, and has marginal values at most $\Lambda$. Then, given $\calD$ as input, we can compute---in time polynomial in $n$, $m$ and $\Lambda$---an allocation $\alloc = (A_1, \ldots, A_n)$ with nonempty bundles ($A_i \neq \emptyset$) that satisfy $|v(A_i) - v(A_j)| \leq 5 \Lambda + 1$, for all $i, j \in [n]$.
\end{theorem}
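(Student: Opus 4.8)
The plan is to follow the proof of Theorem~\ref{theorem:EQ3ApxComputation} almost verbatim, making three substitutions. First, I would run \textsc{CakeConstruction} on $\calD$ to obtain the (identical) cake valuation $f = V\circ b$, where $V$ is the multilinear extension of $v$. As in the proof of Theorem~\ref{theorem:ef-three-subadditive}, Theorem~\ref{lemma:subadditive-mle} together with the additivity of $b$ makes $f$ $c\sigma$-subadditive; global nonnegativity of $v$ gives $f([0,1])=v([m])\ge 0$; and the Lipschitz computation inside Lemma~\ref{lemma:Lip-f} (which uses only the marginal bound $\Lambda$, not nonnegativity) shows $f$ is continuous and $(m\Lambda)$-Lipschitz. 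Second, I would pad as in Algorithm~\ref{algo:EQThree-identical}: set $\widehat f(I)\coloneqq f(I)+\beta(\len(I))$ with $\beta(x)\coloneqq\min\{m\Lambda x,\,4\Lambda+1\}$. One must check $\widehat f$ still satisfies the hypotheses of Theorem~\ref{theorem:cake-alg-id-subadd}: since $\beta$ is concave with $\beta(0)=0$ it is subadditive, so $\beta(\len(\cdot))$ is a subadditive set function and $\widehat f$ is $c\sigma$-subadditive (a sum of the $c\sigma$-subadditive $f$ and a subadditive term); $\widehat f$ is $(2m\Lambda)$-Lipschitz and continuous; and $\widehat f([0,1])=v([m])+\min\{m\Lambda,4\Lambda+1\}>0$ for $m\ge 5$ (the case $m<5$, hence $n<5$, is settled directly with bundles of size at most five, as in the footnote of Lemma~\ref{lemma:call-to-cake-eq}). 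Third, I would invoke Theorem~\ref{theorem:cake-alg-id-subadd} on $\langle[n],\widehat f\rangle$ with $\varepsilon$ a small constant multiple of $\tfrac1{m\Lambda}$, obtaining in time polynomial in $n,m,\Lambda$ a contiguous division $\calI=(I_1,\dots,I_n)$ that is $\varepsilon$-approximately equitable with respect to $\widehat f$, and then round it via \textsc{CakeRounding} against the cake instance $\langle[n],f\rangle$ built from $v$.

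Given this pipeline, the two things to verify about $\calI$ are (i) all intervals are long, say $\len(I_i)>2/m$ — which forces $B_i\ne\emptyset$ and hence nonempty bundles, exactly as in Corollary~\ref{corollary:non-empty-bundles} — and (ii) $\calI$ is approximately equitable with respect to $f$, $|f(I_i)-f(I_j)|\le 2\Lambda+1$. Claim (ii) follows from (i) as in Lemma~\ref{lemma:apx-eq-wrt-f}: for $\len(I_i)>2/m$ one has $\beta(\len(I_i))\in[\,2\Lambda,\,4\Lambda+1\,]$, so $f(I_i)-f(I_j)=(\widehat f(I_i)-\widehat f(I_j))-(\beta(\len(I_i))-\beta(\len(I_j)))\le\varepsilon+(4\Lambda+1)-2\Lambda\le 2\Lambda+1$ (with the length bound slightly above $2/m$, exactly as in Lemma~\ref{lemma:call-to-cake-eq}, to absorb the $\varepsilon$). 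With (ii) in hand, \textsc{CakeRounding} together with Lemma~\ref{lemma:cake-rounding-2} (applied with $\alpha=2\Lambda+1$) yields, for each pair $i,j$, sets $A_i',A_j'$ with $v(A_i')\ge v(A_j')-(2\Lambda+1)$, $|A_i\triangle A_i'|\le 1$, and $|A_j\triangle A_j'|\le 2$; the marginal bound $\Lambda$ then gives $v(A_i)\ge v(A_j)-(5\Lambda+1)$, which is the desired guarantee. The running time is polynomial in $n,m,\Lambda$ since every subroutine is.

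The main obstacle is claim (i). In Theorem~\ref{theorem:EQ3ApxComputation} the analogous Lemma~\ref{lemma:call-to-cake-eq}(ii) used that $f$ is pointwise nonnegative, which is no longer true — parts of the cake can be ``burnt''. The substitute I would use is that $c\sigma$-subadditivity together with global nonnegativity of $f$ gives only the global statement $\sum_i f(I_i)\ge f([0,1])\ge 0$, which I would combine with the two-sided Lipschitz bound $|f(I)|\le m\Lambda\len(I)$ and the approximate equitability of $\calI$. The shape of the argument: assuming $n\le m/5$, if some interval were short ($\len(I_s)\le 2/m$) then $\widehat f(I_s)=f(I_s)+m\Lambda\len(I_s)\le 2m\Lambda\len(I_s)\le 4\Lambda$, so approximate equitability forces $\widehat f(I_i)\le 4\Lambda+\varepsilon$ for every $i$; summing $\widehat f(I_i)=f(I_i)+\beta(\len(I_i))$ and using $\sum_i f(I_i)\ge 0$ leaves $\sum_i\widehat f(I_i)\ge\sum_i\beta(\len(I_i))$, and a careful lower bound on $\sum_i\beta(\len(I_i))$ — separating the intervals on which $\beta$ is capped from those on which it is not, and noting that the latter contribute their full $m\Lambda\len$ — should make $\sum_i\widehat f(I_i)$ too large to be at most $n(4\Lambda+\varepsilon)$ when $n\le m/5$, a contradiction; the complementary case $n>m/5$ is handled directly with bundles of size at most five. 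I expect the delicate point to be pushing this counting through cleanly for every configuration of short versus long intervals; it may be cleaner to work with the auxiliary valuation $g(I)\coloneqq f(I)+m\Lambda\len(I)$, which is nonnegative, monotone, and $c\sigma$-subadditive, so that $\sum_i g(I_i)\ge g([0,1])=v([m])+m\Lambda\ge m\Lambda$ is available at once — at the cost of then having to re-examine how approximate $\widehat f$-equitability interacts with $g$ in the transfer step (ii).
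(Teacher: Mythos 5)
Your pipeline is the same as the paper's: construct $f=V\circ b$ via \textsc{CakeConstruction}, pad with $\beta$, solve the padded cake instance via \Cref{theorem:cake-alg-id-subadd}, and round with \Cref{lemma:cake-rounding-2}. Your verifications that $f$ and $\widehat f$ are $c\sigma$-subadditive, globally nonnegative, and $(m\Lambda)$- resp.\ $(2m\Lambda)$-Lipschitz are correct (and more careful than the paper, whose proof of this theorem is a two-line deferral to \Cref{theorem:EQ3ApxComputation}). You have also put your finger on exactly the step that does not transfer: part (ii) of \Cref{lemma:call-to-cake-eq} lower-bounds $\widehat f_b(I_b)\ge\beta(\len(I_b))$ using pointwise nonnegativity of $f_b$, which is unavailable once $f$ can be negative on sub-intervals.

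The gap is that your substitute counting argument does not close. The inequalities $\sum_i f(I_i)\ge 0$ and $\sum_i\widehat f(I_i)\ge\sum_i\beta(\len(I_i))$ are too weak: in the configuration with one short interval $I_s$ and $n-1$ intervals long enough that $\beta$ is capped, one only gets $\sum_i\beta(\len(I_i))\approx(n-1)(4\Lambda+1)+m\Lambda\,\len(I_s)$, whereas the contradiction needs this sum to exceed roughly $4n\Lambda$; the surplus is about $n-2\Lambda-1$, which is negative whenever $n\le 2\Lambda+1$. Nor is this merely a slack in the estimates. Take $\Lambda=1$, $n=3$, $m\ge 10$, and a division with $\len(I_1)=2/m$, $f(I_1)=+2$, and $f(I_2)=f(I_3)=-1$ on the two long pieces (consistent with $f([0,1])=0$, $c\sigma$-subadditivity, and the $(m\Lambda)$-Lipschitz bound of \Cref{lemma:Lip-f}): then $\widehat f(I_1)=\widehat f(I_2)=\widehat f(I_3)=4$, so this division is \emph{exactly} $\widehat f$-equitable, and perturbing $\len(I_1)$ slightly below $2/m$ while spending the $\varepsilon$ slack keeps it approximately $\widehat f$-equitable. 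Nothing in the proof prevents \textsc{CakeApxEQ} from returning such a division, in which case $B_1$ may be empty, the rounded $A_1$ may be empty, and the transfer step (ii) (which uses $\beta(\len(I_j))\ge 2\Lambda$) also breaks. So claim (i) needs a genuinely new argument or a modified padding, not a re-derivation of \Cref{lemma:call-to-cake-eq}(ii); and your fallback via $g(I)=f(I)+m\Lambda\,\len(I)$ hits the obstacle you yourself anticipate, namely that $g$-equitability does not yield $f$-equitability without first controlling the interval lengths.
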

\begin{proof}
The proof here follows along the same lines as in \Cref{theorem:EQ3ApxComputation}. In particular, we can utilize an algorithm similar to \Cref{algo:EQThree-identical}. Some of the technical details that differ are mentioned next. First, we note that for the given $\sigma$-subadditive $v$ the constructed cake valuation $f$ will be $c \sigma$-subadditive and globally nonnegative. Also, as was obtained in Section \ref{section:eq-three-nonnegative}, $f$ is $(m\Lambda)$-Lipschitz continuous. Therefore, we can bootstrap Theorem \ref{theorem:cake-alg-id-subadd} to obtain the desired result for approximate equitability in the indivisible-items setting as well. 
\end{proof}

\bibliographystyle{alpha}
\bibliography{references}

\end{document}